\def\eq{\triangleq}
\def\ic{ \stackrel{ {\tiny \text{IC}}  }{\Longleftrightarrow}}
\def\ir{\succcurlyeq 0}
\def\Stheta{\Theta}
\def\Sbeta{\mathcal{B}}
\def\plan{\mathcal{T}}
\def\dcap{Q}
\def\mechanism{\kappa}
\def\pcap{\Pi}
\def\adfee{\pi}
\def\dmean{\bar{d}}
\def\dmax{D}
\def\val{\theta}
\def\cut{\beta}
\def\contract{\phi}
\def\Contrset{\Phi}
\def\profit{W}
\def\revenue{R}
\def\cost{C}
\def\payoff{\bar{S}}
\def\slope{\sigma}
\def\type{\Lambda}
\newtheorem{claim}{\textbf{Claim}}
\newtheorem{problem}{\textbf{Problem}}
\newtheorem{question}{\textbf{Question}}
\newtheorem{lemma}{\textbf{Lemma}}
\newtheorem{proposition}{\textbf{Proposition}}
\newtheorem{definition}{\textbf{Definition}}
\newtheorem{theorem}{\textbf{Theorem}}
\begin{document}

\title{Multi-Cap Optimization for Wireless Data Plans with Time Flexibility}

\author{Zhiyuan~Wang,
        Lin~Gao,~\IEEEmembership{{Senior Member,~IEEE,}}
        and~Jianwei~Huang,~\IEEEmembership{Fellow,~IEEE}
	\IEEEcompsocitemizethanks{
		\IEEEcompsocthanksitem Part of the results appeared in ACM MobiHoc 2018 \cite{Zhiyuan2018MobiHoc} and ACM NetEcon 2017 \cite{Zhiyuan2017netecon}.
		\IEEEcompsocthanksitem Zhiyuan Wang is with Department of Information Engineering, The Chinese University of Hong Kong, Shatin, N.T., Hong Kong, China. 
		E-mail:	wz016@ie.cuhk.edu.hk
		\IEEEcompsocthanksitem Lin Gao is with the School of Electronic and Information Engineering, Harbin Institute of Technology, Shenzhen, China. 
		E-mail: gaol@hit.edu.cn
		\IEEEcompsocthanksitem Jianwei Huang is with the School of Science and Engineering, The Chinese University of Hong Kong, Shenzhen, China, and Department of Information Engineering, The Chinese University of Hong Kong, Hong Kong, China.
		E-mail:	jwhuang@ie.cuhk.edu.hk
	}
	
}


\IEEEtitleabstractindextext{
\begin{abstract}
An effective way for a Mobile network operator (MNO) to improve its revenue is \textit{price discrimination}, i.e., providing different combinations of  data caps and subscription fees.
Rollover data plan (allowing the unused data in the current month to be used in the next month) is an innovative data mechanism with \textit{time flexibility}.
In this paper, we study the MNO's optimal multi-cap data plans with time flexibility in a realistic asymmetric information scenario.
Specifically, users are associated with multi-dimensional private information, and the MNO designs a contract (with different data caps and subscription fees) to induce users to truthfully reveal their private information.
This problem is quite challenging due to the multi-dimensional private information.
We address the challenge in two aspects. 
First, we find that a feasible contract (satisfying incentive compatibility and individual rationality) should allocate the data caps according to users' willingness-to-pay (captured by the slopes of users' indifference curves).
Second, for the non-convex data cap allocation problem, we propose a Dynamic Quota Allocation Algorithm, which has a low complexity and guarantees the global optimality.
Numerical results show that the time-flexible data mechanisms increase both the MNO's profit (25\% on average) and users' payoffs (8.2\% on average) under price discrimination.
\end{abstract}

\begin{IEEEkeywords}
Price discrimination, time flexibility, rollover data plan, multi-dimensional contract.
\end{IEEEkeywords}
}

\maketitle

\IEEEdisplaynontitleabstractindextext

\IEEEpeerreviewmaketitle

%
\IEEEpeerreviewmaketitle

\section{Introduction}  
\subsection{Background and Motivation}  
\IEEEPARstart{M}{obile} Network Operators (MNOs) profit from the wireless data services through carefully designing their wireless data plans.
The pricing strategy involved in the wireless data plans has evolved  from the flat-rate scheme to the usage-based scheme in the past years \cite{sen2013survey}.
Now the most widely used data plan consists of a monthly data cap, a monthly one-time subscription fee, and a linear price for any unit of additional data consumption beyond the data cap.
Based on this pricing strategy, MNOs usually offer multiple data caps together with different monthly subscription fees for users to choose from.
For example, in the US market, AT\&T charges \$20 for 300MB, \$45 for 1GB, \$55 for 2GB, and \$70 for 4GB; and the linear price for exceeding the data cap is \$15/GB \cite{ATT}.

The purpose of MNO's multi-cap offering is to capture more user surplus by differentiating users based on their preferences, also called  \textbf{price discrimination} in economics \cite{huang2013wireless}.
To make such a price discrimination scheme  work, the MNO must be able to identify the market segments by users' preferences that are usually  users' private information, and the MNO needs to enforce the scheme through some incentive mechanism.
For example, the MNO may want to offer a larger monthly data cap with a larger monthly subscription fee to  businessmen, who have a stronger ability to pay and a relatively inelastic data demand comparing with other consumers (such as students).
However, it is a very challenging problem to induce users to truthfully reveal their private preferences in practice, especially when users have multi-dimensional private preferences.  
This motivates us to ask the first key question in this paper.
\begin{question}
	How should the MNO optimize the multi-cap data plan offering?
\end{question}

Recently the growing market competition forces the MNOs to explore various innovations on their mobile data plans.
For example, the rollover data plan enables users to enjoy the \textbf{time flexibility} over their data consumptions, by allowing the unused data from the previous month to be used in the current month.  
Such a rollover mechanism is attractive to users, as a user's data demand is often stochastic and  the rollover mechanism helps users balance the possible \textit{data waste} within the data cap and the possible \textit{overage usage} when consuming beyond the data cap. 

Although based on the same rollover principle, different  rollover data plans are different in terms of the consumption priority between the rollover data and the monthly data cap.
For example, the rollover data plan offered by AT\&T requires that the rollover data from the previous month should be consumed after the current monthly data cap \cite{ATTrollover}, while China Mobile requires the other way around \cite{CMrollover}.
In our previous work \cite{Zhiyuan2018TMC,ZhiyuanCompetitionTMC}, we analyzed the MNO's optimal data plan with time flexibility under the single-cap scheme (without price discrimination) and found that the time flexibility can increase both the MNO's profit and users' payoff, hence improve the social welfare.
This motivates us to ask the second key question in this paper.
\begin{question}
	What is the impact of time flexibility under the multi-cap scheme?
\end{question}

In this paper, we will study the MNO's price discrimination through the multi-cap data plans, taking into account the time-flexible data mechanisms.

\subsection{Solutions and Contributions} 
We study how the MNO optimizes its multi-cap data plans under different data mechanisms with time flexibility.
Specifically, we consider an asymmetric information scenario, where the users' preferences for the wireless data plans are private and multi-dimensional.
We formulate this problem as a multi-dimensional contract design.
More specifically, the MNO needs to design a contract (with different combinations of data caps and the corresponding subscription fees) for users of different types,  so that each user will truthfully reveal his type (i.e., private preferences) by selecting a contract item intended for his type.

The key results and contributions of this paper are summarized  as follows: 
\begin{itemize}  
	\item \textit{Systematic Study on MNO's Price Discrimination:}
	To the best of our knowledge, this is the first work studying the MNO's price discrimination through optimizing the multi-cap wireless data plans. 
	We take into account both the time flexibility (of the rollover data mechanisms) and the realistic asymmetric information.
	
	\item \textit{Exploring Time Flexibility in Price Discrimination:}
	We investigate three different data mechanisms (i.e., one traditional data mechanism and two rollover data mechanisms) and analyze the MNO's multi-cap data plan optimization under the three data mechanisms in a common design framework.
	\item \textit{Solving the Optimal Contract:} 
	The MNO's contract problem involves user's multi-dimensional private information, hence is challenging to solve.
	We exploit the separable structure (between users' types and quota allocation) of our problem  and develop a tractable approach to solve the MNO's contract problem.
	First, we find that the slope of a user's indifference curve on the contract plane corresponds to his willingness-to-pay, and a feasible contract (satisfying the incentive compatibility and individual rationality conditions) should allocate the data caps according to users' willingness-to-pay.
	This enables us to obtain the optimal prices for a particular data cap allocation in closed-form.
	Second, for the non-convex data cap allocation problem, we propose a Dynamic Quota Allocation Algorithm, which guarantees the global optimality with a low computational complexity.
	\item \textit{Performance Evaluation based on Empirical Data:} 
	We evaluate the optimal contract under different data mechanisms based on the empirical data.
	The numerical results show that the time-flexible data mechanisms increase both the MNO's profit (25\% on average) and users' payoffs (8.2\% on average) under the multi-cap price discrimination, hence improves the social welfare.
\end{itemize}

The remainder of this paper is organized as follows. 
In Section \ref{Section: Literature Review}, we review the related works. 
Section \ref{Section: System Model} introduces the system model. 
Section \ref{Section: Contract Feasibility} analyzes the contract feasibility and Section \ref{Section: Contract Optimality} studies the contract optimality.
In Section \ref{Section: Numerical Results}, we present the numerical results. 
Finally, we conclude this paper in Section \ref{Section: Conclusion}.

\section{Literature Review\label{Section: Literature Review}} 
There have been many excellent studies on the wireless data plan optimizations (e.g., \cite{wang2017role,zheng2018optimizing,xiong2017economic,wang2016user}). 
However, they did not take into account the recently introduced rollover mechanism or the ubiquitous multi-cap scheme.

The rollover mechanisms have been studied in \cite{zheng2016understanding,wei2018novel,Zhiyuan2018TMC,ZhiyuanCompetitionTMC,Zhiyuan2019Infocom}.
Zheng \textit{et al.} in \cite{zheng2016understanding} found that moderately price-sensitive users can benefit from subscribing to the rollover data plan compared with the traditional data plan. 
Wei \textit{et al.} in \cite{wei2018novel} studied the rollover period length from a profit-maximizing MNO's perspective. 
In our previous works, we studied the optimization of the time-flexible data plans in \cite{Zhiyuan2018TMC} and investigated the impact of the market competition in \cite{ZhiyuanCompetitionTMC} and the trading market in \cite{Zhiyuan2019Infocom}.
However, all of these studies were based on the single-cap scheme without considering the ubiquitous multi-cap adoption.

The MNO's multi-cap offering was seldom studied in previous literature.
Dai  \textit{et al.} in \cite{dai2015effect} considered a case where the MNO offers two different data caps, i.e., a cap of basic rate and a cap of premium rate.
However, the analysis was difficult to be generalized to more than two data caps.
Therefore, there is no existing systematic study on the MNO's optimal multi-cap design, let alone under the time-flexible data mechanisms.
A key challenge for this problem is that different users make their data cap choices  based on their individual preferences, which are often private information and can be  multi-dimensional.
Hence the MNO needs to properly design multiple data caps to differentiate users without knowing their exact  private information and maximize the MNO's profit. 
Such a problem naturally leads to a contract design problem \cite{bolton2005contract}.
\begin{table}
		\setlength{\abovecaptionskip}{3pt}
		\setlength{\belowcaptionskip}{0pt}
	\renewcommand{\arraystretch}{1.05}		
	\caption{Comparing Related Literature.} 
	\label{table: Literature}
	\centering
	\begin{tabular}{c c c c c c}
		\toprule
		\textbf{Literature} 			& \textbf{Rollover  Considered?	}			&  \textbf{Multi-Cap Considered?}	\\
		\midrule
		{[10]-[13]}						&  No			& No										\\
		{[8][9][14]-[16]}						&  Yes			& No										\\
		{\cite{dai2015effect}}						& No			& Yes (but limited)										\\
		{This Paper}					& Yes			& Yes										\\
		\bottomrule
	\end{tabular}  
\end{table}

Users' multi-dimensional private information leads to a multi-dimensional contract design problem.
Such a problem is often very challenging, since the multi-dimensional private information makes it difficult to achieve the global incentive compatibility \cite{deneckere2011multi}. 
To address this problem, McAfee and McMillan in \cite{mcafee1988multidimensional} proposed the generalized single-crossing condition  to ensure the globally incentive compatibility for a contract problem with multi-dimensional private information, but such a strong condition is not satisfied in many models (including ours).
Rochet and Chone in \cite{rochet1998ironing} developed a \textit{sweeping} procedure which adjusts the solution to ensure the global incentive compatibility. 
Such an approach requires that the dimension of the type space and allocation space coincide (which is not applicable to the MNO's multi-cap data plans optimization), and cannot be solved analytically except in very special cases.
In this paper, we introduce users' willingness-to-pay by investigating their indifference curves, based on which we can develop a tractable approach for the MNO to provide the global incentive to all user types and solve its optimal contract under multi-dimensional private information.

\section{System Model\label{Section: System Model}} 
We formulate the MNO's multi-cap data plan design as a three-step process  as shown in Fig. \ref{fig: SystemModel}.
In Step I, the MNO collects data from the user market to estimate the \textit{statistical information} of users' individual preferences (i.e., a user's type), which are often \textit{private} and \textit{multi-dimensional} information (hence difficult to predict on a per user basis).
In Step II, the MNO chooses a data mechanism to provide the subscribers with time flexibility.
Then in Step III, the MNO proceeds with the multi-cap contract design to induce users truthfully revealing  their types and hence maximize the MNO's profit.
Generally speaking, the MNO should periodically (e.g., every year) repeat the three steps to capture users' varying requirements (due to, for example, technology changes).

Furthermore, the MNO should extract as many dimensions of the user type as possible to characterize users' private information precisely, which leads to a contract problem with multi-dimensional private information.
As mentioned as Section I, a multi-dimensional contract is challenging to solve.
In this paper, we exploit the separable structure (between the user's types and the data cap allocation), and propose to characterize each type of users' willingness-to-pay by investigating their indifference curves.
To provide a clear demonstration, we use a two-dimensional user type to illustrate our approach.\footnote{In reality, the MNO can further introduce more dimensions and solve the multi-dimensional contract using our method if the users' types and the data cap allocation exhibit a similar structure.}

Next we describe three data mechanisms in Section \ref{SubSection: Data Mechanisms}.
Then we introduce users' two-dimensional characteristics and derive users' payoffs under different data mechanism in Section \ref{SubSection: User Model}.
Finally, we formulate the MNO's optimal contract  problem in Section \ref{SubSection: MNO's Contract Formulation}.

\begin{figure}
	\setlength{\abovecaptionskip}{5pt}
	\setlength{\belowcaptionskip}{0pt}
	\centering
	\includegraphics[width=0.8\linewidth]{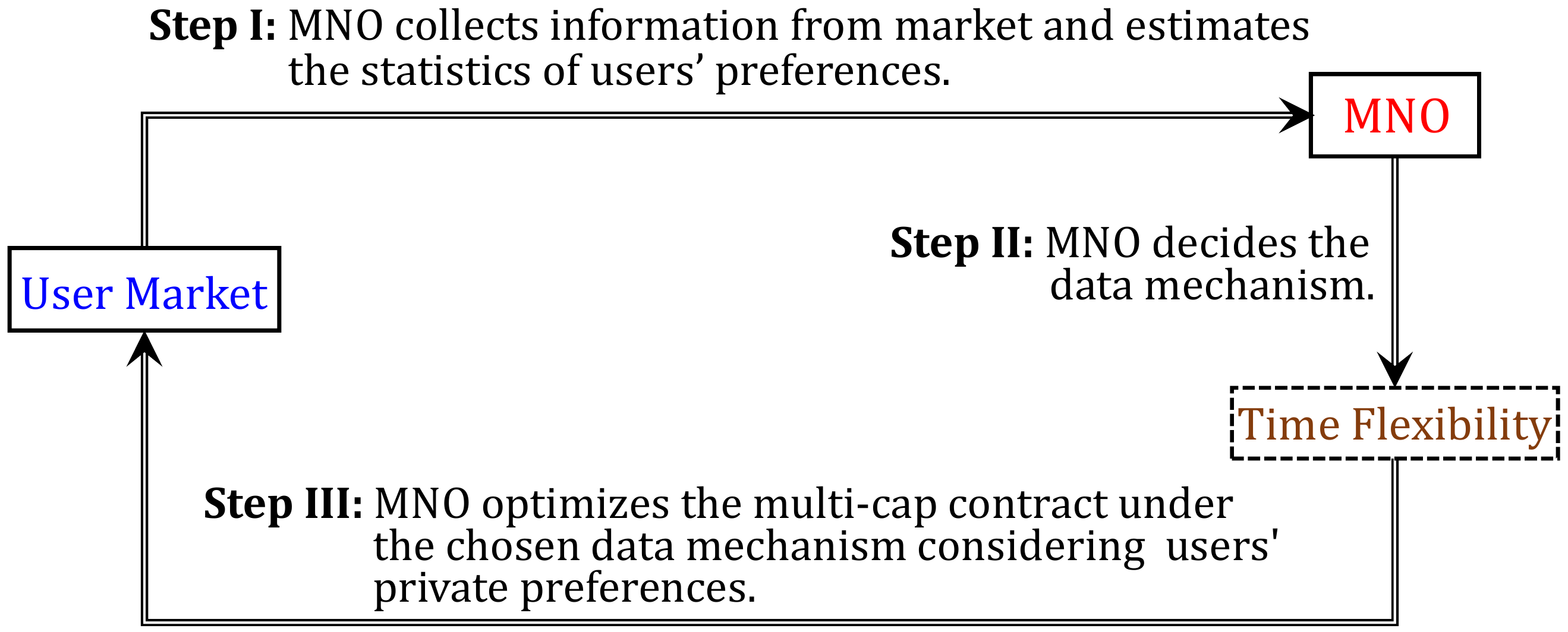}
	\caption{System model for the MNO's multi-cap design.\vspace{-5pt}}
	\label{fig: SystemModel}
\end{figure}

\subsection{Data Mechanisms\label{SubSection: Data Mechanisms}} 
A mobile data plan  can be characterized by the tuple $\plan=\{\dcap,\pcap,\adfee,\mechanism\}$, where a subscriber pays a  lump-sum subscription fee $\pcap$ for a data usage up to the monthly data  cap $\dcap$, beyond which the MNO will charge an additional fee $\adfee$ for each unit of data consumption.\footnote{We assume that all data plans have the same additional unit usage fee $\adfee$. This is often true in practice. For example, for AT\&T, $\adfee=\$15$/GB.}
Here $\mechanism\in\{0,1,2\}$ represents different data mechanisms that offer subscribers different time flexibilities on their data consumption over time.

The key differences among the three data mechanisms are the \textit{rollover data} and \textit{consumption priority}, both of which will affect the subscriber's expected overage data consumption \cite{Zhiyuan2018TMC}. 
First, the rollover data from the previous month can enlarge a user's \textit{effective data cap} of the current month,  within which no additional fee involved. 
Second, the consumption priorities  of the rollover data and the monthly data cap further affect how much the effective cap is enlarged.
In Table \ref{table: Various Plans}, we use $\tau$ to denote a user's rollover data from the previous month. 
More specifically, 
\begin{itemize} 
	\item The case of $\mechanism=0$ denotes the traditional data plan. 
	The subscriber has no rollover data, and the effective cap of each month is $\dcap_0^e(\tau)=\dcap$; 
	
	\item The case of $\mechanism=1$ denotes the rollover data plan offered by AT\&T. 
	The rollover data $\tau\in[0,\dcap]$ from the previous month is consumed \textit{after} the current monthly data cap $\dcap$. 
	Thus the effective cap of the current month is $\dcap_1^e(\tau)=\dcap+\tau$;
	
	\item The case of $\mechanism=2$ denotes the rollover data plan offered by China Mobile. 
	The rollover data $\tau\in[0,\dcap]$ from the previous month is consumed \textit{prior} to the current monthly data cap $\dcap$. 
	Thus the effective cap of the current month is $\dcap_2^e(\tau)=\dcap+\tau$; 
\end{itemize}

\begin{table}
	\setlength{\abovecaptionskip}{1pt}
	\setlength{\belowcaptionskip}{0pt}
	\renewcommand{\arraystretch}{1.1}		
	\caption{ $\plan\eq\{\dcap,\pcap,\adfee,\mechanism\},\ \mechanism\in\{ 0, 1, 2\}$.} 
	\label{table: Various Plans}
	\centering
	\begin{tabular}{c c c c c c}
		\toprule
		Plan 				& Rollover Data $\tau$		& Consumption Priority						& $\dcap_\mechanism^e(\tau)$\\
		\midrule
		$\mechanism=0$		& 0							& Cap							& $\dcap$		\\
		$\mechanism=1$		& $\tau\in[0,\dcap]$		& Cap$\Rightarrow$Rollover		& $\dcap+\tau$ 	\\
		$\mechanism=2$		& $\tau\in[0,\dcap]$		& Rollover$\Rightarrow$Cap		& $\dcap+\tau$ 	\\
		\bottomrule
	\end{tabular}\vspace{-5pt}
\end{table}


As we mentioned above, the time flexibility can enlarge the subscriber's effective data cap.
According to Table \ref{table: Various Plans}, the effective data cap of the traditional data mechanism $\mechanism=0$ is always $\dcap$.
However, for $\mechanism\in\{1,2\}$, the effective data cap is $\dcap+\tau$, which is no smaller than $\dcap$ in the traditional data mechanism.
Although $\mechanism=1$ and $\mechanism=2$ lead to the same expression $\dcap+\tau$, the stationary distribution of $\tau$ is different for $\mechanism\in\{1,2\}$.\footnote{We refer interested readers to Section 4 of \cite{Zhiyuan2018TMC} for more details. Moreover, when we consider the $K$-month rollover period, the rollover data has an even larger range, i.e., $\tau\in\{0,1,2,...,K\dcap\}$.}
Intuitively, the larger the effective data cap is, the less additional payment is incurred, which will further change users' subscription choices.

\subsection{User Model\label{SubSection: User Model}}  

\subsubsection{User Characteristics}
Next we introduce users' stochastic data demand $d$ and the two-dimensional preferences: $\val$ for the valuation of unit data and $\cut$ for the network substitutability.

To capture the stochastic nature of a user's data demand over time, we model a user's data demand as a discrete random variable with a probability mass function $f(d)$, a mean value of $\dmean$, and a finite integer support $\{0,1,2,...,\dmax\}$.\footnote{In practice, the MNO can estimate users' demand distributions based on their historical data usage, and incorporate such a difference among users into the user type modeling. In this paper, we focus on the user differences in data evaluation and network substitutability, and assume homogeneous demand distribution \cite{lambrecht2007does,nevo2016usage}. Notice that users' demand realizations can still be different. }
Here the data demand $d$ is measured in the minimum data unit (e.g, 1KB or 1MB according to the MNO's billing practice).
Accordingly, we denote $\val$ as a user's utility from one unit of data consumption, i.e., his valuation for unit data \cite{wang2017role,ma2016usage}.



Furthermore, a user's data consumption behavior might change after exceeding the \textit{effective cap}, since it incurs additional payment. 
Intuitively, the user will  still continue to consume data in this case, but may reduce his data consumption by utilizing alternative networks (e.g., Wi-Fi) instead.
Therefore, we follow \cite{sen2012economics} by incorporating users' network substitutability $\cut$ as one of the user's characteristics. 
Mathematically speaking, $\cut\in[0,1]$ denotes the fraction of overage usage shrink. 
A larger $\cut$ value represents more overage usage cut (thus, a better substitutability).
A user's mobility pattern can significantly influence the availability of alternative  networks, which will further change a user's data plan choice.
For example, a businessman who is always on the road may have a poor network substitutability (hence a small value of $\cut$), hence prefers to a large data cap; while a student can take advantage of the school Wi-Fi network (hence a large value of $\cut$), hence will be fine with a small data cap.

Different from our previous works in \cite{Zhiyuan2018TMC,ZhiyuanCompetitionTMC}, in this paper, we consider a more realistic \textit{asymmetric information}  scenario, i.e., the parameters $\val$ and $\cut$ are each user's private information that the MNO does not know precisely.
As a result, we propose to use a contract-theoretic approach to cope with users' multi-dimensional private information and optimize the MNO's multi-cap data plans.

\subsubsection{User Payoff}
A user's payoff is defined as the difference between his utility and payment.
Specifically, for a type-$(\cut,\val)$ user with $d$ units data demand and an effective cap $\dcap_\mechanism^e(\tau)$, his realized data consumption is $d-\cut[d-\dcap_\mechanism^e(\tau)]^+$ where $[x]^+=\max\{0,x\}$.
Hence a type-$(\cut,\val)$ user's utility is $\val(d-\cut[d-\dcap_\mechanism^e(\tau)]^+)$.
In addition, the user's total payment consists of the monthly subscription fee $\pcap$ and the overage charge $\adfee(1-\cut)[d-\dcap_\mechanism^e(\tau)]^+$.
Therefore, the (monthly) payoff of the type-$(\cut,\val)$ user with a data demand $d$ and an effective cap $\dcap_{\mechanism}^e(\tau)$ is
\begin{equation}
\begin{aligned} 
S(\plan,\cut,\val,d,\tau)= & \val\left(d-\cut\left[d-\dcap_\mechanism^e(\tau)\right]^+\right)\\
&\qquad          -\adfee(1-\cut)\left[d-\dcap_\mechanism^e(\tau)\right]^+-\pcap.
\end{aligned}	
\end{equation}

Here both $d$ and $\tau$ are random variables, and we take the expectation over them  to obtain a user's expected payoff as
\begin{equation}\label{Equ: expected payoff}
\begin{aligned}
\payoff(\plan,\cut,\val) 	
&=\mathbb{E}_{d,\tau}\big\{ S(\plan,\cut,\val,d,\tau) \big\} \\
&=\val\left[\dmean-\cut A_\mechanism(\dcap)  \right] -\adfee(1-\cut) A_\mechanism(\dcap)-\pcap,
\end{aligned}
\end{equation}
where $A_\mechanism(\dcap)$ is the type-$(0,\val)$ subscriber's expected overage data consumption, as follows:
\begin{equation}\label{Equ: system model A_i(Q_i)}
\begin{aligned}
A_\mechanism(\dcap)=&\mathbb{E}_{d,\tau} \big\{[ d-\dcap_\mechanism^e(\tau) ]^+ \big\} \\
=& \textstyle \sum\limits_{d=0}^{\dmax}\sum\limits_{\tau=0}^{\dcap}	\left[d-\dcap_\mechanism^e(\tau)\right]^+f(d) p_\mechanism(\tau).
\end{aligned}
\end{equation}

Note that the differences among the three data mechanisms are entirely captured by $A_\mechanism(\dcap)$ in (\ref{Equ: system model A_i(Q_i)}).
Specifically,  $p_\mechanism(\tau)$ in (\ref{Equ: system model A_i(Q_i)}) represents the distribution of the subscriber's rollover data under data mechanism $\kappa$, which is the key difference among the three data mechanisms.
In our previous work, we have introduced how to compute $p_\mechanism(\tau)$ and $A_{\mechanism}(\dcap)$ in details (see Section 4 of \cite{Zhiyuan2018TMC}).
In this paper, we directly summarize the key conclusion from \cite{Zhiyuan2018TMC} in Proposition \ref{Proposition: flexibility}.
\begin{proposition}\label{Proposition: flexibility}
	For an arbitrary data demand distribution $f(d)$,  $A_0(\dcap)> A_1(\dcap)> A_2(\dcap)$ for any $\dcap\in(0,\dmax)$. 
\end{proposition}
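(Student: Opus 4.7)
The plan is to recast each $A_{\mechanism}(\dcap)$ as the expectation of the common integrand $g_d(\tau)\eq[d-\dcap-\tau]^+$, which is non-increasing in $\tau$ for every fixed $d$, and then compare the three rollover laws by stochastic dominance. Writing $A_0(\dcap)=\mathbb{E}_d[(d-\dcap)^+]$ corresponds to a point mass at $\tau=0$, while $A_{\mechanism}(\dcap)=\mathbb{E}_{d,\tau^{(\mechanism)}}[(d-\dcap-\tau^{(\mechanism)})^+]$ for $\mechanism\in\{1,2\}$ uses the stationary rollover distribution induced by the corresponding consumption priority. Once I establish the stochastic ordering $\tau^{(0)}\le_{\mathrm{st}}\tau^{(1)}\le_{\mathrm{st}}\tau^{(2)}$, the desired chain $A_0(\dcap)>A_1(\dcap)>A_2(\dcap)$ follows from the monotonicity of $g_d$ on the overage region.

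For $A_0(\dcap)>A_1(\dcap)$: The mechanism-$1$ recursion $\tau^{(1)}_{t+1}=[\dcap-d_t]^+$ makes the rollover a deterministic function of a single demand draw. Hence in steady state $\mathbb{P}(\tau^{(1)}>0)=\mathbb{P}(d<\dcap)>0$ whenever $\dcap\in(0,\dmax)$, while $\tau^{(0)}\equiv 0$. Combining this with $\mathbb{P}(d>\dcap)>0$ (which holds because $\dcap<\dmax$) produces a positive-probability gain of $g_d(\tau^{(1)})<g_d(0)$, yielding the strict inequality after taking expectations.

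For $A_1(\dcap)>A_2(\dcap)$: I would set up a pathwise coupling that feeds both chains the same i.i.d.\ demand sequence $\{d_t\}$ from the common initial state $\tau_0^{(1)}=\tau_0^{(2)}=0$. The mechanism-$1$ update is $\tau^{(1)}_{t+1}=[\dcap-d_t]^+$, while the mechanism-$2$ update is $\tau^{(2)}_{t+1}=[\dcap-[d_t-\tau^{(2)}_t]^+]^+$. Since $\tau^{(2)}_t\ge 0$ implies $[d_t-\tau^{(2)}_t]^+\le d_t$, a one-line induction gives $\tau^{(2)}_{t+1}\ge[\dcap-d_t]^+=\tau^{(1)}_{t+1}$, so $\tau^{(2)}_t\ge\tau^{(1)}_t$ almost surely for every $t$. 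Passing to the stationary limit as $t\to\infty$ yields $\tau^{(1)}\le_{\mathrm{st}}\tau^{(2)}$, and plugging this into the monotone integrand $g_d$ gives $A_1(\dcap)\ge A_2(\dcap)$.

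The main obstacle I anticipate is upgrading this weak inequality to a strict one: one must argue that the coupling creates a positive-probability separation on the overage event $\{d>\dcap+\tau\}$ in steady state, which requires a careful analysis of the mechanism-$2$ Markov chain (mechanism $1$ is memoryless, but mechanism $2$ genuinely accumulates memory through $\tau^{(2)}_t$). This delicate step is precisely what the explicit characterization of $p_{\mechanism}(\tau)$ carried out in Section~4 of \cite{Zhiyuan2018TMC} resolves, which is the reason the present statement is invoked as a known result rather than rederived from scratch here.
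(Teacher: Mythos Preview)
The paper does not prove Proposition~\ref{Proposition: flexibility} at all; it explicitly imports the result from Section~4 of \cite{Zhiyuan2018TMC} (``In this paper, we directly summarize the key conclusion from \cite{Zhiyuan2018TMC}''). You correctly recognize this in your final paragraph, so there is nothing in the present paper to compare your argument against.

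That said, your sketch is sound as far as the weak inequalities go. The recursion $\tau^{(1)}_{t+1}=[\dcap-d_t]^+$ for mechanism~1 and $\tau^{(2)}_{t+1}=[\dcap-[d_t-\tau^{(2)}_t]^+]^+$ for mechanism~2 are the correct one-month rollover updates under the two consumption priorities, and the pathwise coupling you describe does yield $\tau^{(2)}_t\ge\tau^{(1)}_t$ almost surely. Passing to stationarity and integrating the non-increasing $g_d(\tau)=[d-\dcap-\tau]^+$ then gives $A_0(\dcap)\ge A_1(\dcap)\ge A_2(\dcap)$. The strict inequality $A_0>A_1$ is indeed easy because $\tau^{(1)}$ is an i.i.d.\ function of the previous demand and hence independent of the current demand, so $\mathbb{P}(\tau^{(1)}>0,\,d>\dcap)=\mathbb{P}(d<\dcap)\mathbb{P}(d>\dcap)>0$ under the full-support assumption on $f$. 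You are also right that upgrading $A_1\ge A_2$ to a strict inequality is the delicate part: the coupling only gives $\tau^{(2)}_t\ge\tau^{(1)}_t$, and one must exhibit a positive-probability path on which $\tau^{(2)}$ is strictly larger \emph{and} the subsequent demand lands in the overage window $(\dcap+\tau^{(1)},\dcap+\tau^{(2)}]$. This is not hard to construct directly (e.g., two consecutive sub-cap demands followed by a large demand), but you are correct that the cited paper resolves it via the explicit stationary law rather than a coupling argument.
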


Proposition \ref{Proposition: flexibility} indicates that a user incurs less overage data consumption under the rollover mechanism $\mechanism\in\{1,2\}$ than the traditional one $\mechanism=0$.
Moreover, among the two rollover mechanisms  $\mechanism\in\{1,2\}$, $\mechanism=2$ is more time-flexible than $\mechanism=1$, since $A_1(\dcap)> A_2(\dcap)$.
This is why we say that {the rollover mechanism $\mechanism=2$ offers the best time flexibility, while $\mechanism=0$ offers the worst}.

The above discussion indicates  that a user's expected payoffs under different mechanisms have a similar expression. 
The difference is only in terms of the expected overage usage $A_\mechanism(\dcap_\mechanism)$. 
Thus, for notation simplicity, we will focus on a generic data mechanism and express the expected payoff of a type-$(\cut,\val)$ user as
\begin{equation}\label{Equ: Generalization Payoff}
\begin{aligned}
\payoff(\dcap,\pcap,\cut,\val)=V(\dcap,\cut,\val)-P(\dcap,\cut)-\pcap, \\
\end{aligned}
\end{equation} 
where $V(\dcap,\cut,\val)\eq\val[\dmean-A_\mechanism(\dcap) \cut ]$ is the  subscriber's  utility, and $P(\dcap,\cut)\eq\adfee(1-\cut) A_\mechanism(\dcap)$ is the  overage payment.
In economics, the subscription fee is a user's \textit{sunk cost} (incurred in advance and often independent of the user's actual consumption), while the overage payment is the \textit{prospective cost} (depending on the user's actual consumption).
Therefore, we call the user's payoff without the sunk cost as the ``virtual payoff'', defined as
\begin{equation}\label{Equ: Virtual Payoff}
L(\dcap,\cut,\val) \triangleq V(\dcap,\cut,\val)-P(\dcap,\cut), 
\end{equation} 
which will be used in Section \ref{Section: Contract Feasibility} and Section \ref{Section: Contract Optimality}.

So far we have generalized the users' expected payoffs under different data mechanisms into a unified expression.
Our later analysis for the MNO's optimal contract problem is based on this general framework.

\subsection{MNO's Contract Formulation\label{SubSection: MNO's Contract Formulation}}
Next we formulate the MNO's optimal contract problem.

\subsubsection{Feasible Contract}
The MNO offers a contract (with different combinations of data caps and corresponding subscription fees) to a group of users who are distinguished by two-dimensional private information: the data valuation $\val$ and the network substitutability $\cut$. 
Recall that in Step I (of Fig. \ref{fig: SystemModel}), the MNO collects the statistical information from the user market.
For example, we consider a set $\Stheta=\{\val_k:1\le k\le{K}\}$ of $K$ data valuation types and a set  $\Sbeta=\{\cut_m:1\le m\le{M}\}$ of $M$ network substitutability types. 
Hence there are a total of ${KM}$ types of users in the market, characterized by a joint probability mass function $q_{m,k}$ for each type-$(\cut_m,\val_k)$ user.\footnote{The MNO can flexibly divide users' into several categories through some data mining techniques such as $k$-means \cite{ma2016economic,kodratoff2014introduction}. The choices of parameters $K$ and $M$ determine the trade-off between contract complexity and profit.}
Without loss of generality, we assume that users' types are indexed in the ascending sort order in both dimensions, i.e., $\val_1<\val_2<...<\val_{K}$ and $\cut_1<\cut_2<...<\cut_{M}$.

According to the revelation principle \cite{fudenberg1991game}, it is enough for the MNO to consider a class of contracts that enables users to truthfully reveal their types. 
In other words, it is enough for the MNO to design a contract, denoted by $\Phi(\Sbeta,\Stheta)=\{\phi_{m,k},1 \leq m \leq M,1\leq k \leq K\}$ that consists of $KM$ contract items $\phi_{m,k}=\{\dcap_{m,k},\pcap_{m,k}\}$, one for each user type.
Formally, a contract is \textit{feasible} if and only if it ensures that each user selects the contract item intended for this type.
It is obvious that a contract is \textit{feasible} if and only if it satisfies the Individual Rationality (IR) and Incentive Compatibility (IC) conditions, defined as follows:

\begin{definition}[Individual Rationality]\label{Definiation: Individual Rationality}
	A contract is individually rational if for all $1\le m\le M$ and $1\le k\le K$, the type-$(\cut_m,\val_k)$ user achieves a non-negative payoff by choosing the contract item $\phi_{m,k}$ intended for this user type, denoted by $\contract_{m,k}\ir$, i.e., 
	\begin{equation}\label{Equ: IR}
	\payoff(\phi_{m,k},\cut_m,\val_k)\ge0.
	\end{equation}
\end{definition}

\begin{definition}[Incentive Compatibility]\label{Definiation: Incentive Compatibility}
	A contract is incentive compatible if for all $1\le m\le M$ and $1\le k\le K$, the type-$(\cut_m,\val_k)$ user maximizes its payoff by choosing the contract item $\phi_{m,k}$ intended for this user type, i.e., 
	\begin{equation}\label{Equ: IC}
	\payoff(\phi_{m,k},\cut_m,\val_k)\ge \payoff(\contract_{n,l},\cut_m,\val_k),\ \forall\ (n,l)\ne(m,k).
	\end{equation}
\end{definition}

Our later analysis for the contract feasibility in Section \ref{Section: Contract Feasibility} involves the concept of Pairwise Incentive Compatibility (PIC) in Definition \ref{Defination: PIC}.
Basically, PIC consists of the all IC conditions in the two-user scenario.
That is, the $KM(KM-1)$ IC conditions are equivalent to the $KM(KM-1)/2$ PIC conditions for all the two-user pairs.
\begin{definition}[Pairwise Incentive Compatibility]\label{Defination: PIC}
	The contract items $\phi_{m,k}$ and $\phi_{n,l}$ are pairwise incentive compatible, denoted by $\contract_{m,k}\ic\contract_{n,l}$, if and only if
	\begin{equation}
	\left\{
	\begin{aligned}
	& \payoff(\phi_{m,k},\cut_m,\val_k) \ge \payoff(\phi_{n,l},\cut_m,\val_k), \\
	& \payoff(\phi_{n,l},\cut_n,\val_l) \ge \payoff(\phi_{m,k},\cut_n,\val_l).
	\end{aligned}
	\right.
	\end{equation}
\end{definition}

\subsubsection{MNO's Profit}

Next we derive the MNO's revenue, cost, and profit under a \textit{feasible} contract $\Phi$.

The MNO's revenue from a subscriber consists of the subscription fee and the overage fee.
Based on the above discussion of the feasible contract, the MNO's expected revenue $\revenue(\Phi)$ under a \textit{feasible} contract $\Phi$ is
\begin{equation}\label{Equ: Revenue}
\begin{aligned}
&\revenue(\Phi)=  \sum\limits_{k=1}^{K}\sum\limits_{m=1}^{M} q_{m,k} \big[ \underbrace{\pcap_{m,k}}_\text{subscription} + \underbrace{P(\dcap_{m,k},\cut_m)}_\text{overage} \big].
\end{aligned}
\end{equation}

Furthermore, we consider two kinds of costs experienced by the MNO, i.e.,  the capacity cost and operational cost.

The MNO's capital expenditure is mainly due to its investment on the network capacity \cite{sen2013survey}.
Imposing the data cap would help manage the network congestion and arrange the scarce network capacity \cite{dai2015effect}.
Motivated by this phenomenon, we model the MNO's capacity cost caused by a type-$(\cut_m,\val_k)$ subscriber as an increasing function $J(\dcap)$ in his data cap $\dcap$ \cite{ma2016usage}.
Intuitively, a larger data cap corresponds to a severer network congestion on average that requires the MNO's more investment on the network in advance.

The MNO's operational cost is mainly due to the system management \cite{wang2017optimal}.
After the MNO decides which data plan to implement, the subscribers' total data consumption will influence the MNO's operational expense.
Therefore, the MNO's operational cost caused by a type-$(\cut_m,\val_k)$ subscriber with data cap $\dcap$ can be formulated as $c\cdot U(\dcap,\cut_m)$, where $c$ is the MNO's marginal cost for the system management \cite{dai2015effect}, and $U(\dcap,\cut_m)=\dmean-\cut_m A(\dcap)$ is the type-$(\cut_m,\val_k)$ subscriber's expected data consumption.\footnote{Such a linear-form cost has been widely used to model an operator's operational cost, e.g., \cite{luo2016integrated,duan2012duopoly}.}

Therefore, the MNO's expected cost $\cost(\Phi)$ under a \textit{feasible} contract $\Phi$ can be calculated as
\begin{equation}\label{Equ: Cost}
\begin{aligned}
&\cost(\Phi)=  \sum\limits_{k=1}^{K}\sum\limits_{m=1}^{M} q_{m,k} \big[ \underbrace{c\cdot U(\dcap_{m,k},\cut_m)}_\text{Operational cost}   + \underbrace{J(\dcap_{m,k})}_\text{Capacity cost} \big].
\end{aligned}
\end{equation}

The MNO's expected profit under a \textit{feasible} contract $\Phi$ is the difference between its revenue and cost, given by
\begin{equation}\label{Equ: Profit}
\begin{aligned}
&\profit(\Phi)= \revenue(\Phi)-\cost(\Phi).
\end{aligned}
\end{equation}

\subsubsection{MNO's Multi-dimensional Contract Problem}
Based on the above discussion, we formulate the MNO's contract problem as follows:
\begin{problem}[Optimal Contract Design]\label{Problem: Optimal Contract}
	\begin{equation} 
	\begin{aligned}
	&\max\limits_{\Phi}\ \profit(\Phi) \\
	&{\ \rm\ s.t.}\ (\ref{Equ: IR}),(\ref{Equ: IC}).
	\end{aligned}
	\end{equation}
\end{problem}

The key idea of the contact design problem is to ensure the individual rationality and the incentive compatibility of all user types, so that each user is willing to participate and truthfully reveals his type by selecting the contract item intended for this type of users.
Problem \ref{Problem: Optimal Contract} makes it clear, where the MNO needs to address a total of $KM$ IR constraints (condition  (\ref{Equ: IR})) and a total of $(KM-1)KM$ IC constraints (condition (\ref{Equ: IC})).

The main difficulty of Problem \ref{Problem: Optimal Contract} is twofold:
\begin{enumerate}
	\item \textit{The non-monotonicity of the allocation rule}. 
			A monotonic allocation rule usually requires the satisfaction of the  single-crossing property, under which  two  indifference curves of any two different user types cross only once \cite{bolton2005contract}.
			That is, the user's marginal utility should be monotone increasing (or monotone decreasing) in the user type.
			When this condition holds, an allocation rule is incentive compatible only if the rule is monotonic in the user type \cite{athey2001single}.
			In Problem \ref{Problem: Optimal Contract}, we have
			\begin{equation}
			\frac{\partial^2 \payoff(\dcap,\pcap,\cut,\val)}{\partial \dcap\partial\val}=-\frac{\partial A(\dcap)}{\partial \dcap}\cdot \cut\ge0,\ \forall \cut\in[0,1],
			\end{equation}
			which indicates that the marginal utility increases in the data valuation $\val$ for any $\cut\in[0,1]$.
			Therefore, the higher valuation user deserves a larger allocation for any $\cut\in[0,1]$.
			However, for the network substitutability $\cut$, we have 
			\begin{equation}
			\frac{\partial^2 \payoff(\dcap,\pcap,\cut,\val)}{\partial \dcap\partial\cut}=-\frac{\partial A(\dcap)}{\partial \dcap}\cdot\left( \val-\adfee \right),
			\end{equation}
			which can be positive or negative, depending on the relationship between the data valuation $\val$ and the per-unit fee $\adfee$.
			Therefore, the allocation rule in terms of the network substitutability $\cut$ is not monotonic and hence is challenging to analyze.
	\item \textit{Two-dimensional user types}.
			A contract design involving multi-dimensional user types is also very challenging in general. 
			For contract problems involving only one-dimensional user types, the satisfaction of single-crossing condition guarantees a monotone allocation rule.
			Therefore, the approach used in \cite{gao2011spectrum,duan2014cooperative,zhang2015contract,zhang2017non} can significantly reduce the unbinding IC and IR constraints so that the contract problem is more tractable.
			However, the approach in \cite{gao2011spectrum,duan2014cooperative,zhang2015contract,zhang2017non} cannot be easily generalized to the two-dimensional user type case, even if the allocation rule is consistent (and we have shown that it is not in our problem). 
\end{enumerate}

Next we will exploit  the special structure in Problem \ref{Problem: Optimal Contract} and propose a new approach of solving the problem. 
This is a key contribution of this paper.
Specifically, we will investigate the contract feasibility and optimality in Section \ref{Section: Contract Feasibility} and Section \ref{Section: Contract Optimality}, respectively.
Table \ref{Table: Key notation} summarizes the key notation in this paper.

\begin{table}
	\setlength{\abovecaptionskip}{1pt}
	\setlength{\belowcaptionskip}{0pt}
	\renewcommand{\arraystretch}{1.1}		
	\caption{Key Notation} 
	\label{Table: Key notation}
	\centering
	\begin{tabular}{c|l}
		\toprule
		\textbf{Symbol} 		& $\qquad\qquad\qquad\qquad$\textbf{Physical Meaning}					\\
		\midrule		
		$\dcap$					& The monthly data cap.	\\
		$\pcap$					& The fixed monthly subscription fee.	\\
		$\adfee$				& The overage usage fee when exceeding the data cap.	\\
		$\mechanism$			& The data mechanism $\mechanism\in\{0,1,2\}$.	\\
		\midrule
		$\val$					& The user's  data valuation.	\\
		$\cut$					& The user's  network substitutability.	\\
		$\Stheta$				& A total of $K$ different $\val$, i.e., $\Stheta=\{\val_k,1\le k\le K\}$.\\
		$\Sbeta$				& A total of $M$ different $\cut$, i.e., $\Sbeta=\{\cut_m,1\le m\le M\}$.\\
		$ \type_i$				& The $i$-th ($1\le i\le KM$) user type after sorting as (\ref{Equ: order}).	\\
		$\type_{\epsilon}$		& The smallest-payoff user type defined in (\ref{Equ: smallest-payoff user}).\\
		$\slope$				& The user's willingness-to-pay, defined in (\ref{Equ: slope}). \\
		$\payoff$				& The user's monthly expected payoff, defined in (\ref{Equ: Generalization Payoff}).\\
		$L$ 					& The user's virtual payoff, defined in (\ref{Equ: Virtual Payoff}).\\
		$\eta^+,\eta^-$		& The user's virtual payoff increment, defined in (\ref{Equ: Virtual payoff increament}). \\
		$\rho^+,\rho^-$		& The user's virtual payoff differences, defined in (\ref{Equ: virtual payoff differences}).\\
		\midrule
		$\revenue$				& MNO's expected revenue, defined in (\ref{Equ: Revenue}). \\
		$C$						& MNO's expected cost, defined in (\ref{Equ: Cost}).\\
		$\profit$				& MNO's expected profit, defined in (\ref{Equ: Profit}).\\
		$\Phi$					& MNO's contract $\Phi=\{\phi_{m,k},1\le m\le M, 1\le k\le K\}$. \\
		$\contract_{m,k}$		& Contract item $\{\dcap_{m,k},\pcap_{m,k}\}$ for type-$(\cut_m,\val_k)$ user.	\\
		$\contract_i$			& Contract item $\{\dcap_i,\pcap_i\}$ for type-$\type_i$ user after sorting. \\	
		\bottomrule
	\end{tabular}  
\end{table}

\section{Contract Feasibility\label{Section: Contract Feasibility}}
To study the feasibility of the two-dimensional contract, we will first introduce a user's marginal rate of substitution (which also represents the user's willingness-to-pay) and the new user ordering in Section \ref{Subsection: Marginal Rate of Substitution} and Section \ref{Subsection: User Ordering}, respectively.
Then we investigate the necessary and sufficient conditions for a feasible contract in Section \ref{SubSection: Necessary Condition} and Section \ref{SubSection: Sufficient Condition}, respectively.

\subsection{Marginal Rate of Substitution (Willingness-to-Pay)\label{Subsection: Marginal Rate of Substitution}}
In economics, a consumer's indifference curve connects those good bundles that achieve the same consumer satisfaction (payoff). 
In our problem, we can plot a user's indifference curve over the contract plane (i.e., the data cap $\dcap$ and the subscription fee $\pcap$) as in Fig. \ref{fig:illustration_MRS}.
On the $(\dcap,\pcap)$ plane, a type-$(\cut,\val)$ user's indifference curve with a fixed payoff $\payoff$ satisfies
\begin{equation}
\payoff= \val[\dmean-\cut A(\dcap)]-\adfee(1-\cut) A(\dcap) -\pcap.
\end{equation}

Fig. \ref{fig:illustration_MRS} shows that the indifference curve is increasing and concave\footnote{Showing the increasing and concave property for the indifference curve is equivalent to showing that $A(\dcap)$ is decreasing and convex in $\dcap$, which has been proved in our previous  work (see Section 5.2 of \cite{Zhiyuan2018TMC}).} in the data cap $\dcap$, which indicates that the subscription fee would increase (with a diminishing marginal increment) as the data cap increases to maintain the same payoff.
Moreover, as a user's indifference curve shifts downward, his payoff increases because of the decreasing subscription fee.

\begin{figure}
	\setlength{\abovecaptionskip}{3pt}
	\setlength{\belowcaptionskip}{0pt}
	\centering
	\includegraphics[width=0.6\linewidth]{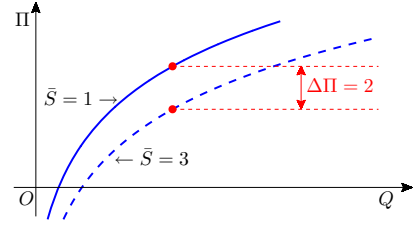}
	\caption{Two indifference curves of the same user type  with two different expected payoffs, i.e., $\payoff=1$ and $\payoff=3$.}
	\label{fig:illustration_MRS}
\end{figure}

\begin{figure*}
		\setlength{\abovecaptionskip}{2pt}
		\setlength{\belowcaptionskip}{0pt}
	\centering
	\subfigure[$\val_1<\val_K<\adfee$]{\label{fig: ThetaBeta_low} \includegraphics[width=0.28\linewidth]{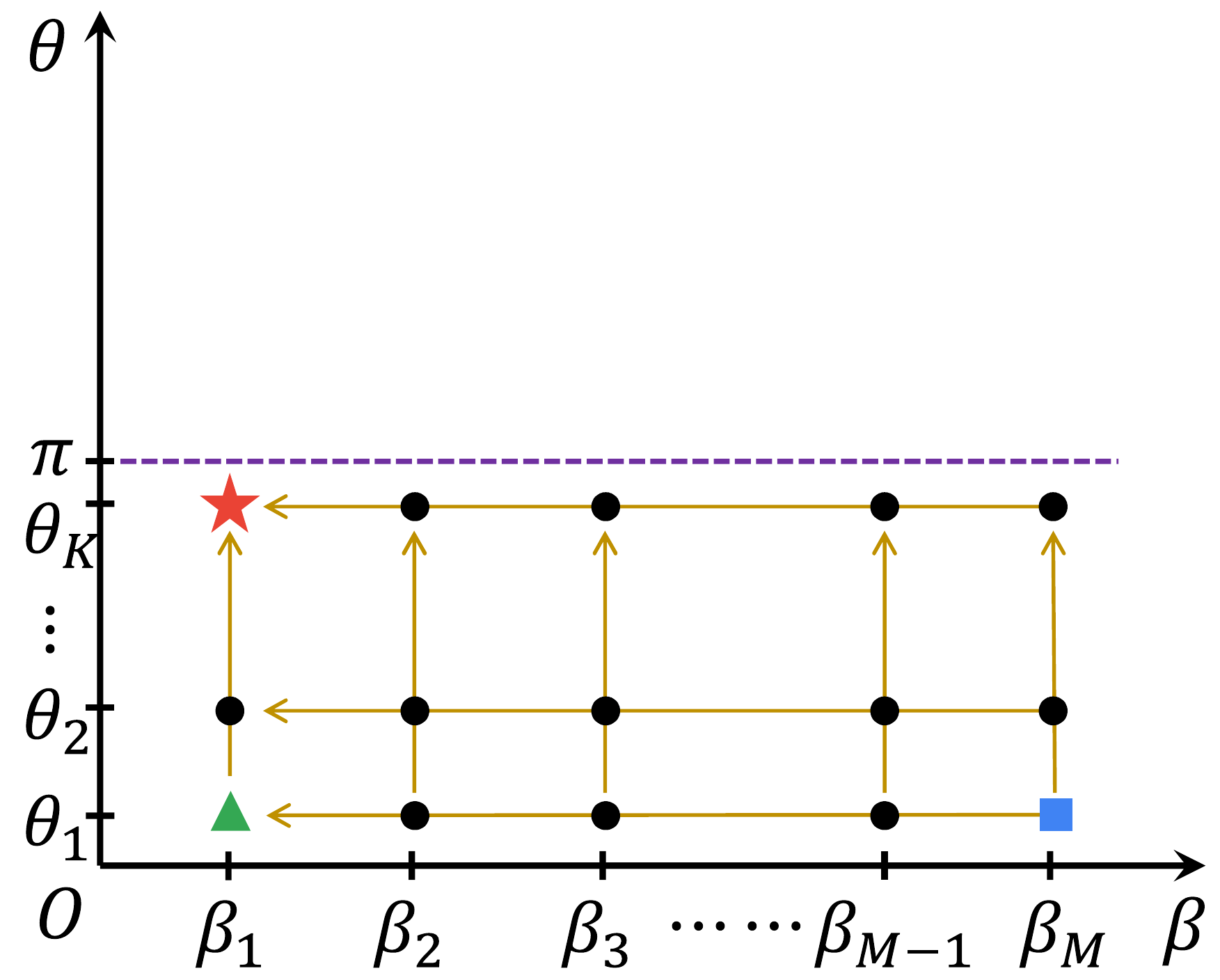}} \quad
	\subfigure[$\val_1<\adfee<\val_K$]{\label{fig: ThetaBeta_cross} \includegraphics[width=0.28\linewidth]{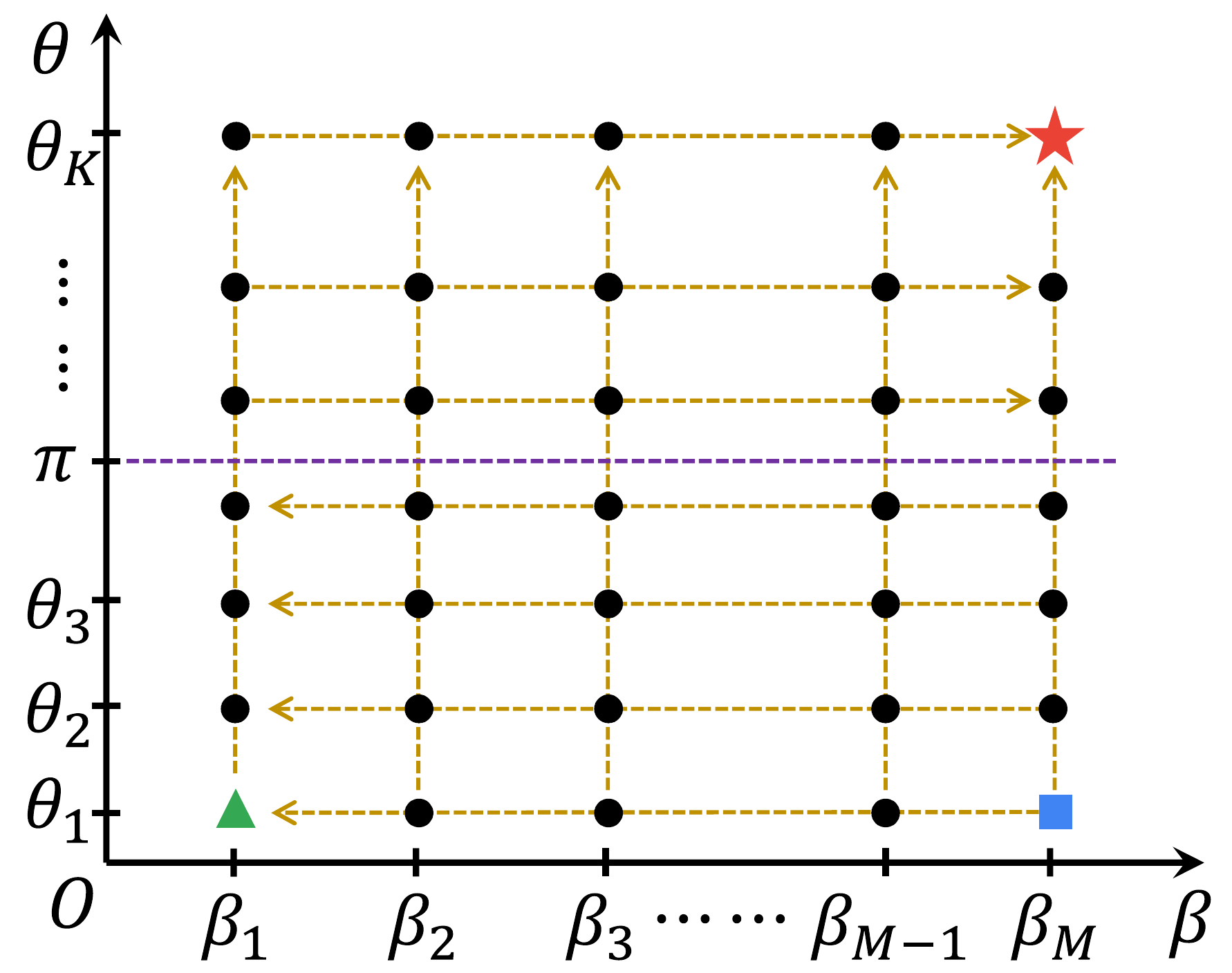}} \quad
	\subfigure[$\adfee<\val_1<\val_K$]{\label{fig: ThetaBeta_high} \includegraphics[width=0.28\linewidth]{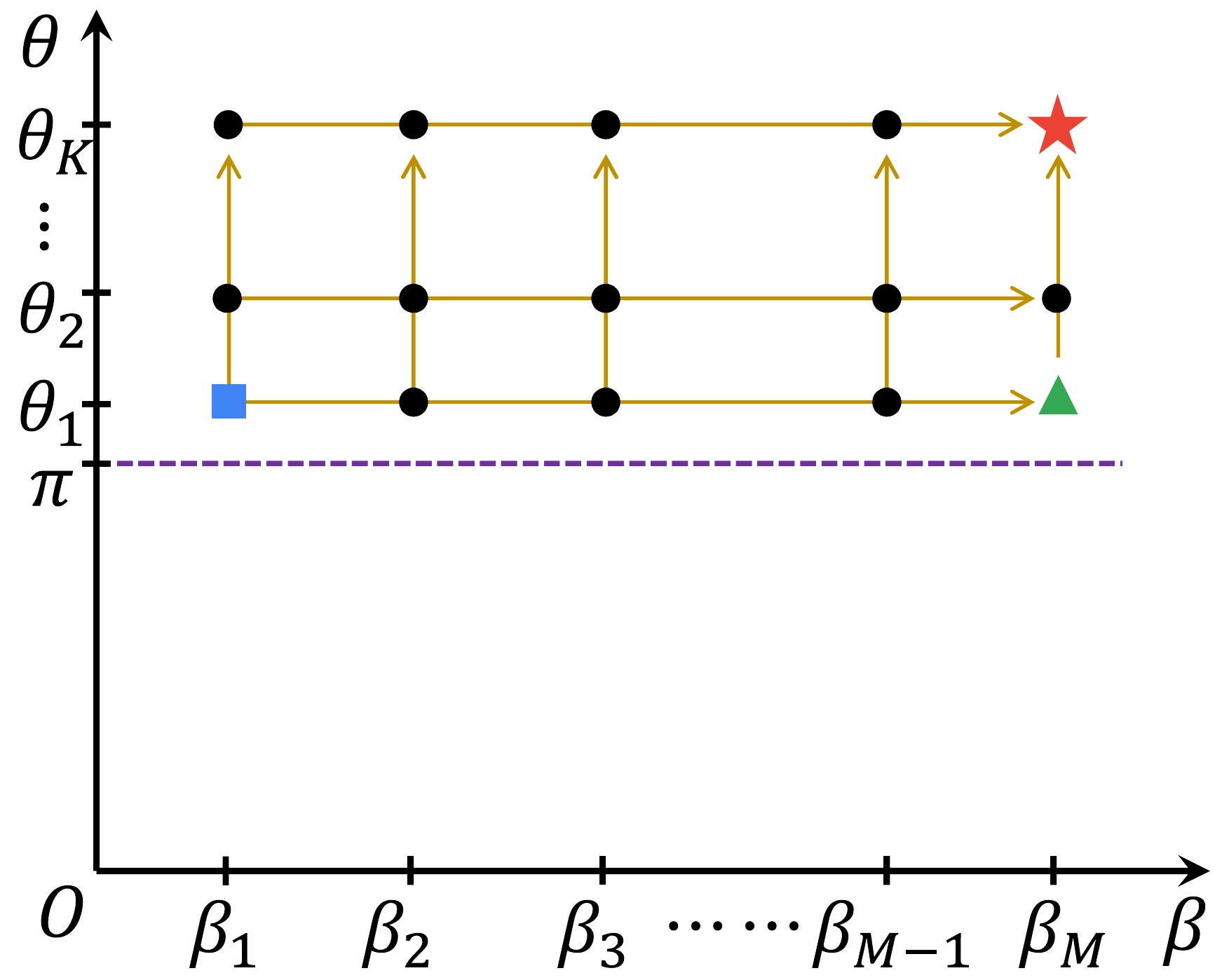}} 
	\caption{Three market modes. }
	\label{fig: market}	
\end{figure*}

The slope of an indifference curve is called the marginal rate of substitution (MRS), which  is the rate at which a consumer is ready to give up one good in exchange for another good, while maintaining the same level of satisfaction.
In our problem, we denote the MRS of a type-$(\cut,\val)$ user on a data cap $\dcap$ as
\begin{equation}\label{Equ: slope}
 \slope(\dcap,\cut,\val)\eq \frac{\partial \pcap}{\partial \dcap}=-\left[\val\cut+\adfee(1-\cut)\right]\frac{\partial A(\dcap)}{\partial \dcap},
\end{equation}
which depends on the user's private information $(\cut,\val)$ and the data cap $\dcap$.
The MRS $\slope(\dcap,\cut,\val)$ indicates  a type-$(\cut,\val)$ user's \textit{willingness-to-pay} for an additional unit of data on a  data cap $\dcap$. 
In the rest of the paper, we will use the three phrases ``marginal rate of substitution'', ``slope of the indifference curve'', and ``willingness-to-pay'' interchangeably.

\subsection{User Ordering Based on Willingness-to-Pay\label{Subsection: User Ordering}}
Without loss of generality, now we sort and index the ${KM}$ user types $(\cut_m,\val_k)$ based on the corresponding willingness-to-pay $\slope(\dcap,\cut_m,\val_k)$ in an ascending order as follows:
\begin{equation}\label{Equ: order}
\type_1(\dcap),\ \type_2(\dcap),\ ...,\ \type_{KM}(\dcap),
\end{equation}
where $\type_i(\dcap)\eq \{\cut_m,\val_k\}$ for some $k$ and $m$.
In this case, under the data cap $\dcap$, we have
\begin{equation}\label{Equ: order slope}
\slope(\dcap,\type_1)\le\slope(\dcap,\type_2)\le...\le\slope(\dcap,\type_{KM}).
\end{equation}

\begin{lemma}\label{Lemma: independent of Q}
	The new user ordering in (\ref{Equ: order}) does not depends on the data cap.
	That is, for any $\dcap\ne\dcap'$, we have
	\begin{equation}
	\type_i(\dcap)=\type_i(\dcap'),\ \forall\ 1\le i\le KM.
	\end{equation}
\end{lemma}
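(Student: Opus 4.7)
The plan is to exploit the multiplicative separability of the willingness-to-pay expression in (\ref{Equ: slope}). Writing
\begin{equation}
\slope(\dcap,\cut,\val) = g(\dcap)\cdot h(\cut,\val),
\end{equation}
where $g(\dcap) \eq -\partial A(\dcap)/\partial \dcap$ depends only on the data cap and $h(\cut,\val)\eq \val\cut+\adfee(1-\cut)$ depends only on the user's private parameters, the claim will reduce to observing that the cap-dependent factor is a common, strictly positive multiplier across all user types.

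First I would invoke the property that $A(\dcap)$ is strictly decreasing in $\dcap$ (stated in the footnote following Fig.~\ref{fig:illustration_MRS} and in \cite{Zhiyuan2018TMC}), so that $g(\dcap)>0$ for every $\dcap\in(0,\dmax)$. Then for any two user types $(\cut_m,\val_k)$ and $(\cut_n,\val_l)$ and any cap $\dcap$,
\begin{equation}
\slope(\dcap,\cut_m,\val_k)\le\slope(\dcap,\cut_n,\val_l) \iff h(\cut_m,\val_k)\le h(\cut_n,\val_l),
\end{equation}
because dividing by the common positive factor $g(\dcap)$ preserves the inequality. Since the right-hand condition does not involve $\dcap$, the induced ordering of the $KM$ user types is identical for every choice of cap. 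This directly yields $\type_i(\dcap)=\type_i(\dcap')$ for all $i$ and all $\dcap,\dcap'\in(0,\dmax)$.

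The only subtlety is how to handle ties, i.e., user types $(\cut_m,\val_k)\ne(\cut_n,\val_l)$ with $h(\cut_m,\val_k)=h(\cut_n,\val_l)$; these produce equal slopes at every cap and can be indexed by any fixed tie-breaking rule (e.g., lexicographic on $(m,k)$), which is itself cap-independent and therefore preserves the conclusion. I do not anticipate a genuinely hard step: the lemma is essentially a structural observation that the cap and type enter $\slope$ through a product, so no monotonicity or convexity argument beyond $g(\dcap)>0$ is needed. The main value of the lemma lies downstream, where it justifies speaking of a single, cap-independent ordering $\type_1,\dots,\type_{KM}$ when analysing feasibility and optimality in Sections~\ref{Section: Contract Feasibility} and~\ref{Section: Contract Optimality}.
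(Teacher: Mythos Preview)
Your proposal is correct and is essentially the same argument as the paper's: both exploit the factorization $\slope(\dcap,\cut,\val)=g(\dcap)\,h(\cut,\val)$ and use the positivity of $g(\dcap)=-\partial A(\dcap)/\partial\dcap$ to conclude that the pairwise ordering of types by $\slope$ is determined by $h$ alone. If anything, your version is slightly more careful, since you make the step $g(\dcap)>0$ explicit (the paper's proof multiplies an inequality by $-\partial A(\dcap')/\partial\dcap$ without restating why the direction is preserved) and you address tie-breaking, which the paper leaves implicit.
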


Lemma \ref{Lemma: independent of Q} indicates that the user ordering in (\ref{Equ: order}) does not change, even though the value of $\slope(\dcap,\type_i)$ would change with the data cap $\dcap$.
Intuitively, this is because that a user's willingness-to-pay $\slope(\dcap,\type_i)$ in (\ref{Equ: slope}) has a separable structure between the user types (i.e., $\val$ and $\cut$) and the data cap $\dcap$.
The proof of Lemma \ref{Lemma: independent of Q} is given in Appendix \ref{Appendix: User type}.

For notation simplicity, in the following, we will directly use $\type_i$ to denote a user type under the ordering specified in (\ref{Equ: order slope}),  and denote $\contract_i=\{\dcap_i,\pcap_i\}$  the contract item intended for the type-$\type_i$ users.

To have a better understanding on the new user ordering, we use Fig. \ref{fig: market} to illustrate how $(\cut_m,\val_k)$ maps to $\type_i$.
There are three different market modes depending on the relationship between the extreme valuations ($\val_1$ and $\val_K$) and the overage fee $\adfee$, i.e.,  $\val_K<\adfee$ as in Fig. \ref{fig: ThetaBeta_low}, $\val_1<\adfee<\val_K$ as in Fig. \ref{fig: ThetaBeta_cross}, and $\adfee<\val_1$ as in  Fig. \ref{fig: ThetaBeta_high}.
Specifically, the arrows in Fig. \ref{fig: market} point to the direction where the user's MRS $\slope(\dcap,\cut,\val)$ increases, the blue square denotes the \emph{minimum willingness-to-pay} user type-$\type_1$, and the red star denotes the \emph{largest willingness-to-pay} user type-$\type_{KM}$.
The following proposition summarizes the mapping from $(\cut_m,\val_k)$ to $\type_1$ and $\type_{KM}$.
The proof is given in Appendix \ref{Appendix: User type}.
\begin{proposition}\label{Proposition: type^1, type^KM}
	Under the three market modes, the type-$\type_1$ and type-$\type_{KM}$ users have their private information as follows:
	\begin{equation}
	\begin{cases}
	\type_1=\{\cut_{M},\val_1\}, \type_{KM}=\{\cut_1,\val_{K}\},	& \text{if}\ \val_1<\val_{K}<\adfee, \\
	\type_1=\{\cut_{M},\val_1\}, \type_{KM}=\{\cut_{M},\val_{K}\},	& \text{if}\ \val_1<\adfee<\val_{K}, \\
	\type_1=\{\cut_1,\val_1\},\  \type_{KM}=\{\cut_{M},\val_{K}\},	& \text{if}\ \adfee<\val_1<\val_{K}.
	\end{cases}
	\end{equation}
\end{proposition}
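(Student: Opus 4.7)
The plan is to reduce the proposition to a simple monotonicity analysis of the user-type-dependent factor of the MRS, and then identify the optimizing corner of the rectangular type space $\Sbeta\times\Stheta$ in each of the three market modes.

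First I would rewrite (\ref{Equ: slope}) as $\slope(\dcap,\cut,\val)=g(\cut,\val)\cdot h(\dcap)$, where $g(\cut,\val)\eq\adfee+\cut(\val-\adfee)$ and $h(\dcap)\eq-\partial A(\dcap)/\partial \dcap$. By the convexity/monotonicity of $A(\dcap)$ recalled just before (\ref{Equ: slope}) (and proved in Section 5.2 of \cite{Zhiyuan2018TMC}), we have $h(\dcap)>0$ for every $\dcap$. Hence the sign of the partial derivatives of $\slope$ with respect to $\val$ and $\cut$ is controlled entirely by $g$, and Lemma \ref{Lemma: independent of Q} is already a corollary of this separable form. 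It remains to locate the minimum and maximum of $g$ over the finite grid $\Sbeta\times\Stheta$.

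Next I would compute the two partials $\partial g/\partial \val=\cut\ge 0$ and $\partial g/\partial \cut=\val-\adfee$. The first shows that for any fixed $\cut$, $g$ is non-decreasing in $\val$, so a minimizer must sit on the column $\val=\val_1$ and a maximizer must sit on the column $\val=\val_K$ (with strict monotonicity whenever $\cut>0$, which I would note covers all types of interest). The second shows that monotonicity in $\cut$ is governed by whether $\val\gtrless\adfee$: $g$ is non-increasing in $\cut$ on the row $\val=\val_k$ if $\val_k<\adfee$, and non-decreasing if $\val_k>\adfee$. The three market modes of Fig. \ref{fig: market} are precisely the three possible configurations of $\{\val_1,\val_K\}$ relative to $\adfee$.

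With these two observations in hand, the three cases fall out by inspection: (i) when $\val_K<\adfee$, both $g(\cut,\val_1)$ and $g(\cut,\val_K)$ are decreasing in $\cut$, so the minimum over $\cut$ on the $\val_1$-column is attained at $\cut_M$ and the maximum on the $\val_K$-column is attained at $\cut_1$, giving $\type_1=(\cut_M,\val_1)$ and $\type_{KM}=(\cut_1,\val_K)$; (iii) when $\adfee<\val_1$, both columns are non-decreasing in $\cut$, giving $\type_1=(\cut_1,\val_1)$ and $\type_{KM}=(\cut_M,\val_K)$. The case I expect to be the main obstacle is (ii) $\val_1<\adfee<\val_K$, because then the $\val_1$-column is decreasing in $\cut$ while the $\val_K$-column is increasing, so the minimizer and maximizer end up on \emph{the same} row $\cut=\cut_M$; I would emphasize this asymmetry explicitly and verify that $g(\cut_M,\val_1)<g(\cut,\val_1)$ for $\cut<\cut_M$ and $g(\cut_M,\val_K)>g(\cut,\val_K)$ for $\cut<\cut_M$, which together with monotonicity in $\val$ yield $\type_1=(\cut_M,\val_1)$ and $\type_{KM}=(\cut_M,\val_K)$ as claimed.
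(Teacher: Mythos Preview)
Your proposal is correct and follows essentially the same approach as the paper: both reduce the ordering to the scalar factor $g(\cut,\val)=\adfee+\cut(\val-\adfee)$, observe that $g$ is increasing in $\val$ and that its monotonicity in $\cut$ flips sign at $\val=\adfee$, and then read off the extremal corners in each of the three market modes. Your write-up is slightly more detailed (making the separable form $\slope=g\cdot h$ explicit and spelling out case (ii) carefully), but there is no substantive difference in method.
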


Furthermore, the green triangles in Fig. \ref{fig: market} denote the \emph{smallest-payoff} user type $\type_{\epsilon}(\dcap,\pcap)$ given the contract item $(\dcap,\pcap)$, defined as follows 
\begin{equation}\label{Equ: smallest-payoff user}
\type_{\epsilon}(\dcap,\pcap) \eq \arg\min\limits_{\type_i} \payoff( \dcap,\pcap,\type_i ).
\end{equation}

Lemma \ref{Lemma: u independent of Q} indicates that the \textit{smallest-payoff} user type $\type_{\epsilon}(\dcap,\pcap)$ does not change with data cap or subscription fee.
Similar to Lemma \ref{Lemma: independent of Q}, this is because the separable structure between the user types (i.e., $\val$ and $\cut$) and the contract item (i.e., $\dcap$ and $\pcap$).
For notation simplicity, we will use $\type_{\epsilon}$ in the following.
The proof of Lemma \ref{Lemma: u independent of Q} is in Appendix \ref{Appendix: User type}.
\begin{lemma}\label{Lemma: u independent of Q}
	The smallest-payoff user defined in (\ref{Equ: smallest-payoff user}) does not depends on the data cap or the subscription fee, i.e.,
	\begin{equation}
	\type_{\epsilon}(\dcap,\pcap)=\type_{\epsilon}(\dcap',\pcap'),\ \forall (\dcap',\pcap')\ne(\dcap,\pcap).
	\end{equation}
\end{lemma}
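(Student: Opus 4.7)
The plan is to exploit the separable structure of $\payoff(\dcap,\pcap,\cut,\val)$ in the same spirit as Lemma \ref{Lemma: independent of Q}: I would show that the minimizer over types is pinned down by sign conditions that reference only the type parameters and $\adfee$, never $\dcap$ or $\pcap$. Rewriting
\begin{equation*}
\payoff(\dcap,\pcap,\cut,\val) = \val\dmean - A(\dcap)\bigl[\val\cut + \adfee(1-\cut)\bigr] - \pcap,
\end{equation*}
one sees immediately that $\pcap$ contributes only a common additive offset across all types, so the $\arg\min$ over types cannot depend on $\pcap$. The remaining task is therefore to show that varying $\dcap$ does not change which type attains the minimum.

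For this, I would compute the partial derivatives of $\payoff$ with respect to the two type coordinates (the argument transfers verbatim to the discrete grids $\Stheta$ and $\Sbeta$ by finite differences, since $\val$ and $\cut$ both enter affinely):
\begin{equation*}
\frac{\partial \payoff}{\partial \val} = \dmean - \cut A(\dcap), \qquad \frac{\partial \payoff}{\partial \cut} = A(\dcap)\bigl(\adfee - \val\bigr).
\end{equation*}
The first quantity is non-negative for every $\dcap$, because $A(\dcap)\le A(0)=\dmean$ and $\cut\le 1$. The sign of the second depends purely on whether $\val<\adfee$ or $\val>\adfee$, since $A(\dcap)\ge 0$ for every cap. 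Hence both monotonicity directions across the type grid are governed by inequalities between the type parameters and $\adfee$, not by the level of $A(\dcap)$.

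Combining these, the $\val$-coordinate of $\type_{\epsilon}$ must be $\val_1$ (a smaller valuation never increases the payoff), and conditional on $\val=\val_1$ the $\cut$-coordinate must be $\cut_1$ when $\val_1<\adfee$ and $\cut_M$ when $\val_1>\adfee$. In each of the three market modes of Fig. \ref{fig: market} this reproduces the position of the green triangle, and crucially the selection rule references neither $\dcap$ nor $\pcap$, which proves the identity $\type_{\epsilon}(\dcap,\pcap)=\type_{\epsilon}(\dcap',\pcap')$. The one structural input that is not pure algebra is the uniform bound $A(\dcap)\le \dmean$, and I expect this to be the only subtle step: it is precisely what prevents the sign of $\partial\payoff/\partial\val$ from flipping as the cap varies, and once it is in hand the two type dimensions decouple and the minimization becomes mechanical.
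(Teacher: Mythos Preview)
Your proposal is correct and follows essentially the same route as the paper: both arguments compute $\partial\payoff/\partial\val=\dmean-\cut A(\dcap)\ge0$ and $\partial\payoff/\partial\cut=(\adfee-\val)A(\dcap)$, then read off the minimizing type from the signs, which depend only on the comparison of $\val_1$ with $\adfee$ (and never on $\dcap$ or $\pcap$). If anything, your write-up is slightly more explicit than the paper's, since you spell out why $\pcap$ drops out as a common additive shift and justify $\dmean-\cut A(\dcap)\ge0$ via $A(\dcap)\le A(0)=\dmean$.
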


Proposition \ref{Proposition: type^u} presents the mapping from $(\cut_m,\val_k)$ to $\type_{\epsilon}$.
The proof is in Appendix \ref{Appendix: User type}.
\begin{proposition} \label{Proposition: type^u}
	Under the three market modes, the type-$\type_{\epsilon}$ user has the private information as follows:
	\begin{equation}
	\begin{cases}
	\type_{\epsilon}=\{\cut_1,\val_1\},		& \text{if}\ \val_1<\val_{K}<\adfee, \\
	\type_{\epsilon}=\{\cut_1,\val_1\},		& \text{if}\ \val_1<\adfee<\val_{K}, \\
	\type_{\epsilon}=\{\cut_{M},\val_1\},	& \text{if}\ \adfee<\val_1<\val_{K}.
	\end{cases}
	\end{equation}
\end{proposition}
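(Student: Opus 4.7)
My plan is to exploit the separable structure of the payoff (\ref{Equ: Generalization Payoff}) so that the joint minimization over $\Sbeta\times\Stheta$ reduces to two independent coordinate-wise monotonicity arguments. First I would rewrite the expected payoff in the form
\begin{equation*}
\payoff(\dcap,\pcap,\cut,\val) \;=\; \val\,\dmean \;-\; [\cut(\val-\adfee)+\adfee]\,A_\mechanism(\dcap) \;-\; \pcap,
\end{equation*}
which is jointly affine in $(\val,\cut)$ once $(\dcap,\pcap)$ is fixed. The two partial derivatives in the type coordinates are
\begin{equation*}
\frac{\partial \payoff}{\partial \val} = \dmean - \cut\, A_\mechanism(\dcap), \qquad \frac{\partial \payoff}{\partial \cut} = -(\val-\adfee)\, A_\mechanism(\dcap).
\end{equation*}

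The first step is to pin down the minimizing valuation. Using the pointwise bound $[d-\dcap_\mechanism^e(\tau)]^+\le d$ inside (\ref{Equ: system model A_i(Q_i)}) and taking expectations gives $A_\mechanism(\dcap)\le \Expect[d]=\dmean$. Since $\cut\in[0,1]$, this yields $\partial\payoff/\partial\val\ge 0$, so $\payoff$ is non-decreasing in $\val$ on the discrete grid $\Stheta$, and hence the minimizer over $\val$ is $\val_1$ for every $\cut\in\Sbeta$.

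The second step is to fix $\val=\val_1$ and minimize over $\cut$. Since $A_\mechanism(\dcap)\ge 0$, the sign of $\partial\payoff/\partial\cut$ equals the sign of $(\adfee-\val)$. Thus, at $\val=\val_1$, the minimizer over $\cut\in\Sbeta$ is $\cut_1$ when $\val_1<\adfee$ and $\cut_M$ when $\val_1>\adfee$. Translating this dichotomy into the three market modes yields exactly the classification claimed: modes 1 and 2 both satisfy $\val_1<\adfee$, giving $\type_\epsilon=\{\cut_1,\val_1\}$, while mode 3 satisfies $\adfee<\val_1$, giving $\type_\epsilon=\{\cut_M,\val_1\}$.

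The only mildly nontrivial step is the inequality $A_\mechanism(\dcap)\le\dmean$; everything else is a coordinate-wise monotonicity check on a jointly affine function. A useful byproduct of this approach is that neither partial derivative involves $\pcap$, and neither sign depends on $\dcap$, so the same argument simultaneously establishes Lemma \ref{Lemma: u independent of Q}.
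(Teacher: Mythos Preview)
Your argument is correct and follows essentially the same route as the paper: both compute $\partial\payoff/\partial\val=\dmean-\cut A_\mechanism(\dcap)\ge 0$ and $\partial\payoff/\partial\cut=(\adfee-\val)A_\mechanism(\dcap)$, then read off the minimizer coordinate-wise. Your explicit justification of $A_\mechanism(\dcap)\le\dmean$ via the pointwise bound $[d-\dcap_\mechanism^e(\tau)]^+\le d$ is a small addition the paper leaves implicit, and your observation that this simultaneously yields Lemma~\ref{Lemma: u independent of Q} mirrors the paper's choice to prove the two results together.
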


Next we study the necessary conditions for a contract to be feasible based on users' \textit{willingness-to-pay}.
\subsection{Necessary Conditions \label{SubSection: Necessary Condition}}
Lemmas \ref{Lemma: Cap-Price} and \ref{Lemma: type-Cap} present  two necessary conditions for a contract to be feasible (satisfying IC and IR conditions).
The proofs are given in Appendix \ref{Appendix: Necessary}.
\begin{lemma}\label{Lemma: Cap-Price}
	For any feasible contract $\Phi(\Sbeta,\Stheta)$, $\dcap_i<\dcap_j$ if and only if  $\ \pcap_i<\pcap_j$.
\end{lemma}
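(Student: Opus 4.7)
The plan is to derive both directions of the equivalence from the Pairwise Incentive Compatibility condition (Definition \ref{Defination: PIC}) applied to the two items $\contract_i$ and $\contract_j$, exploiting the fact that the virtual payoff $L(\dcap,\type)$ in (\ref{Equ: Virtual Payoff}) has a cleanly separable structure in the data cap and the user type. Write the two PIC inequalities for types $\type_i$ and $\type_j$ using the payoff decomposition (\ref{Equ: Generalization Payoff}) as
\begin{equation*}
\pcap_j - \pcap_i \ \ge\ L(\dcap_j,\type_i) - L(\dcap_i,\type_i),\qquad \pcap_i - \pcap_j \ \ge\ L(\dcap_i,\type_j) - L(\dcap_j,\type_j).
\end{equation*}
The central observation is that for any user type $\type=(\cut,\val)$,
\begin{equation*}
L(\dcap_j,\type) - L(\dcap_i,\type) \;=\; \bigl[\val\cut + \adfee(1-\cut)\bigr]\bigl[A(\dcap_i) - A(\dcap_j)\bigr],
\end{equation*}
so the sign of this difference is entirely determined by the sign of $A(\dcap_i)-A(\dcap_j)$, with a strictly positive multiplicative factor $[\val\cut+\adfee(1-\cut)]>0$ that does not depend on whether we pick type $\type_i$ or $\type_j$. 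Combined with the fact that $A(\dcap)$ is strictly decreasing in $\dcap$ (the monotonicity invoked in the footnote of Section \ref{Subsection: Marginal Rate of Substitution}), this gives a clean, type-independent link between the ordering of caps and the ordering of prices.

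For the forward direction, assume $\dcap_i<\dcap_j$. Then $A(\dcap_i)-A(\dcap_j)>0$ strictly, so the right-hand side of the first PIC inequality above is strictly positive, which forces $\pcap_j>\pcap_i$. For the reverse direction I would argue by contradiction. Suppose $\pcap_i<\pcap_j$ but $\dcap_i\ge\dcap_j$. If $\dcap_i>\dcap_j$, then applying the same identity with the roles swapped gives $L(\dcap_i,\type_j)-L(\dcap_j,\type_j)>0$, and the second PIC inequality yields $\pcap_i>\pcap_j$, a contradiction. If $\dcap_i=\dcap_j$, both virtual-payoff differences vanish and the two PIC inequalities collapse to $\pcap_i\le\pcap_j$ and $\pcap_j\le\pcap_i$, forcing $\pcap_i=\pcap_j$ and again contradicting the assumption.

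There is no serious obstacle; the only thing to be careful about is strictness. The forward direction uses strict monotonicity of $A(\dcap)$ to upgrade the weak PIC inequality to a strict cap ordering, and the reverse direction requires a case split that separately rules out $\dcap_i>\dcap_j$ and $\dcap_i=\dcap_j$. Once the type-independent sign identity for $L(\dcap_j,\type)-L(\dcap_i,\type)$ is in hand, both directions reduce to one-line sign arguments, which is precisely why the separable structure emphasized after Lemma \ref{Lemma: independent of Q} is the right vehicle for this proof.
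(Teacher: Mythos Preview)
Your proof is correct and follows essentially the same approach as the paper: both directions are obtained from the IC/PIC inequalities for types $\type_i$ and $\type_j$, combined with the strict monotonicity of $L(\cdot,\type)$ in the data cap (equivalently, of $A(\cdot)$). The only cosmetic differences are that the paper does not expand the explicit factorization $[\val\cut+\adfee(1-\cut)][A(\dcap_i)-A(\dcap_j)]$ and handles the reverse direction directly rather than by contradiction with a case split.
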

\begin{lemma}\label{Lemma: type-Cap}
	For any feasible contract $\Phi(\Sbeta,\Stheta)$, if $\slope(\dcap,\type_i)>\slope(\dcap,\type_j)$ for all $\dcap$, then $\dcap_i\ge\dcap_j$.
\end{lemma}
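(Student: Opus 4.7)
My plan is to argue by contradiction: assume $\dcap_i<\dcap_j$ and derive an inconsistency with the pairwise incentive compatibility of $\contract_i$ and $\contract_j$.

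First I would translate the two PIC inequalities of Definition \ref{Defination: PIC} into statements about the virtual payoff $L$ of (\ref{Equ: Virtual Payoff}). Writing $\payoff(\phi,\type)=L(\dcap,\type)-\pcap$ and rearranging yields
\begin{equation}
L(\dcap_j,\type_j)-L(\dcap_i,\type_j)\ \ge\ \pcap_j-\pcap_i\ \ge\ L(\dcap_j,\type_i)-L(\dcap_i,\type_i).
\end{equation}
In particular, regardless of the sign of $\dcap_j-\dcap_i$, the IC conditions force
\begin{equation}
L(\dcap_j,\type_j)-L(\dcap_i,\type_j)\ \ge\ L(\dcap_j,\type_i)-L(\dcap_i,\type_i). \label{Equ:plan-IC}
\end{equation}

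Next I would identify the slope $\slope$ of (\ref{Equ: slope}) with $\partial L/\partial \dcap$. Indeed, along an indifference curve $\payoff=L(\dcap,\type)-\pcap$ is constant, so $\partial\pcap/\partial\dcap=\partial L(\dcap,\type)/\partial\dcap=-[\val\cut+\adfee(1-\cut)]\partial A(\dcap)/\partial\dcap$, which is exactly $\slope(\dcap,\type)$. Under the hypothesis $\slope(\dcap,\type_i)>\slope(\dcap,\type_j)$ for every $\dcap$, if $\dcap_i<\dcap_j$ then integrating over $[\dcap_i,\dcap_j]$ gives
\begin{equation}
L(\dcap_j,\type_i)-L(\dcap_i,\type_i)\ =\ \int_{\dcap_i}^{\dcap_j}\!\slope(\dcap,\type_i)\,d\dcap\ >\ \int_{\dcap_i}^{\dcap_j}\!\slope(\dcap,\type_j)\,d\dcap\ =\ L(\dcap_j,\type_j)-L(\dcap_i,\type_j),
\end{equation}
which directly contradicts (\ref{Equ:plan-IC}). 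Hence we must have $\dcap_i\ge\dcap_j$.

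The only non-routine piece is the identification of $\slope$ with $\partial L/\partial\dcap$; once that is in hand the argument is a clean single-crossing-style integration. The fact that the PIC conditions already force the ``comparative difference'' inequality (\ref{Equ:plan-IC}) is standard for contracts, and the assumed strict inequality of slopes on the entire domain allows us to flip it on any nontrivial interval, so no case analysis on the market mode or on which dimension of $(\cut,\val)$ changes is needed. I would keep the argument short by invoking Lemma \ref{Lemma: Cap-Price} only implicitly (to note that the companion inequality $\pcap_i<\pcap_j$ is consistent with our assumption but not actually required for the contradiction).
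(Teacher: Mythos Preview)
Your proposal is correct and follows essentially the same route as the paper's proof: both argue by contradiction, combine the two PIC inequalities into the comparative-difference inequality on $L$, identify $\slope(\dcap,\type)$ with $\partial L(\dcap,\type)/\partial\dcap$, and then integrate the strict slope ordering over $[\dcap_i,\dcap_j]$ to obtain the reverse strict inequality. The only cosmetic difference is that the paper isolates the integration step as a separate Claim, whereas you fold it directly into the argument.
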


Lemma \ref{Lemma: Cap-Price} reveals that a larger data cap corresponds to a higher subscription fee in the feasible contract, which is intuitive.
Lemma \ref{Lemma: type-Cap} shows that a user with a stronger \textit{willingness-to-pay} for the data cap deserves a larger data cap in the feasible contract.
Next we provide a proof sketch for Lemma \ref{Lemma: type-Cap} to show the key insights.

\begin{proof}[\textbf{Proof Sketch of Lemma \ref{Lemma: type-Cap}}]
We illustrate the key insights of Lemma \ref{Lemma: type-Cap} based on the contract plane in Fig. \ref{fig: illustration_slope}.
\begin{itemize}
	\item For a type-$\type_j$ user, we assume that the red dot in Fig. \ref{fig: illustration_slope} is the contract item $\contract_j$ intended for this user type, and the red circle curve $l_j$ represents his indifference curve with a payoff equal to that of selecting $\contract_j$.
	\item For a type-$\type_i$ user, the blue square curve $l_i$ is his indifference curve with a payoff equal to this user choosing the red dot contract item $\contract_j$ (\textit{not intended for his type}).
\end{itemize}

It is obvious that $l_i$ is steeper than $l_j$; mathematically speaking, $\slope(\dcap,\type_i)>\slope(\dcap,\type_j)$ for all $\dcap$ (which is the condition in Lemma \ref{Lemma: type-Cap}).
That is, comparing with the type-$\type_j$ users, the type-$\type_i$ users have a stronger willingness-to-pay under any data cap.
Moreover, as a user's indifference curve shifts downward, his payoff increases because of the decreasing subscription fee.

\begin{figure}
	\centering
			\setlength{\abovecaptionskip}{2pt}
			\setlength{\belowcaptionskip}{0pt}
	\includegraphics[width=0.6\linewidth]{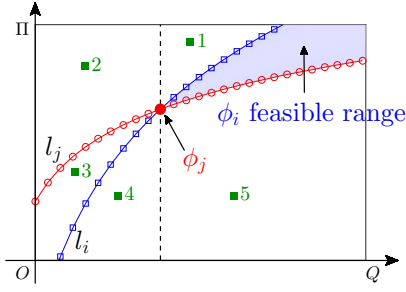}
	\caption{An illustration for Lemma \ref{Lemma: type-Cap}.}
	\label{fig: illustration_slope}
\end{figure}

Next we will show that \textit{to ensure the PIC condition $\contract_i\ic\contract_j$, the contract item $\contract_i$ (intended for the type-$\type_i$ users) must locate \textit{below} (or on) the blue square curve $l_i$ and \textit{above} (or on) the red circle curve $l_j$, i.e., in the blue region of Fig. \ref{fig: illustration_slope}}.
We prove this by contradiction. 
Assuming that this is  not true, then we need to consider the following two scenarios: 
\begin{itemize}
	\item \textbf{Scenario 1:} The contract item $\contract_i$ is above the blue square curve $l_i$, such as the green squares labeled 1, 2, 3 in Fig. \ref{fig: illustration_slope}. 
	In this case, the indifference curve $l_i$ for the type-$\type_i$  should shift upward (with a decreasing payoff) to touch one of the three green squares.
	However, the type-$\type_i$ user can achieve a higher payoff (comparing with selecting $\contract_i$) by selecting the red dot contract item $\contract_j$, which violates the PIC condition for the type-$\type_i$ user.
	\item \textbf{Scenario 2:}  The contract item $\contract_i$ is below the red circle curve ${l}_j$, such as the green squares labeled 4 and 5 in Fig. \ref{fig: illustration_slope}. 
	In this case, the indifference curve $l_j$ for the type-$\type_j$ user  should shift downward (with an increasing payoff) to touch one of the three green squares.
	Therefore, the type-$\type_j$ user can achieve a higher payoff by selecting the green square contract item $\contract_i$, which violates the PIC condition for the type-$\type_j$ users.
\end{itemize}

The above discussion indicates that the contract item $\contract_i$ must locate in the blue area, which is on the right of the dash line.
Thus $\dcap_i\ge\dcap_j$, as Lemma \ref{Lemma: type-Cap} implies.	
\end{proof}

According to Lemma \ref{Lemma: Cap-Price} and Lemma \ref{Lemma: type-Cap}, we summarize the \textit{necessary conditions} for a feasible contract as follows:
\begin{theorem}[Necessary Conditions for Feasibility]\label{Theorem: necessary conditions type}
	The feasible contract $\Phi(\Sbeta,\Stheta)$ has the following structure
	\begin{equation}\label{Equ: necessary conditions type}
	\left\{
	\begin{aligned}
	&\dcap_1\le\dcap_2\le...\le\dcap_{KM} ,\\
	&\pcap_1\le\pcap_2\le...\le\pcap_{KM}.
	\end{aligned}
	\right.
	\end{equation}
\end{theorem}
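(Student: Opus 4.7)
The plan is to derive both orderings in (\ref{Equ: necessary conditions type}) as essentially immediate corollaries of Lemmas \ref{Lemma: independent of Q}, \ref{Lemma: Cap-Price}, and \ref{Lemma: type-Cap}, so that Theorem \ref{Theorem: necessary conditions type} packages the earlier results into a single monotone structure.

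First, I would establish the data cap ordering $\dcap_1 \le \dcap_2 \le \ldots \le \dcap_{KM}$ by reasoning on consecutive index pairs. By the construction in (\ref{Equ: order slope}) together with Lemma \ref{Lemma: independent of Q}, for any $i<j$ we have $\slope(\dcap,\type_i)\le \slope(\dcap,\type_j)$ for \emph{every} $\dcap$, not just at a particular cap. When the inequality is strict at some (hence every) $\dcap$, Lemma \ref{Lemma: type-Cap} applies directly and gives $\dcap_j \ge \dcap_i$. In the degenerate case where $\slope(\dcap,\type_i)=\slope(\dcap,\type_j)$ for every $\dcap$, the two types have identical indifference-curve slopes throughout the $(\dcap,\pcap)$ plane; they are indistinguishable as far as the cap/fee trade-off is concerned, so they can be reindexed without loss of generality to satisfy $\dcap_i\le\dcap_j$. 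Chaining these pairwise inequalities along the index yields the desired monotone cap ordering.

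Second, I would obtain the monotone subscription fee ordering $\pcap_1 \le \pcap_2 \le \ldots \le \pcap_{KM}$ directly from Lemma \ref{Lemma: Cap-Price}. Since that lemma couples strict inequalities in caps and fees through a biconditional, the cap ordering transfers coordinatewise: consecutive strict cap inequalities give strict fee inequalities, while consecutive cap equalities force fee equalities (otherwise the type assigned the more expensive of two identical caps would strictly prefer the cheaper one, violating IC).

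The only subtle point is the handling of ties in the willingness-to-pay sort, which I expect to be the main (but mild) obstacle. Formally I would need to argue that when two types share the same slope function, the PIC conditions permit a relabeling consistent with (\ref{Equ: necessary conditions type}); this is straightforward from the separable structure of $\slope$ in (\ref{Equ: slope}), where the type-dependent factor $[\val\cut+\adfee(1-\cut)]$ is decoupled from the cap-dependent factor $-\partial A(\dcap)/\partial \dcap$, so equality of slopes at one cap implies equality at all caps and renders the two types fully interchangeable in the contract. Apart from this bookkeeping, the theorem requires no new technical machinery beyond the earlier lemmas.
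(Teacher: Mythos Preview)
Your proposal is correct and matches the paper's approach: the paper simply states the theorem as a direct summary of Lemmas \ref{Lemma: Cap-Price} and \ref{Lemma: type-Cap} without a separate proof, and your write-up fills in exactly the chaining argument those lemmas are meant to supply. Your explicit handling of ties in the willingness-to-pay ordering (via the separable structure of $\slope$ and relabeling) is a detail the paper leaves implicit but is consistent with its treatment.
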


\subsection{Sufficient Conditions \label{SubSection: Sufficient Condition}}
Next we derive the sufficient conditions for the feasible contract though the following two transitivity properties for Pairwise Incentive Compatibility (PIC) and Individual Rationality (IR).
The proofs are given in Appendix \ref{Appendix: B}.
\begin{lemma}[PIC-Transitivity]\label{Lemma: IC-Transitivity}
	Suppose the necessary conditions in Theorem \ref{Theorem: necessary conditions type} hold, then for any $i_1<i_2<i_3$, the following is true
	\begin{equation}
	\text{if}\ \contract_{i_1}\ic\contract_{i_2} \text{ and }\contract_{i_2}\ic\contract_{i_3}, \text{ then }\contract_{i_1}\ic\contract_{i_3}.
	\end{equation}
\end{lemma}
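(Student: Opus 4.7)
The plan is to reduce the entire argument to a single-crossing monotonicity property of the virtual payoff $L(\dcap,\cut,\val)$ defined in (4). The key identity is $\partial L/\partial \dcap = \slope(\dcap,\cut,\val)$, so that $\payoff(\contract,\type) = L(\dcap,\type) - \pcap$ and, whenever $\dcap_i\le \dcap_j$, any PIC comparison between $\contract_i$ and $\contract_j$ is a comparison between $\pcap_j-\pcap_i$ and $\int_{\dcap_i}^{\dcap_j}\slope(\dcap,\type)\,d\dcap$. A second ingredient is the uniform MRS ordering guaranteed by Lemma 1 together with the sorting in (9): for $i<j$, $\slope(\dcap,\type_i)\le \slope(\dcap,\type_j)$ holds \emph{for every} $\dcap$. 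Combined with the chains $\dcap_{i_1}\le \dcap_{i_2}\le \dcap_{i_3}$ and $\pcap_{i_1}\le \pcap_{i_2}\le \pcap_{i_3}$ from Theorem 1, this is the only machinery I expect to need.

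For the first half of $\contract_{i_1}\ic\contract_{i_3}$, namely $\payoff(\contract_{i_1},\type_{i_1})\ge \payoff(\contract_{i_3},\type_{i_1})$, the plan is to rewrite the goal as $\pcap_{i_3}-\pcap_{i_1}\ge L(\dcap_{i_3},\type_{i_1})-L(\dcap_{i_1},\type_{i_1})$ and telescope at the middle index. Adding the two PIC inequalities $\pcap_{i_2}-\pcap_{i_1}\ge L(\dcap_{i_2},\type_{i_1})-L(\dcap_{i_1},\type_{i_1})$ (from $\contract_{i_1}\ic\contract_{i_2}$) and $\pcap_{i_3}-\pcap_{i_2}\ge L(\dcap_{i_3},\type_{i_2})-L(\dcap_{i_2},\type_{i_2})$ (from $\contract_{i_2}\ic\contract_{i_3}$) yields the required lower bound on $\pcap_{i_3}-\pcap_{i_1}$, provided that $L(\dcap_{i_3},\type_{i_2})-L(\dcap_{i_2},\type_{i_2})\ge L(\dcap_{i_3},\type_{i_1})-L(\dcap_{i_2},\type_{i_1})$. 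That residual inequality is exactly $\int_{\dcap_{i_2}}^{\dcap_{i_3}}[\slope(\dcap,\type_{i_2})-\slope(\dcap,\type_{i_1})]\,d\dcap\ge 0$, which is immediate from the uniform MRS ordering.

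The second half $\payoff(\contract_{i_3},\type_{i_3})\ge \payoff(\contract_{i_1},\type_{i_3})$ follows by the mirror argument: I would use the ``reverse'' sides of the two PIC pairs, namely $\pcap_{i_2}-\pcap_{i_1}\le L(\dcap_{i_2},\type_{i_2})-L(\dcap_{i_1},\type_{i_2})$ and $\pcap_{i_3}-\pcap_{i_2}\le L(\dcap_{i_3},\type_{i_3})-L(\dcap_{i_2},\type_{i_3})$, telescope again at $\type_{i_2}$, and reduce the goal to $L(\dcap_{i_2},\type_{i_3})-L(\dcap_{i_1},\type_{i_3})\ge L(\dcap_{i_2},\type_{i_2})-L(\dcap_{i_1},\type_{i_2})$, which is the MRS monotonicity fact applied on $[\dcap_{i_1},\dcap_{i_2}]$.

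The main obstacle is conceptual rather than computational: in one-dimensional screening, type ordering and MRS ordering coincide by definition, whereas here the underlying type is two-dimensional $(\cut,\val)$, so it is a priori unclear that a single relabeling $\type_i$ can preserve the ordering of $\slope(\dcap,\type_i)$ \emph{uniformly} in $\dcap$. Lemma 1 is what grants this uniformity for free, because $\slope(\dcap,\cut,\val)$ factors into the $\dcap$-dependent term $-\partial A/\partial \dcap$ and the type-dependent weight $\val\cut+\adfee(1-\cut)$. Once that separability is invoked, both halves of PIC reduce to the same three-cap, three-fee telescoping trick and the lemma closes.
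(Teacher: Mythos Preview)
Your proposal is correct and follows essentially the same route as the paper: for each direction of $\contract_{i_1}\ic\contract_{i_3}$ you use one half of $\contract_{i_1}\ic\contract_{i_2}$ and one half of $\contract_{i_2}\ic\contract_{i_3}$, then close the telescoped gap with the single-crossing inequality $L(\dcap',\type_j)-L(\dcap,\type_j)\ge L(\dcap',\type_i)-L(\dcap,\type_i)$ for $\dcap'\ge\dcap$ and $j>i$, which the paper isolates as Claim~1 and proves via the same integral $\int_{\dcap}^{\dcap'}[\slope(q,\type_j)-\slope(q,\type_i)]\,dq\ge 0$. Your explicit appeal to Lemma~1 for the uniform-in-$\dcap$ MRS ordering is exactly the separability observation underpinning that claim.
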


The above PIC transitivity property makes the contract problem (i.e., Problem \ref{Problem: Optimal Contract}) more tractable.
It shows that we can reduce a total of  ${KM(KM-1)/2}$ PIC conditions to  a total of  ${KM-1}$ PIC conditions for the neighbor user type pairs, i.e., $\phi_i\ic\phi_{i+1},\ i=1,2,...,{KM-1}$.

We presents the IR transitivity in Lemma \ref{Lemma: IR-Transitivity}.
\begin{lemma}[IR-Transitivity]\label{Lemma: IR-Transitivity}
	Suppose the necessary conditions in Theorem \ref{Theorem: necessary conditions type} and all PIC conditions hold, then the following is true, 	
	$$\text{if } \phi_{\epsilon}\ir, \text{ then } \phi_i\ir,\ \forall\ i\ne \epsilon.$$
\end{lemma}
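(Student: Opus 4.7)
The plan is to establish the IR-transitivity via a short chain of inequalities that combines two ingredients: a pairwise incentive compatibility condition tying each type $\type_i$ to the smallest-payoff type $\type_\epsilon$, and the defining property of $\type_\epsilon$ in (\ref{Equ: smallest-payoff user}). The guiding intuition is that because $\type_\epsilon$ simultaneously achieves the minimum payoff under every contract item (Lemma \ref{Lemma: u independent of Q}), any IR slack enjoyed by $\type_\epsilon$ on its own item $\contract_\epsilon$ can be transported to every other type with one PIC hop, so no induction along the sorted chain of types is needed.

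First, I fix an arbitrary index $i \ne \epsilon$ and invoke the PIC condition $\contract_i \ic \contract_\epsilon$, which is among the assumed PIC set. By Definition \ref{Defination: PIC}, one of the two inequalities comprising PIC yields
$$\payoff(\contract_i, \type_i) \ge \payoff(\contract_\epsilon, \type_i),$$
i.e., the type-$\type_i$ user weakly prefers his intended item over $\contract_\epsilon$.

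Second, I apply the defining property of the smallest-payoff type from (\ref{Equ: smallest-payoff user}): for any contract item $(\dcap, \pcap)$, $\payoff(\dcap, \pcap, \type_\epsilon) \le \payoff(\dcap, \pcap, \type_j)$ for every $j$. Instantiating at $(\dcap, \pcap) = \contract_\epsilon$ gives $\payoff(\contract_\epsilon, \type_i) \ge \payoff(\contract_\epsilon, \type_\epsilon)$. Chaining this with the previous step and the hypothesis $\payoff(\contract_\epsilon, \type_\epsilon) \ge 0$ produces
$$\payoff(\contract_i, \type_i) \ge \payoff(\contract_\epsilon, \type_i) \ge \payoff(\contract_\epsilon, \type_\epsilon) \ge 0,$$
which is exactly $\contract_i \ir$, as required.

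The main conceptual step is recognizing that $\type_\epsilon$ plays this universal worst-case role across all contract items — precisely the content of Lemma \ref{Lemma: u independent of Q}; once that is in hand, the argument collapses to the two-step sandwich above and no case analysis over the market modes of Proposition \ref{Proposition: type^u} is necessary. The necessary conditions from Theorem \ref{Theorem: necessary conditions type} are not invoked directly in the inequality chain; they function as the structural backdrop under which the sorted indexing $\{\type_i\}$ and the accompanying PIC and IR notation are well-posed.
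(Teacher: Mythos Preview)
Your proof is correct and follows essentially the same approach as the paper: the identical three-step chain $\payoff(\contract_i,\type_i)\ge\payoff(\contract_\epsilon,\type_i)\ge\payoff(\contract_\epsilon,\type_\epsilon)\ge0$, justified respectively by PIC (equivalently IC) for type $\type_i$, the defining minimality of $\type_\epsilon$ in (\ref{Equ: smallest-payoff user}), and the assumed $\contract_\epsilon\ir$.
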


Recall that the user type-$\type_{\epsilon}$, defined in (\ref{Equ: smallest-payoff user}), achieves the smallest payoff among all the user types for any given contract item.
Lemma \ref{Lemma: IR-Transitivity} implies that once we can guarantee all the PIC conditions, then we only need to further ensure that the IR constraint for the \textit{smallest-payoff} type-$\type_{\epsilon}$ users. 
This allows us to reduce a total of  $KM$ IR conditions to one IR condition $\phi_{\epsilon}\ir$.

Before we present the \textit{sufficient conditions}  for the feasible contract, we first introduce a user's \textit{virtual payoff increment}.
Recall that $L(\dcap,\type_i)$ defined in (\ref{Equ: Virtual Payoff}) denotes the type-$\type_i$ user's virtual payoff.
We define $\eta^-(\type_i,\dcap_i,\dcap_{i-1})$ and $\eta^+(\type_i,\dcap_i,\dcap_{i+1})$ as the type-$\type_i$ user's virtual payoff increments between selecting the contract item $\contract_i$ and the contract items intended for his neighbor user types (i.e., $\contract_{i-1}$ and $\contract_{i+1}$), as follows
\begin{subequations}\label{Equ: Virtual payoff increament}
\begin{align}
& \eta^-(\type_i,\dcap_i,\dcap_{i-1})=L(\dcap_i,\type_i)-L(\dcap_{i-1},\type_i), \label{Equ: Virtual payoff increament eta-} \\
& \eta^+(\type_i,\dcap_i,\dcap_{i+1})=L(\dcap_i,\type_i)-L(\dcap_{i+1},\type_i).\label{Equ: Virtual payoff increament eta+}
\end{align}
\end{subequations}	

Based on Lemmas \ref{Lemma: Cap-Price}$\sim$\ref{Lemma: IR-Transitivity}, we derive the following \textit{sufficient conditions} for a contract to be feasible.
\begin{theorem}[Sufficient Conditions for Feasibility]\label{Theorem: sufficien condition}
	The contract $\Contrset(\Sbeta,\Stheta)$ is feasible if all the following conditions hold,
	\begin{enumerate}
		\item $\dcap_1\le\dcap_2\le...\le\dcap_{KM}$,
		\item for $i=\epsilon$,
		\begin{equation}\label{Equ: Sufficient IR u}
		\pcap_{\epsilon}\le L(\dcap_{\epsilon},\type_{\epsilon}),
		\end{equation}
		
		\item for all $i=1,2,...,\epsilon-1$,
		\begin{subequations}\label{Equ: Sufficient IC k<u}
			\begin{align}
			& \pcap_i\le \pcap_{i+1}+ \eta^+(\type_i,\dcap_i,\dcap_{i+1}),			\label{SubEqu: Sufficient IC k<u 1st} \\
			& \pcap_i\ge \pcap_{i+1}- \eta^-(\type_{i+1},\dcap_{i+1},\dcap_i).		\label{SubEqu: Sufficient IC k<u 2nd}
			\end{align}	
		\end{subequations}
				
		\item for all $i=\epsilon+1,\epsilon+2,...,{KM}$,
		\begin{subequations}\label{Equ: Sufficient IC k>u}
		\begin{align}
		& \pcap_i\le \pcap_{i-1}+ \eta^-(\type_i,\dcap_i,\dcap_{i-1}),     \label{SubEqu: Sufficient IC k>u 1st} \\
		& \pcap_i\ge \pcap_{i-1}- \eta^+(\type_{i-1},\dcap_{i-1},\dcap_i), \label{SubEqu: Sufficient IC k>u 2nd}
		\end{align}	
		\end{subequations}

	\end{enumerate}
\end{theorem}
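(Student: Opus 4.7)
The plan is to reduce the claim to the pairwise machinery already developed in Lemmas \ref{Lemma: IC-Transitivity} and \ref{Lemma: IR-Transitivity}. The key observation is that inequalities (\ref{Equ: Sufficient IC k<u})--(\ref{Equ: Sufficient IC k>u}) are nothing more than algebraic rewritings of the neighbor PIC conditions $\contract_i\ic\contract_{i+1}$, once expressed through the payoff decomposition $\payoff(\contract,\type)=L(\dcap,\type)-\pcap$ of (\ref{Equ: Generalization Payoff})--(\ref{Equ: Virtual Payoff}) and the definitions of $\eta^{\pm}$ in (\ref{Equ: Virtual payoff increament}). Once this translation is made, the two transitivity lemmas finish the argument.

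First I would verify the translation. Expanding (\ref{SubEqu: Sufficient IC k<u 1st}) via $\eta^+(\type_i,\dcap_i,\dcap_{i+1})=L(\dcap_i,\type_i)-L(\dcap_{i+1},\type_i)$ rearranges to $\payoff(\contract_i,\type_i)\ge\payoff(\contract_{i+1},\type_i)$, and expanding (\ref{SubEqu: Sufficient IC k<u 2nd}) via $\eta^-(\type_{i+1},\dcap_{i+1},\dcap_i)=L(\dcap_{i+1},\type_{i+1})-L(\dcap_i,\type_{i+1})$ rearranges to $\payoff(\contract_{i+1},\type_{i+1})\ge\payoff(\contract_i,\type_{i+1})$; together these are exactly $\contract_i\ic\contract_{i+1}$ for $i=1,\dots,\epsilon-1$. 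A mirror-image calculation turns (\ref{Equ: Sufficient IC k>u}) into $\contract_{i-1}\ic\contract_i$ for $i=\epsilon+1,\dots,KM$. Hence all $KM-1$ neighbor PIC conditions hold.

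Next I need the full statement of Theorem \ref{Theorem: necessary conditions type} (both cap and price monotonicity) in force before invoking the transitivity lemmas. Condition~(1) supplies $\dcap$-monotonicity directly. For price monotonicity I would use that $L(\dcap,\type)=\val\dmean-A_{\mechanism}(\dcap)[\val\cut+\adfee(1-\cut)]$ is non-decreasing in $\dcap$, because $A_{\mechanism}(\dcap)$ is non-increasing in $\dcap$ (as used already for the increasing/concave indifference curves in Fig.~\ref{fig:illustration_MRS}). Combined with $\dcap_i\le\dcap_{i+1}$, this forces $\eta^+(\type_i,\dcap_i,\dcap_{i+1})\le 0$ for $i<\epsilon$ and $\eta^+(\type_{i-1},\dcap_{i-1},\dcap_i)\le 0$ for $i>\epsilon$, so (\ref{SubEqu: Sufficient IC k<u 1st}) and (\ref{SubEqu: Sufficient IC k>u 2nd}) yield $\pcap_i\le\pcap_{i+1}$ throughout the index range. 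With both monotonicities now in place, I apply Lemma \ref{Lemma: IC-Transitivity} inductively along the chain $\contract_{i_1}\ic\contract_{i_1+1}\ic\cdots\ic\contract_{i_2}$ to promote neighbor PIC to every pair $(\contract_{i_1},\contract_{i_2})$, which is equivalent to the full IC condition (\ref{Equ: IC}). Finally, condition (\ref{Equ: Sufficient IR u}) is exactly $\contract_{\epsilon}\ir$, so Lemma \ref{Lemma: IR-Transitivity} immediately upgrades it to $\contract_i\ir$ for all $i\ne\epsilon$, completing the verification of (\ref{Equ: IR}).

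The main obstacle is purely notational: the four sufficient-condition blocks mix $\eta^+$ versus $\eta^-$ and the indices $i\pm 1$ in different ways, so I must match each inequality to the correct half of the correct neighbor PIC pair without sign or indexing error. No fresh optimization argument is required beyond the monotonicity of $L(\cdot,\type)$ in $\dcap$; everything else is a careful reassembly of the already-proved Lemmas \ref{Lemma: Cap-Price}--\ref{Lemma: IR-Transitivity}.
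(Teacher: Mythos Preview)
Your proposal is correct and follows essentially the same route as the paper: translate conditions (3) and (4) into the neighbor PIC conditions $\contract_i\ic\contract_{i+1}$ for all $1\le i\le KM-1$, invoke Lemma~\ref{Lemma: IC-Transitivity} to obtain global IC, and then use condition (2) together with Lemma~\ref{Lemma: IR-Transitivity} to obtain global IR. The paper presents this only as an intuitive paragraph rather than a formal proof, so your write-up is in fact more detailed; your extra step of deriving $\pcap$-monotonicity from (\ref{SubEqu: Sufficient IC k<u 1st}) and (\ref{SubEqu: Sufficient IC k>u 2nd}) is a careful reading of the stated hypothesis of Lemma~\ref{Lemma: IC-Transitivity}, though if you inspect the proof of that lemma in Appendix~\ref{Appendix: B} you will see it actually uses only the $\dcap$-monotonicity, so Condition~(1) alone already suffices to invoke it.
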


Now we discuss the intuitions of Theorem \ref{Theorem: sufficien condition}. 
\textit{Condition 1)} satisfies the necessary conditions in Theorem \ref{Theorem: necessary conditions type}.
\textit{Condition 2)} guarantees the IR condition for the type-$\type_{\epsilon}$ users, i.e., $\contract_{\epsilon}\ir$, which is sufficient for the IR conditions of all other user types according to Lemma \ref{Lemma: IR-Transitivity}.
\textit{Condition 3)} and \textit{Condition 4)} guarantee the PIC condition for the neighbor user types, i.e., $\contract_i\ic\contract_{i+1},\ \forall\ 1\le i\le{KM-1}$, which is sufficient for the global IC condition according to Lemma \ref{Lemma: IC-Transitivity}.
Specifically, the inequality (\ref{SubEqu: Sufficient IC k<u 1st}) ensures that the type-$\type_i$ user will not select the contract item $\contract_{i+1}$, i.e., $\payoff(\contract_i,\type_i)\ge\payoff(\contract_{i+1},\type_i)$; 
the inequality (\ref{SubEqu: Sufficient IC k<u 2nd}) ensures the type-$\type_{i+1}$ user will not select the contract item $\contract_{i}$, i.e., $\payoff(\contract_{i+1},\type_{i+1})\ge\payoff(\contract_{i},\type_{i+1})$.
Similar intuitions apply to (\ref{Equ: Sufficient IC k>u}).

So far we have derived the necessary and sufficient conditions for a feasible contract.
Next we will analyze the optimality of the contract.

\section{Contract Optimality \label{Section: Contract Optimality}}  
We will study the MNO's optimal contract problem (i.e., Problem \ref{Problem: Optimal Contract}) based on the necessary and sufficient conditions for a feasible contract.
To reveal the key insights, we will investigate the contract optimality in the following two steps.
\begin{itemize}
	\item First, in Problem \ref{Problem: Optimal Pricing type}, we derive the MNO's optimal prices $\{\pcap_i^*(\bm{\dcap}),1\leq i\leq{KM}\}$ given a \textit{feasible} choice of  data caps $\bm{\dcap}=\{\dcap_i,1\le i\le KM\}$ where $\dcap_1\le\dcap_2\le...\le\dcap_{KM}$. 
	\item Second, in Problem \ref{Problem: Optimal cap type}, we substitute the optimal prices $\{\pcap_i^*(\bm{\dcap}),1\leq i\leq{KM}\}$ to the MNO's profit function and derive the optimal data cap $\bm{\dcap}^*=\{\dcap_i^*,1\le i\le{KM}\}$.
\end{itemize}

\subsection{Optimal Pricing}
In Problem \ref{Problem: Optimal Pricing type}, we compute  the MNO's optimal prices, denoted by $\{\pcap_i^*(\bm{\dcap}), 1\le i \le {KM}\}$, given a feasible data cap allocation $\bm{\dcap}$, i.e., $\dcap_1\le\dcap_2\le...\le\dcap_{KM}$.
Note that the constraints (\ref{Equ: Sufficient IR u}), (\ref{Equ: Sufficient IC k>u}), and (\ref{Equ: Sufficient IC k<u}) are the sufficient conditions in Theorem \ref{Theorem: sufficien condition}.
Hence the solution $\{\pcap_i^*(\bm{\dcap}), 1\le i \le {KM}\}$ together with the given data cap $\bm{\dcap}$ must be a feasible contract.

\begin{problem}[Optimal Prices]\label{Problem: Optimal Pricing type}
	\begin{equation}\label{Equ: Optimal price type}
	\begin{aligned}
	&\max  \sum\limits_{i=1}^{KM} q(\type_i) \Big[ \pcap_i+P(\dcap_i,\type_i) -c\cdot U(\dcap_i,\type_i)-J( \dcap_i ) \Big] \\
	& \ \textit{s.t. }\ (\ref{Equ: Sufficient IR u}),(\ref{Equ: Sufficient IC k<u}),(\ref{Equ: Sufficient IC k>u})\\
	& \ \textit{var. }\ \pcap_i,1\le  i\le {KM}. 
	\end{aligned}
	\end{equation}
\end{problem} 

Next we characterize the optimal prices $\{\pcap_i^*(\bm{\dcap}), 1\le i \le {KM}\}$ in Theorem \ref{Theorem: Optimal Pricing type}.
The proof is given in Appendix \ref{Appendix: Optimal Pricing}.

\begin{theorem}[Optimal Pricing Policy]\label{Theorem: Optimal Pricing type}
	Given a set of feasible data caps $\bm{\dcap}$ satisfying $\dcap_1\le\dcap_2\le...\le\dcap_{KM}$. 
	The optimal pricing policy for the MNO, denoted by $\{\pcap_i^*(\bm{\dcap}),1\le i\le {KM}\}$, is
	\begin{subequations}\label{Equ: Optimal Pricing Policy}
	\begin{align}
	& \pcap^*_i(\bm{\dcap})=L\left(\dcap_i,\type_i\right),  &\text{if } i= \epsilon,\label{Equ: Optimal Pricing Policy 1}\\
	& \pcap^*_i(\bm{\dcap})= \pcap^*_{i+1}(\bm{\dcap})+\eta^+\left(\type_i,\dcap_i,\dcap_{i+1}\right), 	&\text{if } i< \epsilon,\\
	& \pcap^*_i(\bm{\dcap})= \pcap^*_{i-1}(\bm{\dcap})+\eta^-\left(\type_i,\dcap_i,\dcap_{i-1}\right),	&\text{if } i> \epsilon.	
	\end{align}
	\end{subequations}	
\end{theorem}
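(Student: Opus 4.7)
The plan is to treat Problem~2 as a linear program in the price vector $(\pcap_1, \ldots, \pcap_{KM})$ under the feasibility constraints (\ref{Equ: Sufficient IR u}), (\ref{Equ: Sufficient IC k<u}), (\ref{Equ: Sufficient IC k>u}) from Theorem~\ref{Theorem: sufficien condition}. Since each weight $q(\type_i) > 0$, the objective is strictly increasing in every $\pcap_i$, so the optimum lies on the upper boundary of the feasible polytope. The constraint set has a chain structure fanning outward from the index $\epsilon$: the only upper bound that does not couple to another decision variable is the IR constraint $\pcap_\epsilon \le L(\dcap_\epsilon, \type_\epsilon)$; the remaining upper bounds are $\pcap_i \le \pcap_{i+1} + \eta^+(\type_i,\dcap_i,\dcap_{i+1})$ for $i < \epsilon$ and $\pcap_i \le \pcap_{i-1} + \eta^-(\type_i,\dcap_i,\dcap_{i-1})$ for $i > \epsilon$.

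First I would set $\pcap_\epsilon^* = L(\dcap_\epsilon, \type_\epsilon)$, saturating the only required IR constraint (the others follow from Lemma~\ref{Lemma: IR-Transitivity}). Then I would propagate outward: recursively define $\pcap_i^* = \pcap_{i+1}^* + \eta^+(\type_i, \dcap_i, \dcap_{i+1})$ for $i = \epsilon-1, \ldots, 1$ and $\pcap_i^* = \pcap_{i-1}^* + \eta^-(\type_i, \dcap_i, \dcap_{i-1})$ for $i = \epsilon+1, \ldots, KM$. By construction these formulas saturate every explicit upper-bound constraint appearing in Theorem~\ref{Theorem: sufficien condition}.

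The main obstacle is verifying that this candidate is feasible, i.e., that the reverse-direction PIC lower bounds (\ref{SubEqu: Sufficient IC k<u 2nd}) and (\ref{SubEqu: Sufficient IC k>u 2nd}) also hold. For $i < \epsilon$, the required inequality reduces to
\[
\eta^+(\type_i, \dcap_i, \dcap_{i+1}) + \eta^-(\type_{i+1}, \dcap_{i+1}, \dcap_i) \ge 0.
\]
Expanding via $L(\dcap, \type) = \val \dmean - [\val\cut + \adfee(1-\cut)] A(\dcap)$, the sum factorizes as
\[
\bigl[A(\dcap_{i+1}) - A(\dcap_i)\bigr] \cdot \bigl\{[\val_i\cut_i + \adfee(1-\cut_i)] - [\val_{i+1}\cut_{i+1} + \adfee(1-\cut_{i+1})]\bigr\}.
\]
The first factor is non-positive because $A(\cdot)$ is decreasing in $\dcap$ and $\dcap_i \le \dcap_{i+1}$; the second factor is non-positive because the type ordering in (\ref{Equ: order slope}) sorts indices by $\slope(\dcap, \type) = -[\val\cut + \adfee(1-\cut)] \, \partial A(\dcap)/\partial\dcap$, whose sign-preserving factor is exactly $\val\cut + \adfee(1-\cut)$. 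The product of two non-positive quantities is non-negative, so the lower bound holds. An entirely symmetric calculation handles the case $i > \epsilon$.

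Finally, optimality follows by induction along the chain. Any feasible profile satisfies $\pcap_\epsilon \le L(\dcap_\epsilon, \type_\epsilon) = \pcap_\epsilon^*$; given $\pcap_{i+1} \le \pcap_{i+1}^*$, the constraint (\ref{SubEqu: Sufficient IC k<u 1st}) yields $\pcap_i \le \pcap_{i+1} + \eta^+ \le \pcap_i^*$ for $i < \epsilon$, and analogously for $i > \epsilon$. Hence the constructed profile coordinatewise dominates every feasible profile, and because each $q(\type_i) > 0$ the objective is strictly maximized by the candidate. The IR feasibility for all $i \ne \epsilon$ is inherited automatically via Lemma~\ref{Lemma: IR-Transitivity}.
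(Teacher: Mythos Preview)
Your proposal is correct and follows essentially the same approach as the paper: saturate the upper-bound constraints of Theorem~\ref{Theorem: sufficien condition} starting at index $\epsilon$ and propagate outward, then show no feasible price vector can exceed the candidate coordinatewise. Your explicit verification of the lower-bound constraints (\ref{SubEqu: Sufficient IC k<u 2nd}) and (\ref{SubEqu: Sufficient IC k>u 2nd}) via the factorization is more detailed than the paper, which simply declares feasibility ``obvious'', and your direct induction for optimality is the contrapositive of the paper's contradiction argument, but the underlying chain of inequalities is identical.
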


Comparing Theorem \ref{Theorem: sufficien condition} and Theorem \ref{Theorem: Optimal Pricing type}, we notice that, given a set of feasible data caps $\bm{\dcap}$, the MNO should charge the highest prices satisfying  the IC and IR conditions.

Next we further study the MNO's optimal data caps $\bm{\dcap}$ based on the optimal prices $\{\pcap_i^*(\bm{\dcap}),1\le i\le {KM}\}$ in (\ref{Equ: Optimal Pricing Policy}).

\subsection{Optimal Data Caps}
For notation simplicity, we first introduce the concept of \textit{virtual payoff difference}.
For a given data cap $\dcap$, the virtual payoff differences between the type-$\type_i$ user and his neighbor user types (i.e., $\type_{i-1}$ and $\type_{i+1}$) are defined as
\begin{subequations}\label{Equ: virtual payoff differences}
\begin{align}
& \rho^-_i(\dcap) \triangleq L( \dcap,\type_i )-L( \dcap,\type_{i-1} ) ,  \\
& \rho^+_i(\dcap) \triangleq L( \dcap,\type_i )-L( \dcap,\type_{i+1} ) .
\end{align}
\end{subequations}

We substitute the optimal prices (\ref{Equ: Optimal Pricing Policy}) derived in Theorem \ref{Theorem: Optimal Pricing type} into the objective function of Problem \ref{Problem: Optimal Pricing type}, and write the MNO's objective function (i.e., the total profit) as follows: 
\begin{equation}
\begin{aligned}
\sum_{i=1}^{KM} G_i\left(\dcap_i\right) ,
\end{aligned}
\end{equation}
where $G_i(\cdot)$ is given by (\ref{Equ: Gi type}), and $h^i=\textstyle\sum_{t=1}^{i-1}q(\type_t)$ and $h_i=\sum_{t=i+1}^{KM}q(\type_t)$ are two constants related to the distribution of the user types.
Thus we  get the following optimization problem over the $KM$ data caps.
 
\begin{figure*}
\begin{equation}\label{Equ: Gi type}
\begin{aligned}
& G_i(\dcap)= \left\{
\begin{aligned}
& q(\type_i)V(\dcap,\type_i) -q(\type_i)\left[ c\cdot U(\dcap,\type_i ) + J(\dcap) \right], \qquad\qquad\qquad\qquad\quad\ \ \text{if } i\in\{1,KM\}, \\
& q(\type_i)V(\dcap,\type_i) + h^i  \rho^-_i(\dcap) -q(\type_i)\left[ c U(\dcap,\type_i ) + J(\dcap) \right], \qquad\qquad\quad\  \text{if } i\in\{2,3,...,\epsilon-1\}, \\
& q(\type_i)V(\dcap,\type_i) + h^i \rho^-_i(\dcap)+ h_i \rho^+_i(\dcap) -q(\type_i)\left[ c U(\dcap,\type_i ) + J(\dcap) \right], \ \text{if } i={\epsilon}, \\
& q(\type_i)V(\dcap,\type_i) + h_i \rho^+_i(\dcap) -q(\type_i)\left[ c U(\dcap,\type_i ) + J(\dcap) \right], \qquad\qquad\quad\ \   \text{if } i\in\{\epsilon+1,\epsilon+2,...,KM-1\},
\end{aligned}
\right. 
\end{aligned}
\end{equation}
\hrulefill
\vspace{4pt}
\end{figure*}

\begin{problem}[Optimal Data Caps]\label{Problem: Optimal cap type}
	\begin{subequations}\label{Equ: Optimal cap type}
	\begin{align} 
	&  \max  \sum\limits_{i=1}^{KM}G_i\left(\dcap_i\right) 					\label{SubEqu: Optimal cap type, objective}	\\
	& \ \textit{s.t. }\ \ \dcap_1\le\dcap_2\le...\le\dcap_{KM}\le\dmax 		\label{SubEqu: Optimal cap type, monotonicity}	 \\
	& \quad\quad\ \dcap_i\in\mathbb{N}, \ \forall\ i\in\{1,2,...,KM\} 	\label{SubEqu: Optimal cap type, integel cap}	 \\
	& \ \textit{var. }\  \dcap_i,1\le i\le {KM}.		
	\end{align}
	\end{subequations}
\end{problem}

Problem \ref{Problem: Optimal cap type} is a nonlinear integer programming with two special structures.
First, the objective function has a separable structure over each decision variable $\dcap_i$.
Second, the decision variables are monotonic.
Moreover, the convexity of Problem \ref{Problem: Optimal cap type} depends on all user types $\type_i$ for all $1\le i\le KM$ and the corresponding distribution $q(\type_i)$ for all $1\le i\le KM$.

In previous literature (e.g., \cite{gao2011spectrum,duan2014cooperative,zhang2015contract,zhang2017non}), the commonly used approach to solving Problem \ref{Problem: Optimal cap type} is \textit{monotonicity relaxation}.
The main idea is to first relax the monotonicity constraints (\ref{SubEqu: Optimal cap type, monotonicity}) and maximize each $G_i(\cdot)$ over the corresponding decision variable $\dcap_i$. 
If the solution obtained under the relaxation  violates the monotonicity constraints (\ref{SubEqu: Optimal cap type, monotonicity}), then one needs to adjust  the solution according to the algorithm proposed in \cite{gao2011spectrum}  to become feasible.
We refer interested readers to Appendix \ref{Appendix: Monotonicity Relaxation} for more details.
In general, the monotonicity relaxation approach is very efficient, since it only needs to deal with several single-variable optimization problems. 
However, the adjusted solution is only a locally optimal solution when the problem is not convex \cite{Zhiyuan2018MobiHoc}.
Moreover, it is difficult to analytically characterize the sub-optimality gap of the solution. 
To obtain the globally optimal solution of Problem \ref{Problem: Optimal cap type} efficiently, in Section  \ref{Subsection: Dynamic Quota Assignment Algorithm}, we will propose the Dynamic Quota Allocation Algorithm, which is one of the major contributions in this paper.

\subsection{Dynamic Quota Allocation (DQA) Algorithm\label{Subsection: Dynamic Quota Assignment Algorithm}}

\subsubsection{Basic Idea}
The basic idea of the DQA Algorithm comes from dynamic programming, i.e., breaking the original problem down into simpler sub-problems in a recursive manner \cite{bellman2013dynamic}.
Specifically, we will decompose  Problem \ref{Problem: Optimal cap type} by utilizing the separability of objective (\ref{SubEqu: Optimal cap type, objective}) and the monotonicity constraints (\ref{SubEqu: Optimal cap type, monotonicity}).
Next we introduce how to define the proper sub-problems.
\subsubsection{Level-($n,q$) Subproblem}
In the DQA Algorithm, we refer to Problem \ref{Problem: sub} as the level-($n,q$) sub-problem of Problem \ref{Problem: Optimal cap type}.
Basically, the level-($n,q$) sub-problem focuses on the optimal data caps for the smallest $n$ user types (i.e., type-$1$ to type-$n$, where $1\leq n \leq KM$) under the data cap upper bound $q$ ($0\le q\le \dmax$).
Recall that there are a total of $KM$ types of users and $\dmax$ is users' maximal possible monthly data demand.
The special case of the  level-($KM,\dmax$) sub-problem is equivalent to Problem \ref{Problem: Optimal cap type}, since the MNO does not need to offer any data cap larger than $\dmax$.

\begin{problem}[Level-$(n,q)$ Sub-problem]\label{Problem: sub}
	Given $1\le n \le KM$ and $0\le q\le \dmax$, the level-$(n,q)$ sub-problem is
	\begin{subequations}\label{Equ: subproblem}
		\begin{align}
		H(n,q)\triangleq\arg\max\   & \sum_{i=1}^{n}G_i( \dcap_i ) 						\label{SubEqu: subproblem, objective}\\
		\text{s.t. }\ & \dcap_1\le\dcap_2\le...\le\dcap_n\le q 	\label{SubEqu: subproblem, monotonicity} \\
		& \dcap_i\in\mathbb{N},\ \forall\ i\in\{1,2,...,n\} \\
		\text{var: }	&  \dcap_i,1\le i\le n .
		\end{align}
	\end{subequations}
	
\end{problem}

Here we denote $H(n,q)$ and $\bm{\dcap}^\star(n,q)=\{\dcap^\star_i(n,q),1\le i\le n\}$ as the optimal value and the optimal solution of the level-($n,q$) sub-problem (\ref{Equ: subproblem}), respectively.	
Since the level-($KM,\dmax$) sub-problem is equivalent to Problem \ref{Problem: Optimal cap type}, we have
\begin{itemize}
	\item The optimal value of Problem \ref{Problem: Optimal cap type} is $H(KM,\dmax)$.
	\item The optimal data caps in Problem \ref{Problem: Optimal cap type} is $\bm{\dcap}^\star(KM,\dmax)$, i.e.,
	$\dcap_i^*=\dcap_i^\star(KM,\dmax)$ for all $1\le i\le KM$.
	
\end{itemize}

In the following, we will show that if we know $H(n,q)$ for all $1\le n\le KM$ and $0\le q\le \dmax$, then we can directly find $\bm{\dcap}^\star(KM,\dmax)$.
To present this connection clearly, we first introduce some properties of $H(n,q)$ in Propositions \ref{Proposition: function H Recursiveness} and  \ref{Proposition: function H Monotonicity}.
The proofs are given in Appendix \ref{Appendix: H(n,q)}.

\begin{proposition}\label{Proposition: function H Recursiveness}
	For any $2\le n\le KM$ and $0\le q\le \dmax$, $H(n,q)$ has the following recursive relation
	\begin{subequations}\label{Equ: H Recursiveness}
	\begin{align}
	H(n,q) = \max\limits_{x\in\mathbb{N}}\ & H(n-1, x ) + G_n( x ) \\
	\text{s.t. } & x\le q.\label{Equ: H Recursiveness constrait}
	\end{align}
	\end{subequations}
\end{proposition}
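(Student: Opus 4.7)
The plan is to establish the recursion by the standard principle of optimality: decompose the level-$(n,q)$ sub-problem by conditioning on the value of the largest decision variable $\dcap_n$, and show that once $\dcap_n$ is fixed to some $x \le q$, the residual sub-problem over $(\dcap_1,\dots,\dcap_{n-1})$ is exactly the level-$(n-1,x)$ sub-problem.

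First I would unfold the level-$(n,q)$ objective as $\sum_{i=1}^{n-1} G_i(\dcap_i) + G_n(\dcap_n)$, noting that the separability of the objective (equation (\ref{SubEqu: subproblem, objective})) means the last term depends only on $\dcap_n$. The monotonicity chain $\dcap_1 \le \dcap_2 \le \dots \le \dcap_{n-1} \le \dcap_n \le q$ is the only coupling between $\dcap_n$ and the earlier variables.

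Next I would fix an arbitrary admissible value $x \in \{0,1,\dots,q\}$ for $\dcap_n$. Conditioning on $\dcap_n = x$, the remaining constraint on $(\dcap_1,\dots,\dcap_{n-1})$ is
\begin{equation}
\dcap_1 \le \dcap_2 \le \dots \le \dcap_{n-1} \le x, \quad \dcap_i \in \mathbb{N},
\end{equation}
which coincides exactly with the feasible set of the level-$(n-1,x)$ sub-problem. Hence the conditional optimum over $(\dcap_1,\dots,\dcap_{n-1})$ equals $H(n-1,x)$ by definition, and the best value achievable with $\dcap_n = x$ is $H(n-1,x) + G_n(x)$. Maximizing over all feasible $x \le q$ yields the right-hand side of (\ref{Equ: H Recursiveness}).

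For the reverse direction, any feasible tuple $(\dcap_1,\dots,\dcap_n)$ of the level-$(n,q)$ sub-problem has $\dcap_n = x$ for some $x \le q$, and $(\dcap_1,\dots,\dcap_{n-1})$ is then feasible for the level-$(n-1,x)$ sub-problem, giving an objective no greater than $H(n-1,x) + G_n(x) \le \max_{x \le q} \bigl[H(n-1,x) + G_n(x)\bigr]$. Combining both directions gives equality. I do not anticipate any real obstacle here, since the argument is a textbook DP decomposition; the only subtlety worth double-checking is that shrinking the upper bound from $q$ to $x$ in the sub-problem is precisely what is captured by the parameter $q$ in the definition of $H(n,q)$, which is why the recursion closes on itself rather than requiring an auxiliary quantity.
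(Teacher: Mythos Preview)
Your proposal is correct and follows essentially the same approach as the paper: both decompose the level-$(n,q)$ problem by isolating the last variable $\dcap_n=x$, observe that the residual maximization over $(\dcap_1,\dots,\dcap_{n-1})$ with upper bound $x$ is precisely the level-$(n-1,x)$ sub-problem, and then optimize over $x\le q$. Your version is, if anything, slightly more explicit in arguing both inequalities, but the underlying idea is identical.
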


The proof of Proposition \ref{Proposition: function H Recursiveness} follows the definition of the level-($n,q$) sub-problem in (\ref{Equ: subproblem}).
\begin{proposition}\label{Proposition: function H Monotonicity}
	Given any $n\in\{1,2,...,KM\}$, we have
	\begin{itemize}
		\item Function  $H(n,q)$ is non-decreasing in the data cap $q$.
		\item There exists a critical point $\hat{q}_n$ such that $H(n,q)$ does not change for any $q\ge\hat{q}_n$.
	\end{itemize}
	
\end{proposition}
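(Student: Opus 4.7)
The plan is to handle the two claims separately. The first (monotonicity in $q$) follows from a one-line feasibility argument: for $q_1 \le q_2$, every $(\dcap_1,\dots,\dcap_n)$ satisfying (\ref{SubEqu: subproblem, monotonicity}) with bound $q_1$ also satisfies it with bound $q_2$, so the feasible set of the level-$(n,q_1)$ sub-problem is contained in that of the level-$(n,q_2)$ sub-problem. Maximizing the same objective over a larger set then yields $H(n,q_1)\le H(n,q_2)$. No structural property of $G_i$ is needed here.

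For the existence of the saturation point $\hat{q}_n$, the key structural observation is that $A(\dcap)=0$ whenever $\dcap\ge\dmax$, because the demand has support $\{0,1,\dots,\dmax\}$. Tracing this through the definitions (\ref{Equ: Generalization Payoff}), (\ref{Equ: Virtual Payoff}), and (\ref{Equ: virtual payoff differences}), both the utility $V(\dcap,\type_i)$ and the virtual payoff $L(\dcap,\type_i)$ collapse to $\val\dmean$ (with $\val$ the valuation component of $\type_i$), $U(\dcap,\type_i)$ collapses to $\dmean$, and therefore $\rho^-_i(\dcap)$ and $\rho^+_i(\dcap)$ become type-dependent constants on $\{\dmax,\dmax+1,\dots\}$. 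Since the capacity cost $J(\dcap)$ is increasing in $\dcap$, the only surviving $\dcap$-dependence in (\ref{Equ: Gi type}) beyond $\dmax$ is the term $-q(\type_i)J(\dcap)$, so each $G_i$ is non-increasing on $\{\dmax,\dmax+1,\dots\}$ and its unconstrained maximizer over $\mathbb{N}$ lies in $\{0,1,\dots,\dmax\}$.

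Given this, I would conclude by induction on $n$ using the Bellman-type recursion in Proposition \ref{Proposition: function H Recursiveness}. The base case $n=1$ reduces to $H(1,q)=\max_{x\le q}G_1(x)$, which is constant for $q$ beyond the finite unconstrained maximizer $\hat{q}_1\le\dmax$ of $G_1$. For the inductive step, $H(n,q)=\max_{x\le q}\bigl[H(n-1,x)+G_n(x)\bigr]$; by the inductive hypothesis $H(n-1,\cdot)$ is constant for $x\ge\hat{q}_{n-1}$, and by the previous paragraph $G_n$ is non-increasing for $x\ge\dmax$. Hence the bracketed sum is non-increasing on $\{x:x\ge\max(\hat{q}_{n-1},\dmax)\}$, so its unconstrained maximum over $\mathbb{N}$ is attained at some $\hat{q}_n\le\max(\hat{q}_{n-1},\dmax)$, and $H(n,\cdot)$ does not change past $\hat{q}_n$.

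The main obstacle I anticipate is verifying that the cross-type coefficients $h^i\rho^-_i(\dcap)+h_i\rho^+_i(\dcap)$ inside $G_i$ indeed freeze for $\dcap\ge\dmax$, not merely their individual summands. This is immediate once one notes that $L(\dcap,\type)=\val\dmean$ whenever $A(\dcap)=0$, so each $\rho^\pm_i$ reduces to a fixed difference of valuations times $\dmean$; after this verification, the remaining dynamic-programming monotonicity argument is entirely routine.
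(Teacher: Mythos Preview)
Your proposal is correct. For the first claim (monotonicity in $q$), your feasibility-containment argument is exactly what the paper does.

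For the second claim (existence of the saturation point $\hat{q}_n$), however, you take a genuinely different and more elaborate route. The paper's argument is essentially one line: since the level-$(n,q)$ sub-problem is only defined for $0\le q\le\dmax$ (see the statement of Problem~\ref{Problem: sub}), the feasible set of the level-$(n,\dmax)$ sub-problem is a finite set of integer tuples, so an optimal solution $\bm{\dcap}^\star(n,\dmax)$ exists. The paper then simply \emph{defines} $\hat{q}_n\eq\dcap_n^\star(n,\dmax)$ and notes that for any $q$ with $\hat{q}_n\le q\le\dmax$, this same tuple is feasible for the level-$(n,q)$ sub-problem; combined with monotonicity this gives $H(n,q)=H(n,\dmax)$. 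No property of $G_i$, $A(\cdot)$, or $J(\cdot)$ is invoked.

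Your approach instead derives the saturation from structural properties of the model---that $A(\dcap)=0$ for $\dcap\ge\dmax$ freezes $V$, $L$, $U$, and $\rho_i^\pm$, leaving only the increasing capacity cost $J(\dcap)$---and then proceeds by induction through the Bellman recursion of Proposition~\ref{Proposition: function H Recursiveness}. This is more work than necessary given the domain restriction $q\le\dmax$ already built into Problem~\ref{Problem: sub}, but it has the merit of showing that saturation is intrinsic to the objective rather than an artifact of truncating the domain: your argument would survive unchanged if $q$ were allowed to range over all of $\mathbb{N}$. The paper's argument, by contrast, relies on (and exploits) the finiteness of the domain.
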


The intuitions behind Proposition \ref{Proposition: function H Monotonicity} are two-fold.
\begin{itemize}
	\item First, the non-decreasing property of $H(n,q)$ results from the constraints (\ref{SubEqu: subproblem, monotonicity}).
	Mathematically, $q$ in (\ref{SubEqu: subproblem, monotonicity}) defines the domain upper bound of the level-($n,q$) sub-problem.
	That is, a larger $q$ in (\ref{SubEqu: subproblem, monotonicity}) corresponds to a larger feasible domain, hence a no smaller optimal value $H(n,q)$.
	\item Second, $H(n,q)$ will not increase in $q$ anymore if the optimal solution of the level-($n,q$) sub-problem is smaller than the domain upper bound $q$. 
	Basically, $\hat{q}_n$ equals to the $n$-th element of the optimal solution $\bm{\dcap}^\star(n,\dmax)$ for the level-($n,\dmax$) sub-problem, i.e.,
	\begin{equation}\label{Equ: critical point q_hat}
	\hat{q}_n=\dcap_n^\star(n,\dmax).
	\end{equation}
\end{itemize}


Based on the recursiveness shown in Propositions \ref{Proposition: function H Recursiveness} and the critical points $\{\hat{q}_i,1\le i\le KM\}$ shown in Proposition \ref{Proposition: function H Monotonicity}, we are able to find the optimal solution of the level-($KM,\dmax$) sub-problem (which is the same as  Problem \ref{Problem: Optimal cap type}) according to  Theorem \ref{Theorem: solution KM,D}.
The proof is given in Appendix \ref{Appendix: H(n,q)}.
\begin{theorem}\label{Theorem: solution KM,D}
	The optimal solution $\{\dcap^\star_i(KM,\dmax),1\le i \le KM\}$ of the level-($KM,\dmax$) sub-problem is  
	\begin{equation}\label{Equ: solution KM,D}
	\dcap_i^\star(KM,\dmax) = 
	\begin{cases}
	\hat{q}_i		,									&\text{if } i=KM,\\
	\min\{ \hat{q}_i,\dcap_{i+1}^\star(KM,\dmax) \},		&\text{if } i< KM.
	\end{cases}
	\end{equation}
\end{theorem}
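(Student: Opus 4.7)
The plan is to prove Theorem \ref{Theorem: solution KM,D} by backward induction on $i$, descending from $i=KM$ to $i=1$, drawing on two ingredients already established in the paper: the dynamic-programming recursion in Proposition \ref{Proposition: function H Recursiveness} (which expresses $H(n,q)$ as $\max_{x\le q}[H(n-1,x)+G_n(x)]$), and the plateau/monotonicity structure of $H(n,\cdot)$ in Proposition \ref{Proposition: function H Monotonicity} (together with the identification $\hat{q}_n=\dcap_n^\star(n,\dmax)$ from (\ref{Equ: critical point q_hat})).

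The base case $i=KM$ is immediate from (\ref{Equ: critical point q_hat}). For the inductive step, set $q_{i+1}\triangleq\dcap_{i+1}^\star(KM,\dmax)$ and invoke the principle of optimality: once the tail components $\dcap_{i+1}^\star,\ldots,\dcap_{KM}^\star$ are fixed at their level-$(KM,\dmax)$ optimal values, the remaining prefix $(\dcap_1^\star,\ldots,\dcap_i^\star)$ must itself be an optimal solution of the level-$(i,q_{i+1})$ sub-problem. The whole claim therefore reduces to the single identity $\dcap_i^\star(i,q_{i+1})=\min\{\hat{q}_i,q_{i+1}\}$.

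I would then split into two cases. In the easy case $q_{i+1}\ge\hat{q}_i$, Proposition \ref{Proposition: function H Monotonicity} gives $H(i,q_{i+1})=H(i,\dmax)$, so the optimal solution of the level-$(i,\dmax)$ sub-problem---whose $i$-th component is $\hat{q}_i$ by definition---remains feasible and optimal for the level-$(i,q_{i+1})$ sub-problem, yielding $\dcap_i^\star(i,q_{i+1})=\hat{q}_i=\min\{\hat{q}_i,q_{i+1}\}$. In the harder case $q_{i+1}<\hat{q}_i$, the unconstrained optimizer $\hat{q}_i$ is infeasible and I must show that the optimum saturates the upper bound, $\dcap_i^\star(i,q_{i+1})=q_{i+1}$.

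The main obstacle is precisely this second case. The argument leans on the minimality characterization of $\hat{q}_i$ as the smallest critical point of $H(i,\cdot)$: any candidate maximizer $x^*<q_{i+1}$ of $J_i(x)\triangleq H(i-1,x)+G_i(x)$ on $[0,q_{i+1}]$ would, via the recursion in Proposition \ref{Proposition: function H Recursiveness} together with the non-decreasingness of $H(i,\cdot)$, force the identity $H(i,x^*)=H(i,q_{i+1})$ and propagate a plateau of $H(i,\cdot)$ on $[x^*,q_{i+1}]$ that lies strictly below the value $H(i,\hat{q}_i)$, contradicting the minimality of the threshold $\hat{q}_i$ promised by Proposition \ref{Proposition: function H Monotonicity}. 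Making this rigorous is the chief technical challenge, since the discrete $G_i$'s are not in general unimodal; the argument must carefully exploit the structure Proposition \ref{Proposition: function H Monotonicity} guarantees around $\hat{q}_i$. Once this case is closed, the backward induction immediately produces the recursive formula (\ref{Equ: solution KM,D}).
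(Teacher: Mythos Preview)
Your backward-induction skeleton, together with the reduction via the principle of optimality to the identity $\dcap_i^\star(i,q_{i+1})=\min\{\hat{q}_i,q_{i+1}\}$, is exactly the route the paper takes; its proof invokes Propositions~\ref{Proposition: function H Recursiveness} and~\ref{Proposition: function H Monotonicity} in the same order and with the same intent, though it simply asserts the key step rather than arguing it via your Case~1/Case~2 split.

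The genuine gap is in your Case~2 argument, and it cannot be repaired with the tools you cite. You claim that a maximizer $x^*<q_{i+1}$ of $J_i$ on $[0,q_{i+1}]$ forces a plateau of $H(i,\cdot)$ on $[x^*,q_{i+1}]$ at a value strictly below $H(i,\hat{q}_i)$, and that this contradicts the minimality of $\hat{q}_i$. It does not: Proposition~\ref{Proposition: function H Monotonicity} only says $\hat{q}_i$ is the first point at which $H(i,\cdot)$ reaches the \emph{global} value $H(i,\dmax)$; it places no restriction whatsoever on intermediate plateaus at lower values. When $J_i$ fails to be unimodal---a possibility you yourself flag---such a lower plateau is allowed, the constrained optimizer is genuinely $x^*<q_{i+1}$, and the formula (\ref{Equ: solution KM,D}) returns the wrong component. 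A two-type instance with $\dmax=3$, $G_1(0{,}1{,}2{,}3)=(0,10,5,15)$ and $G_2(0{,}1{,}2{,}3)=(0,0,100,0)$ already exhibits this: one computes $\hat{q}_1=3$, $\hat{q}_2=2$, the true optimum of the level-$(2,3)$ problem is $(\dcap_1^\star,\dcap_2^\star)=(1,2)$ with value $110$, whereas (\ref{Equ: solution KM,D}) outputs $(2,2)$ with value $105$. So the obstacle you isolate is real and cannot be closed from Proposition~\ref{Proposition: function H Monotonicity} alone. The paper's own proof skips this step entirely; the extraction procedure it actually employs in Appendix~\ref{Appendix: Numerical example}---take the smallest $q\le q_{i+1}$ with $H(i,q)=H(i,q_{i+1})$---is the correct rule, and it need not coincide with $\min\{\hat{q}_i,q_{i+1}\}$.
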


We elaborate Theorem \ref{Theorem: solution KM,D} as follows:
\begin{itemize}
	\item For $i=KM$, according to (\ref{Equ: critical point q_hat}), we know that the optimal data cap $\dcap_{KM}^\star(KM,\dmax)$ is the same as the critical point mentioned in Proposition \ref{Proposition: function H Monotonicity}, i.e., $\dcap_{KM}^\star(KM,\dmax)=\hat{q}_{KM}$.
	
	\item For the other user types, i.e., $i<KM$, according to (\ref{Equ: H Recursiveness constrait}) in Proposition \ref{Proposition: function H Recursiveness}, the optimal data cap $\dcap_i^\star(KM,\dmax)$ is the smaller one between the critical point $\hat{q}_i$ and the optimal data cap for the next user type  $\dcap_{i+1}^\star(KM,\dmax)$.
\end{itemize}

Here we want to emphasize that (\ref{Equ: solution KM,D}) only needs the $KM$ critical points $\{\hat{q}_i,1\le i\le KM\}$, which can be easily obtained from the table of $H(n,q)$ for all $1\le n\le KM$ and $0\le q\le \dmax$.
In Appendix \ref{Appendix: Numerical example}, we provide a numerical example to demonstrate how to find the optimal data caps based on the table $H(n,q)$.

The remaining question is how to compute the table of $H(n,q)$. 
We solve this problem by proposing the DQA Algorithm next. 

\begin{algorithm}  
	\caption{Dynamic Quota Allocation (DQA)}\label{Algorithm: bottom up} 
	\SetKwInOut{Input}{Input}
	\SetKwInOut{Output}{Output}
	\Input{All user types $\type_i$ and the distribution $q(\type_i)$.}  
	\Output{ $H(n,q)$ for all $1\le n\le  KM$ and $0\le q\le \dmax$. } 
	\textbf{Initial} $H(n,q)=0,\ \forall\ 1\le n \le KM$ and $0\le q\le \dmax$.	\\
	\For {$n=1$ \KwTo $KM$ } 
	{
		\For{$q=0$ \KwTo $\dmax$} 
		{
			\eIf{$n = 1$}
			{
				$H(n,q):= \max\limits_{x\le q} G_n(x)$. \\
			}
			{
				$H(n,q):= \arg\max\limits_{x\le q} H(n-1,x)+ G_n(x) $. \\
			}			
		}
	}
\end{algorithm}  

\begin{figure*}
	\centering
	\begin{minipage}{0.3\textwidth}
		\centering
	\setlength{\abovecaptionskip}{3pt}
	\setlength{\belowcaptionskip}{0pt}
		\includegraphics[width=0.93\linewidth]{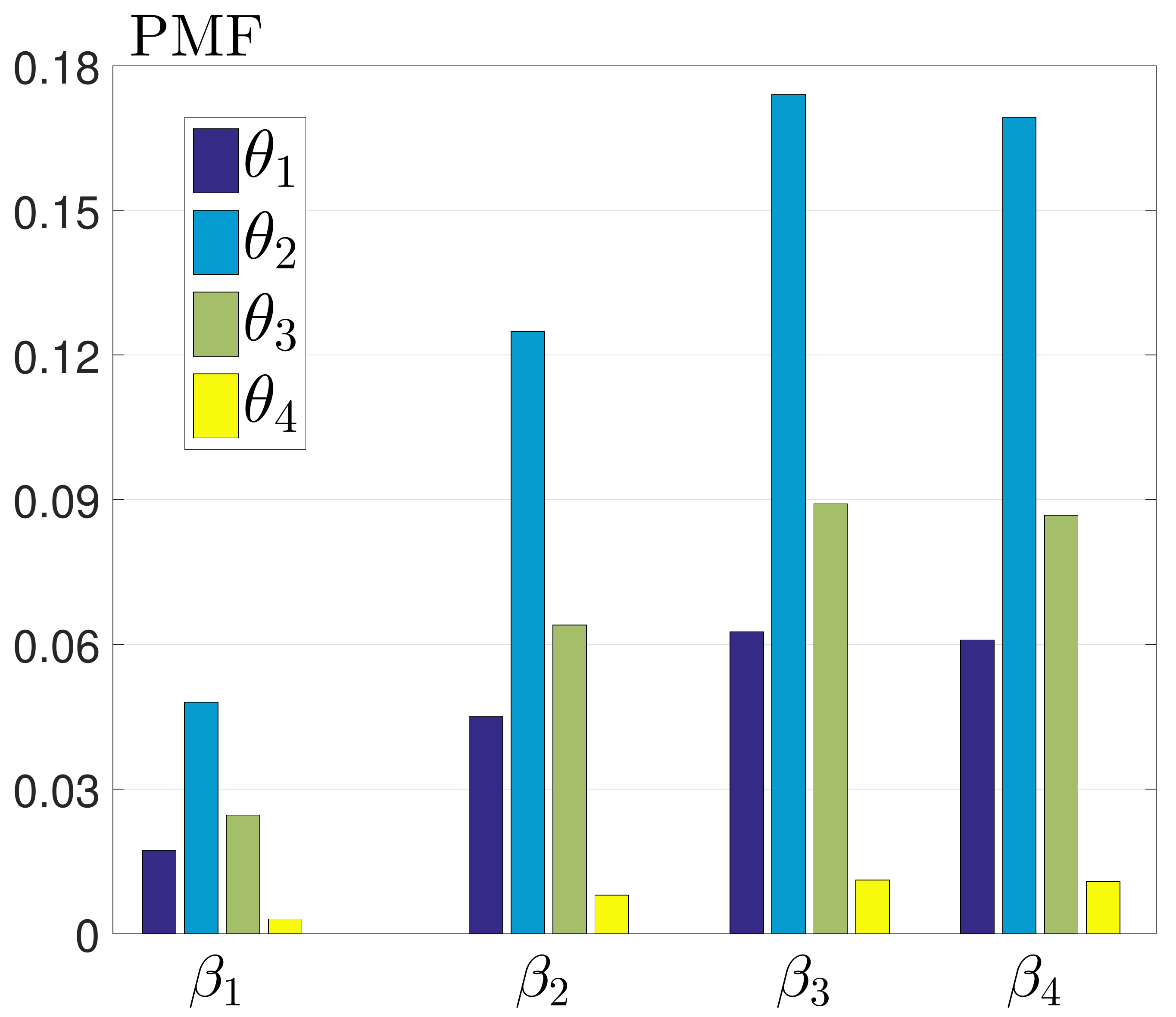}
		\caption{User type distribution.}
		\label{fig: PDF}
	\end{minipage}\qquad
	\begin{minipage}{0.3\textwidth}
		\centering
		\setlength{\abovecaptionskip}{3pt}
		\setlength{\belowcaptionskip}{0pt}
		\includegraphics[width=0.93\linewidth]{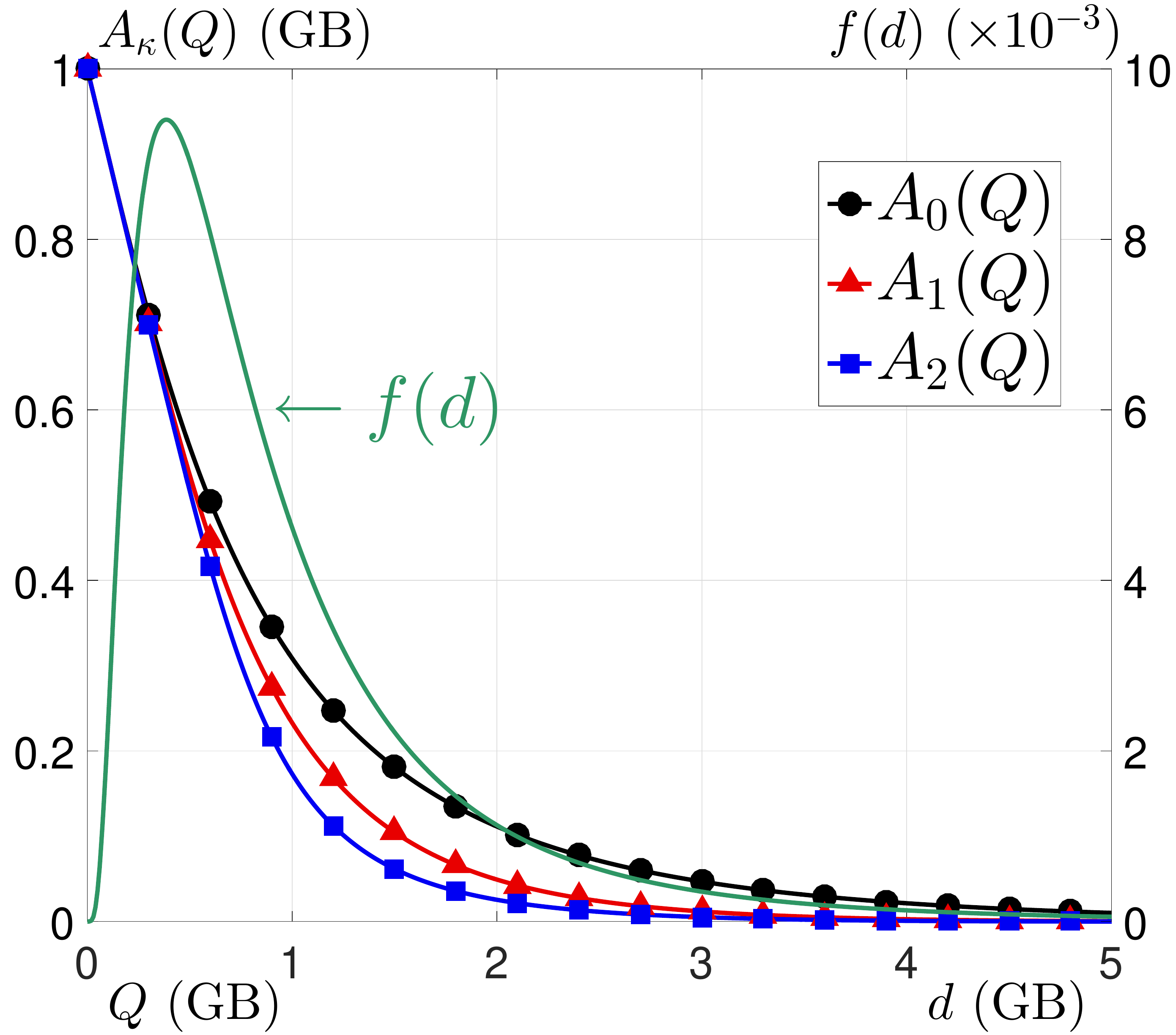}
		\caption{$f(d)$ vs. $d$ and $A_{\mechanism}(\dcap)$ vs. $\dcap$.}
		\label{fig: Demand}
	\end{minipage}\qquad
	\begin{minipage}{0.3\textwidth}
		\centering
		\setlength{\abovecaptionskip}{3pt}
		\setlength{\belowcaptionskip}{0pt}
		{\includegraphics[width=0.93\linewidth]{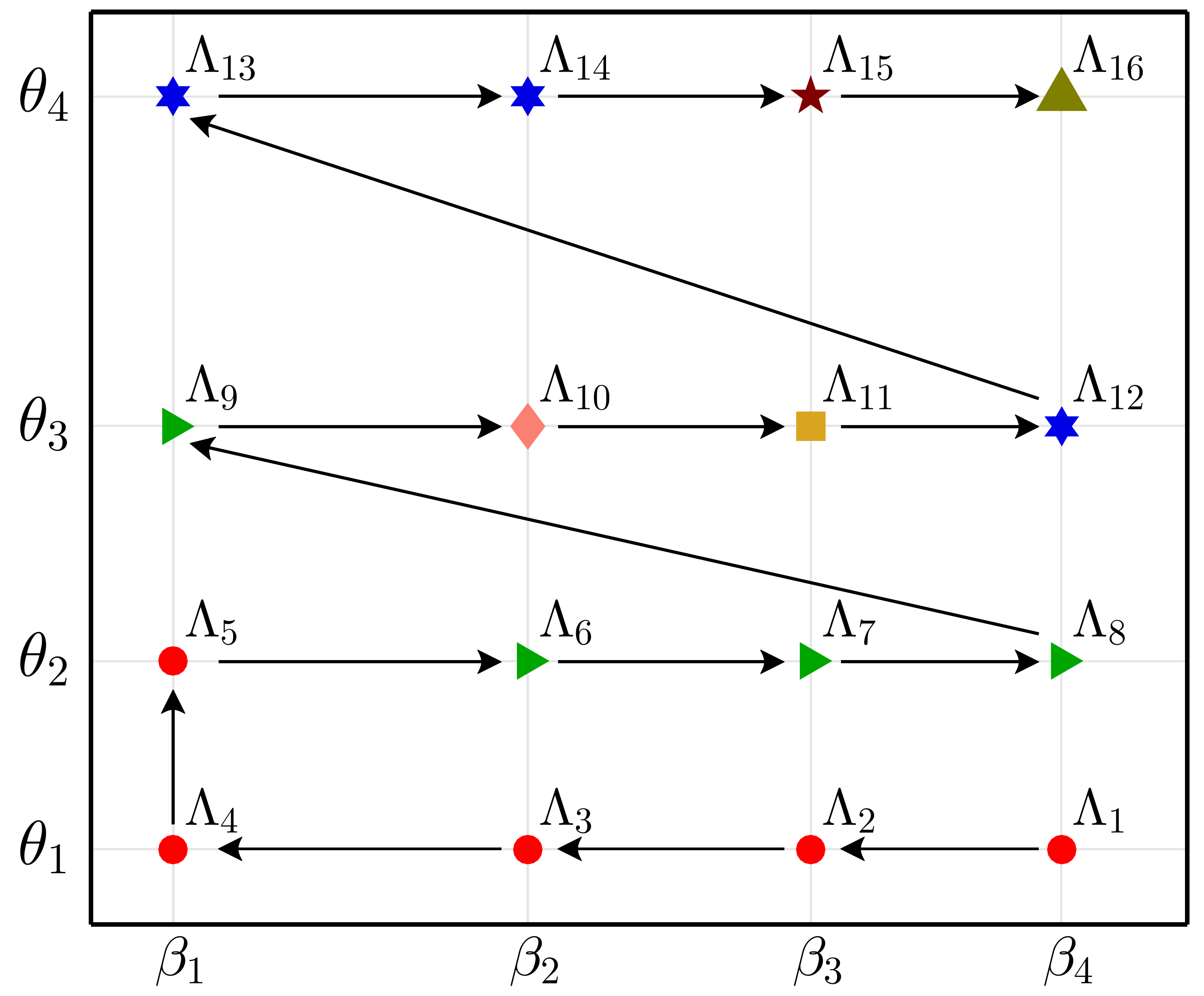}} 
		\caption{Structure of optimal contract.}
		\label{fig: CapStructure}
	\end{minipage}  
\end{figure*}

\subsubsection{DQA Algorithm}
To compute $H(n,q)$ efficiently, we need to take the advantage of its recursiveness (in Proposition \ref{Proposition: function H Recursiveness}) again.
The detailed process is shown in Algorithm \ref{Algorithm: bottom up}.
Specifically, the input of Algorithm \ref{Algorithm: bottom up} includes all of the user types $\type_i$ and the corresponding distribution (probability mass function)  $q(\type_i)$.
The output of this algorithm is the table of $H(n,q)$ for all $1\le n\le KM$ and $0\le q\le \dmax$.
In Line 5, we compute $H(1,q)$ for all $q\in\{0,1,2,...,\dmax\}$.
In Line 7, we compute $H(n,q)$ for all $n\ge2$ by utilizing the recursiveness in Proposition \ref{Proposition: function H Recursiveness}.

Algorithm \ref{Algorithm: bottom up} has a computational complexity of $O(KM|\mathcal{Q}|^2)$, where $KM$ is the number of user types and the set $\mathcal{\dcap}$ consists of all the possible data caps. 
It is actually quite efficient in the implementation process, since the MNO usually set the data caps to be the nearest hundreds of MB (e.g., 100MB, 500MB, and 1GB).
For example, suppose that the maximal data demand is $\dmax=10$GB (which is large enough in most cases).
If the MNO would optimize the data cap with 1MB as the minimal unit, then there are a total of $|\mathcal{\dcap}|=10001$ possible data caps (i.e., $\mathcal{\dcap}=\{$0MB, 1MB, 2MB, 3MB,..., 10000MB$\}$) to be considered in this algorithm.
If the MNO would optimize the data cap with 100MB as the minimal unit, then there are only a total of $|\mathcal{\dcap}|=101$ possible data caps (i.e., $\mathcal{\dcap}=\{$0MB, 100MB, 200MB, 300MB,..., 10000MB$\}$).
Hence the algorithm is efficient in the implementation progress.

So far, we have completely solved the optimal contract.
Next we evaluate the proposed multi-dimensional contract.

\section{Numerical Results\label{Section: Numerical Results}} 
We evaluate the performance of the optimal contract based on some empirical data.
Specifically, we first illustrate the optimal contract structure in Section \ref{Subsection: Optimal Contract}, then investigate how the price discrimination and the time flexibility affect the MNO's profit and users' payoffs in Section \ref{Subsection: Impact of Price Discrimination and Time Flexibility}.

\subsection{Optimal Contract\label{Subsection: Optimal Contract}}
Next we introduce the estimated user types, data demand distribution, and the MNO's cost.
Then we illustrate the optimal contract structure.

\textbf{Estimated User Types:} 
According to the market survey results (based on over two thousand users of mainland China) in \cite{Zhiyuan2018TMC}, a large proportion of users' data valuations $\val$ is within the interval of $[15, 65]$ (in RMB/GB); most people would like to shrink approximately $70\%\sim100\%$ overage data consumption through alternative networks ($\cut$ value).
We follow \cite{ma2016economic} by using the $k$-means clustering method
to partition the empirical data valuation $\val$ into four clusters with mean values $\Stheta=\{16.2, 36.1, 61.9, 96.3\}$, and partition the empirical network substitutability $\cut$ into four clusters with mean value $\Sbeta=\{0.51, 0.71, 0.84, 0.95\}$\footnote{Our previous work in \cite{Zhiyuan2018TMC} shows that the data valuation $\val$ and the network substitutability $\cut$ can be treated as independent with a Pearson correlation coefficient less than $0.05$.}.
Therefore, we consider a total of $KM=16$ user types,\footnote{In practice, the MNO can partition the empirical data into more clusters to increase the  accuracy  at the expense of additional complexity. Nevertheless, the MNO usually offers no more than ten data caps for implementation simplicity \cite{sen2013survey}.} and the corresponding distribution extracted from empirical data  is shown in Fig. \ref{fig: PDF}.

\textbf{Data Demand Distribution:} We set the minimum data unit as $1$MB.
Following the data analysis results in \cite{lambrecht2007does,nevo2016usage}, we suppose that users' monthly data demand follows a truncated log-normal distribution over the support of $[0,10^4]$ with a  mean $\dmean=10^3$, i.e., the average data demand $\dmean=1$GB and the maximal potential data demand $\dmax=10$GB \cite{Zhiyuan2018TMC}.
Fig. \ref{fig: Demand} shows the PMF $f(d)$ and the expected overage data consumption $A_{\mechanism}(\dcap)$ under different data mechanisms $\mechanism\in\{0,1,2\}$, which indicates that  $A_0(\dcap)\ge A_1(\dcap)\ge A_2(\dcap)$ for any $\dcap$.

\textbf{MNO's Cost:} As mentioned in (\ref{Equ: Cost}) of Section \ref{SubSection: MNO's Contract Formulation}, we take account of the capacity cost and the operational cost for the MNO.
To be consistent with our previous work \cite{Zhiyuan2018TMC}, we suppose that the capacity cost takes a linear form, i.e., $J(\dcap)=z\cdot\dcap$, where $z$ represents the MNO's marginal capacity cost.\footnote{Note that our method of solving the optimal contract is not limited to a specific form of the capacity cost $J(\dcap)$.}
In addition, $c$ represents the MNO's marginal operational cost.
Next we will vary the two parameters (i.e., $c$ and $z$) to illustrate their effects on the optimal contract and the corresponding MNO profit and user payoff.

\begin{figure*}
	\centering
	\setlength{\abovecaptionskip}{0pt}
	\setlength{\belowcaptionskip}{0pt}
	\subfigure[Data caps under $\mechanism=0$.]{\label{fig: General_Q_0}{\includegraphics[width=0.26\linewidth]{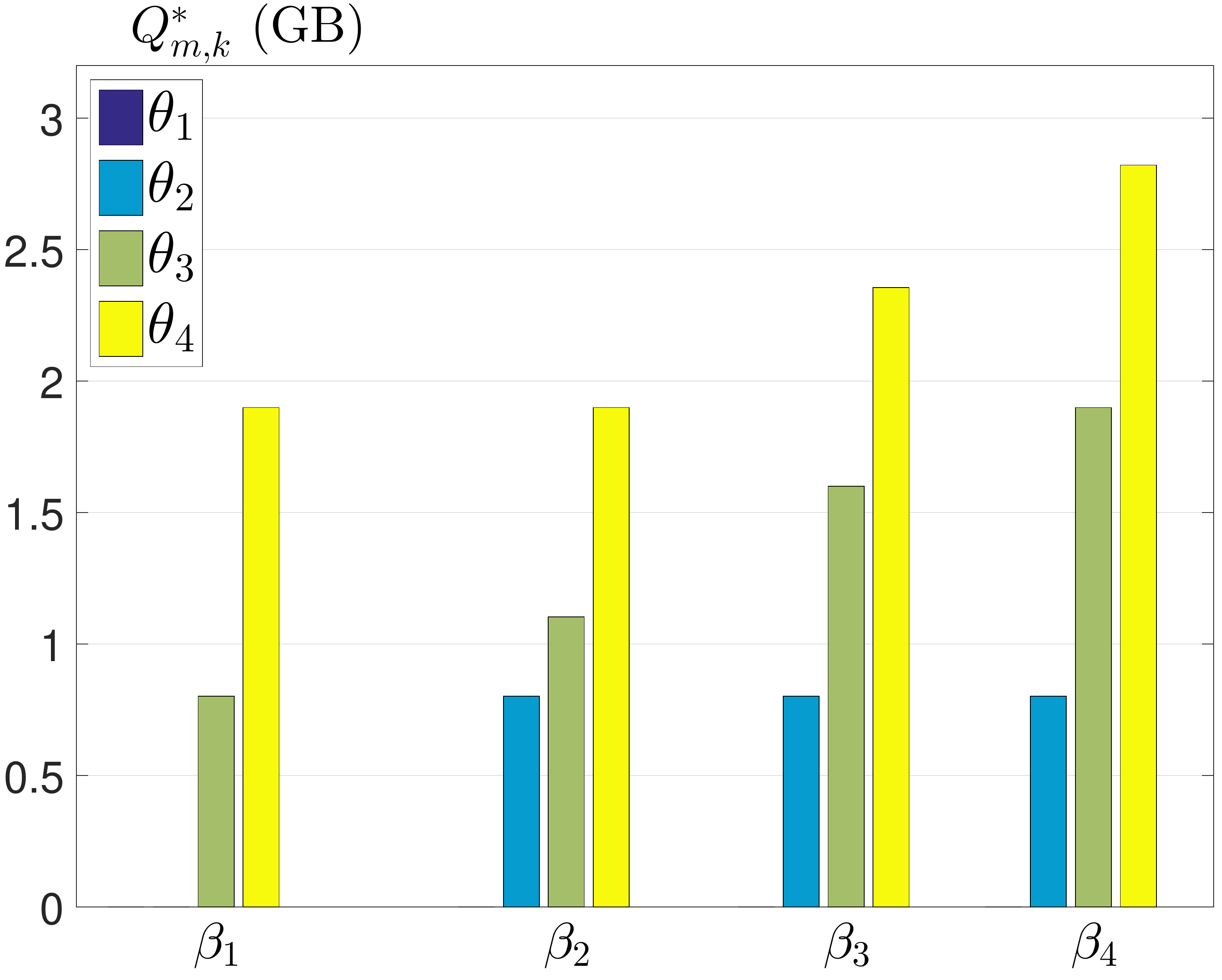}}} \qquad\quad
	\subfigure[Data caps under $\mechanism=1$.]{\label{fig: General_Q_1}{\includegraphics[width=0.26\linewidth]{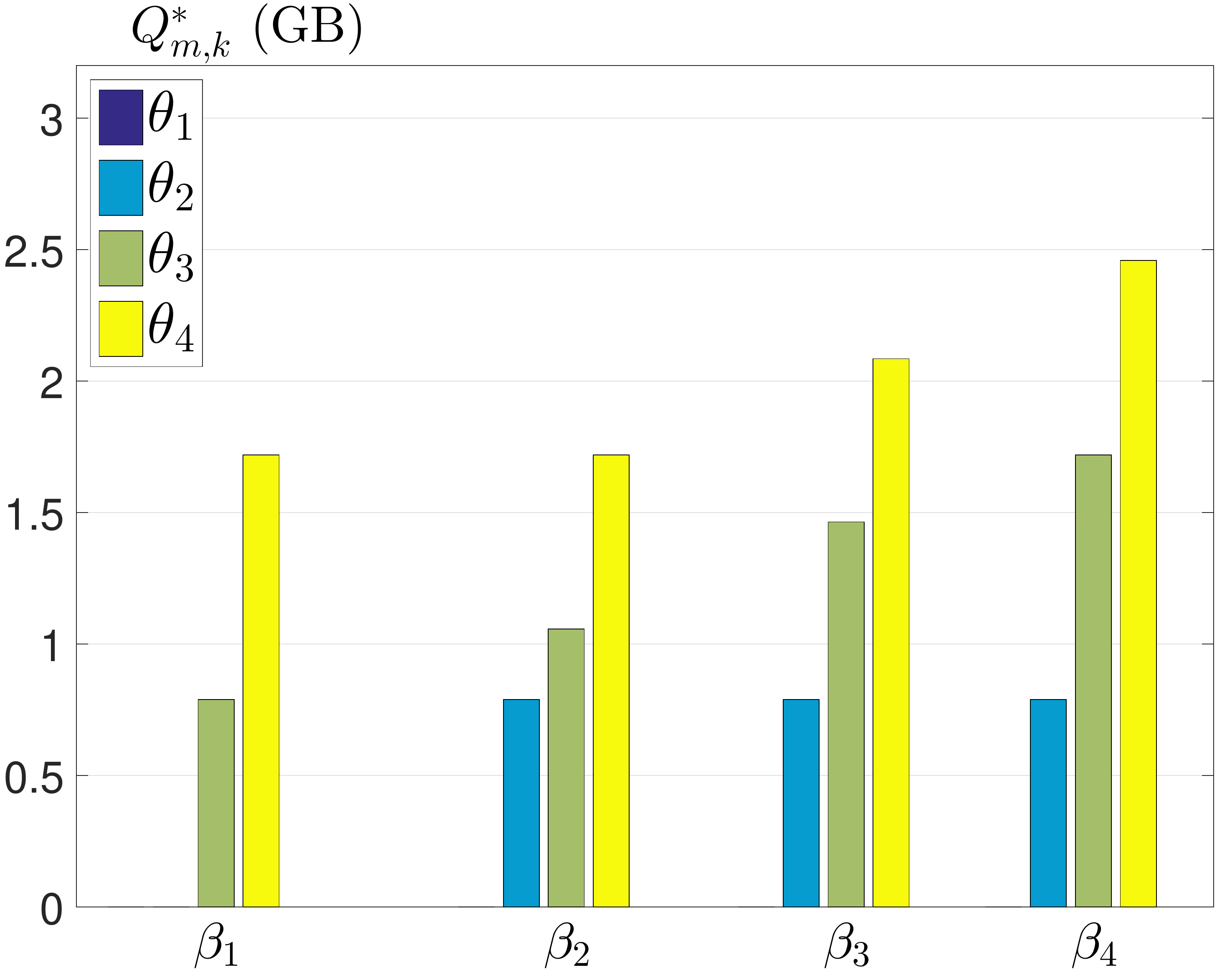}}} \qquad\quad
	\subfigure[Data caps under $\mechanism=2$.]{\label{fig: General_Q_2}{\includegraphics[width=0.26\linewidth]{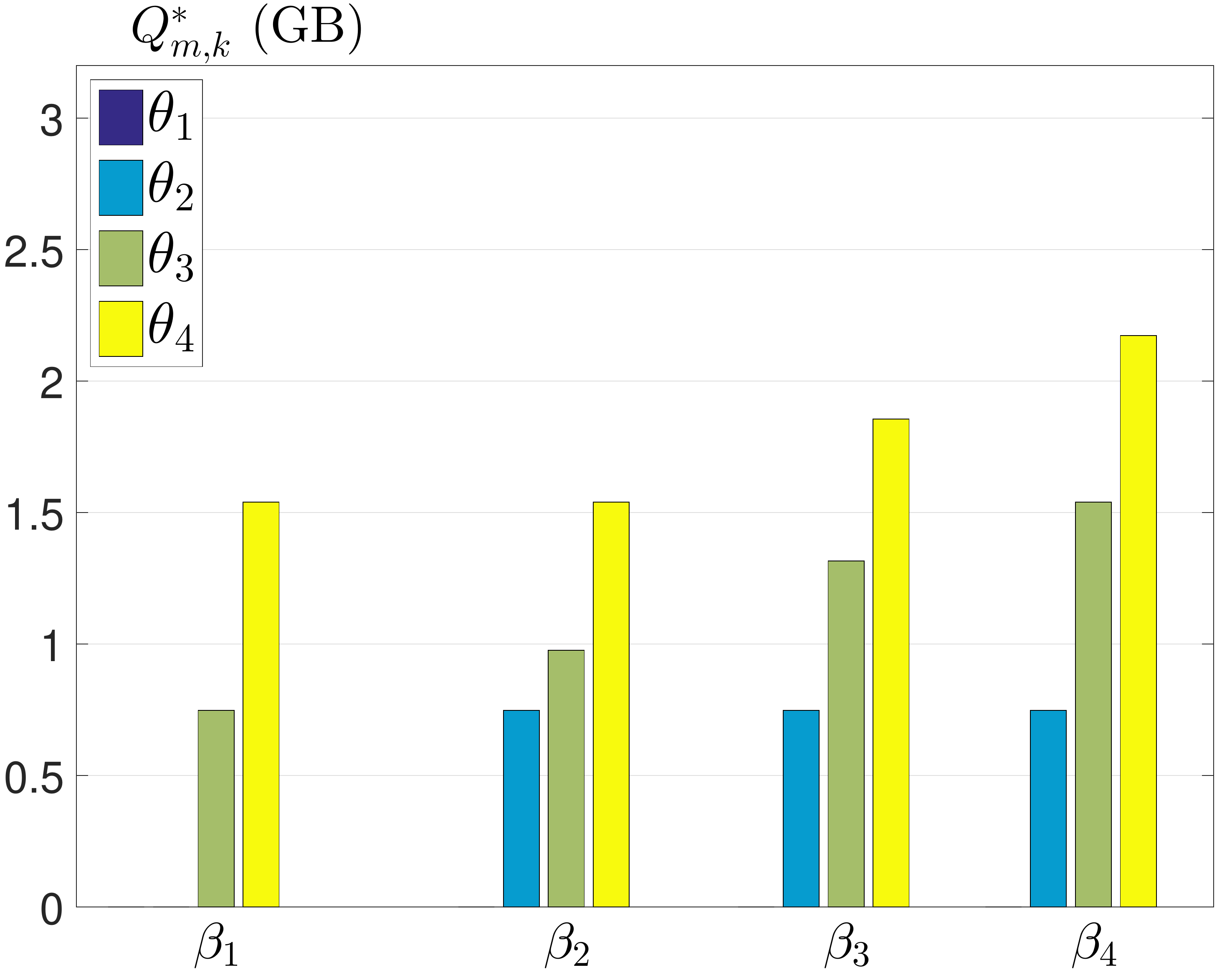}}} \\
	\subfigure[Subscription fees under $\mechanism=0$.]{\label{fig: General_PI_0}{\includegraphics[width=0.26\linewidth]{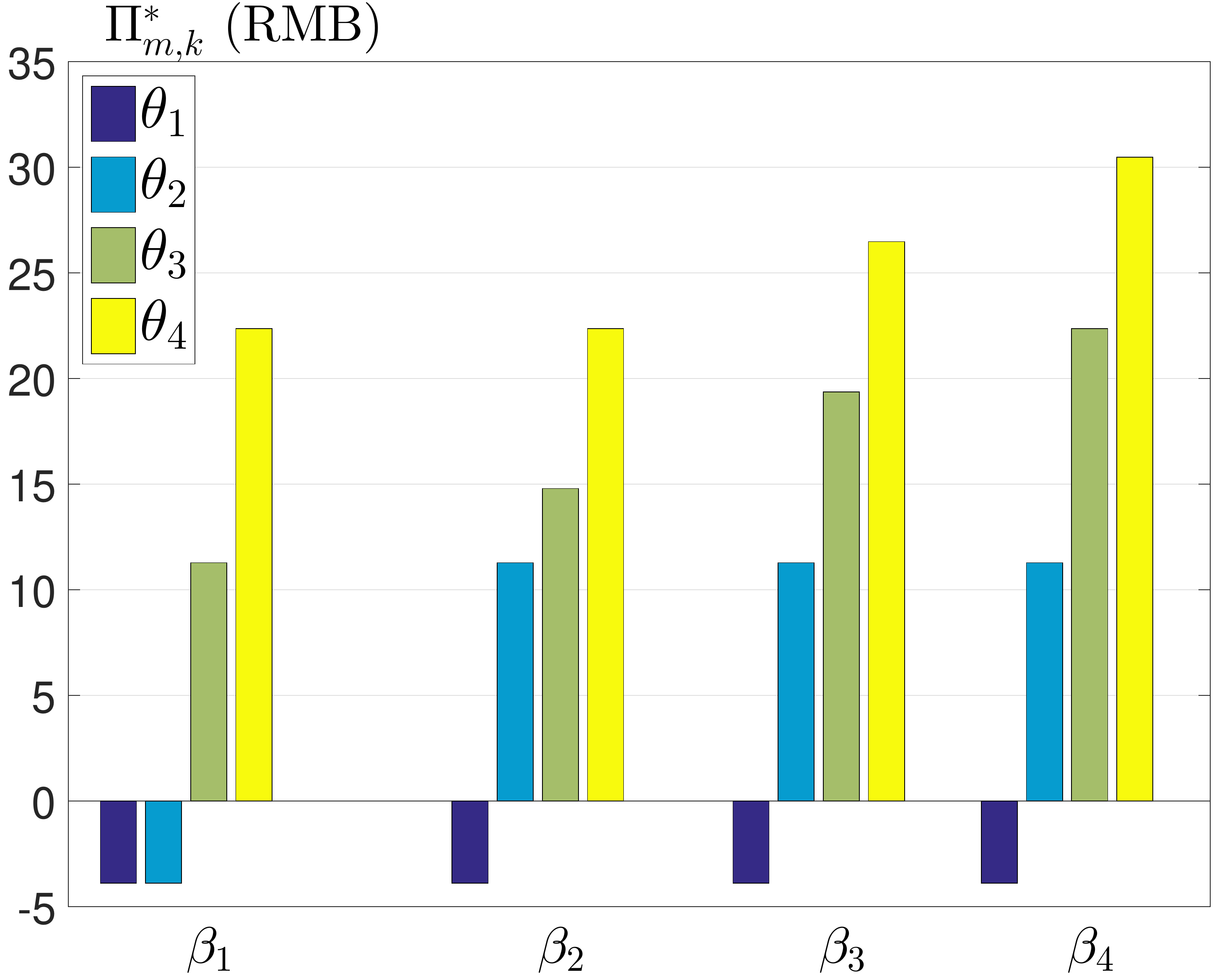}}} \qquad\quad
	\subfigure[Subscription fees under $\mechanism=1$.]{\label{fig: General_PI_1}{\includegraphics[width=0.26\linewidth]{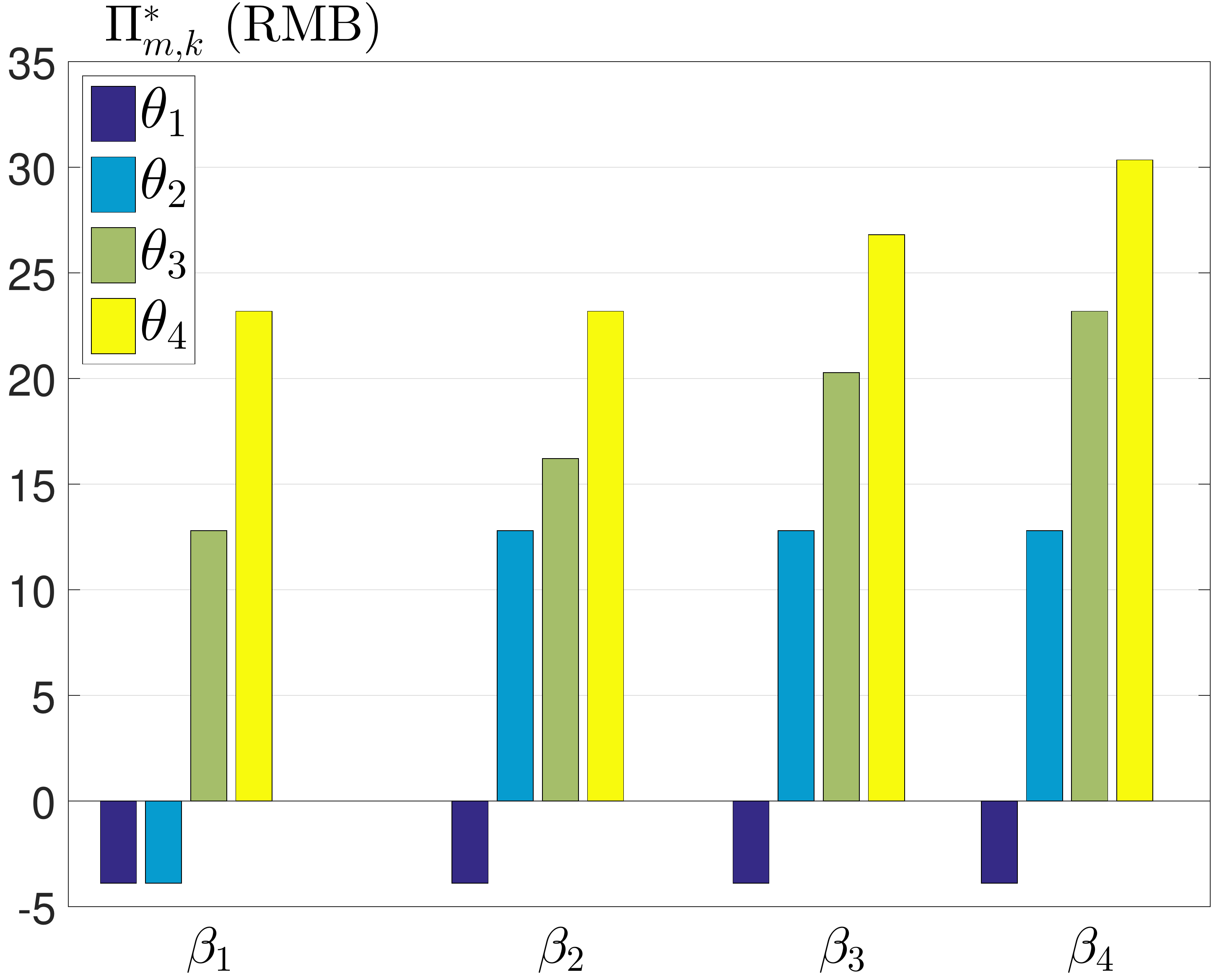}}} \qquad\quad
	\subfigure[Subscription fees under $\mechanism=2$.]{\label{fig: General_PI_2}{\includegraphics[width=0.26\linewidth]{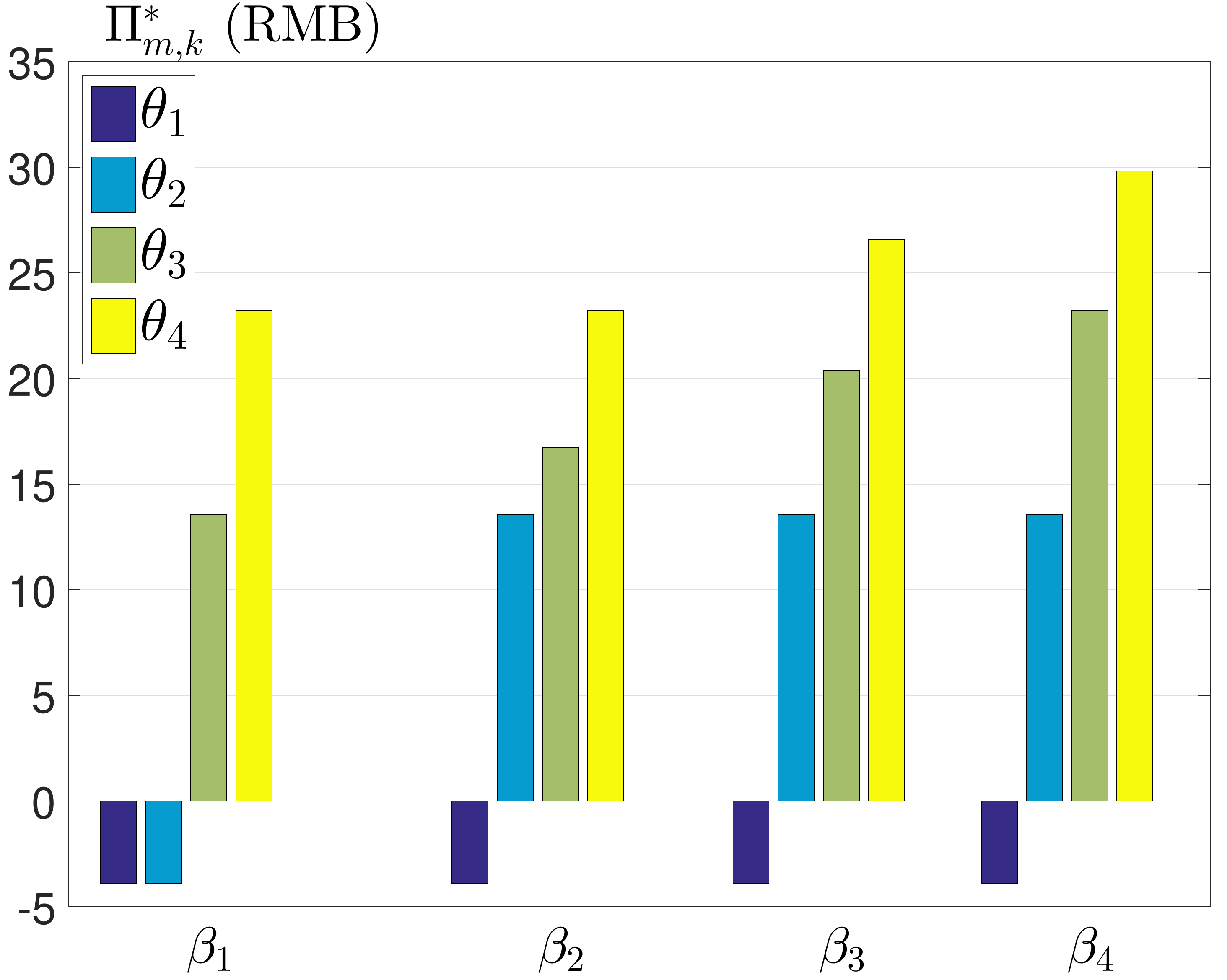}}} 	
	\caption{Optimal contract under the three data mechanisms  $\mechanism\in\{0,1,2\}$. }
	\label{fig: optimal contract}
\end{figure*}

Furthermore, we use the per-unit fee in the telecommunication market of China, i.e., $\adfee=30$ RMB/GB.
Based on the above setting, we will evaluate the optimal contract in the following three steps.

\subsubsection{Contract Structure}
We take the data mechanism $\mechanism=1$ as example to visualize the contract structure based on the users' types. 

Fig. \ref{fig: CapStructure} shows some properties of  the optimal contract item for each user type, given the MNO's cost $c=5$RMB/GB, $z=0.9$RMB/GB.
Specifically, the markers represent all the user types $\type_i=(\cut_m,\val_k)$, each of which corresponds to a network substitutability $\cut_m$ and a data valuation $\val_k$ in the horizontal and vertical axis, respectively.
Moreover, the arrows point to the non-decreasing direction of users' \textit{willingness-to-pay} as defined in (\ref{Equ: order}).
The markers of the same shape and color represent that the corresponding users types have the same contract item (i.e., pooling contract).
Therefore, the optimal contract contains seven different contract items for a total of $KM=16$ types of users.

\subsubsection{Impact of Data Mechanisms}
Next we compare the  optimal contract under different data mechanisms $\mechanism\in\{0,1,2\}$.

Fig. \ref{fig: optimal contract} plots the optimal data caps (i.e., Figs. \ref{fig: General_Q_0}, \ref{fig: General_Q_1}, \ref{fig: General_Q_2}) under the three data mechanisms and the corresponding subscription fees (i.e., Figs. \ref{fig: General_PI_0}, \ref{fig: General_PI_1}, \ref{fig: General_PI_2}). 
We have the following observations:
\begin{itemize}
	\item For all three data mechanisms, the optimal contract offers some low valuation users (e.g.,  $\type_1$) a zero data cap (e.g., pure usage-based plan in Fig. \ref{fig: General_Q_0}), together with a negative price (e.g., the five negative bars in Fig. \ref{fig: General_PI_0}).
	The pure usage-based plan reduces the MNO's capacity cost due to the zero data cap.
	Meanwhile, the negative price serves as a price discount, which ensures the subscription of these users (still satisfying the IR condition)\footnote{In practice, the MNO may allow users to pay 100RMB and enjoy the usage-based data service that is equivalent
		to 120RMB, which is actually similar to the $-20$RMB subscription fee. On the other hand, the MNOs can also directly
		subsidize 20RMB for the usage-based subscribers. The current wireless data market is based on real-name registration, hence the negative subscription fee (or discount) is not a concern.}.

	\item For each data mechanism $\mechanism$, the optimal contract tends to offer the users who have small $\cut$ values hence poor alternative network choices (e.g., the type-$(\cut_1,\val_3)$ users) a small data cap (e.g., 0.9GB in Fig. \ref{fig: General_Q_0}) together a low subscription fee (e.g., 12RMB in Fig. \ref{fig: General_PI_0}). 
	As a result, these users will end up paying a lot of  overage fee.
	However, the optimal contract offers the users who have high $\cut$ values hence good alternative network choices (e.g., the type-$(\cut_4,\val_4)$ users) a large data cap (e.g., 2.8GB in Fig. \ref{fig: General_Q_0}) together with a high subscription fee (e.g., 30RMB in Fig. \ref{fig: General_PI_0}).
	
	\item Under the optimal contract, the better time flexibility (i.e., a larger value of $\mechanism$) enables the MNO to offer a smaller data cap for the same type of users. 
	For example, the optimal data cap for type-$(\cut_4,\val_4)$ users is 2.8GB, 2.5GB, and 2.2GB in Fig. \ref{fig: General_Q_0}, \ref{fig: General_Q_1}, and \ref{fig: General_Q_2}, respectively.
	The MNO reduces its capacity cost by offering a better time flexibility (i.e., $\mechanism=0\rightarrow1\rightarrow2$).
\end{itemize}

\begin{figure*}
	\centering
	\begin{minipage}{0.49\textwidth}
		\centering
		\setlength{\abovecaptionskip}{0pt}
		\setlength{\belowcaptionskip}{0pt}
		\subfigure[Optimal data caps.]{\label{fig: c_Cap}{\includegraphics[height=0.4\linewidth]{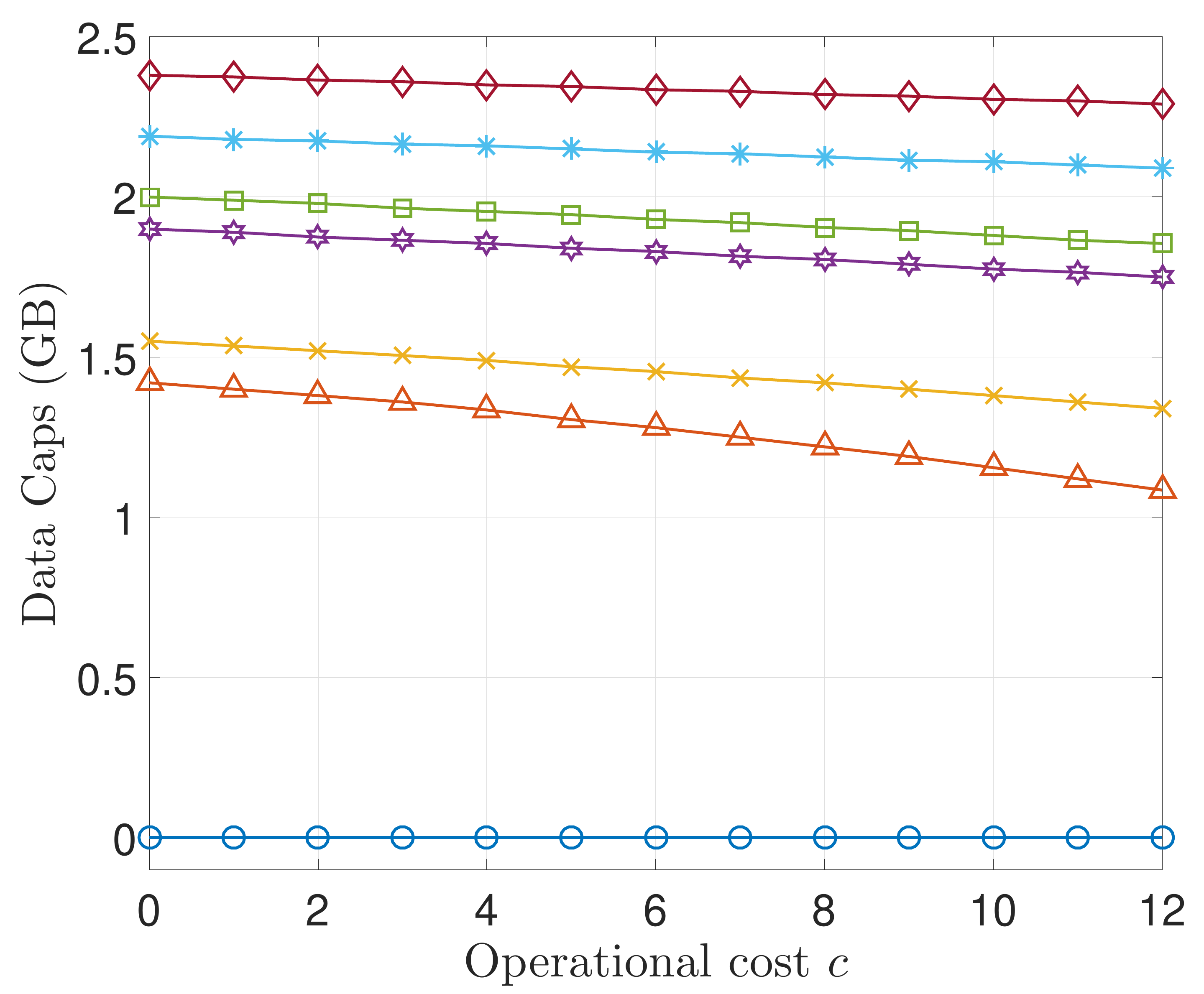}}} \
		\subfigure[Optimal subscripton fees.]{\label{fig: c_Subscription}{\includegraphics[height=0.4\linewidth]{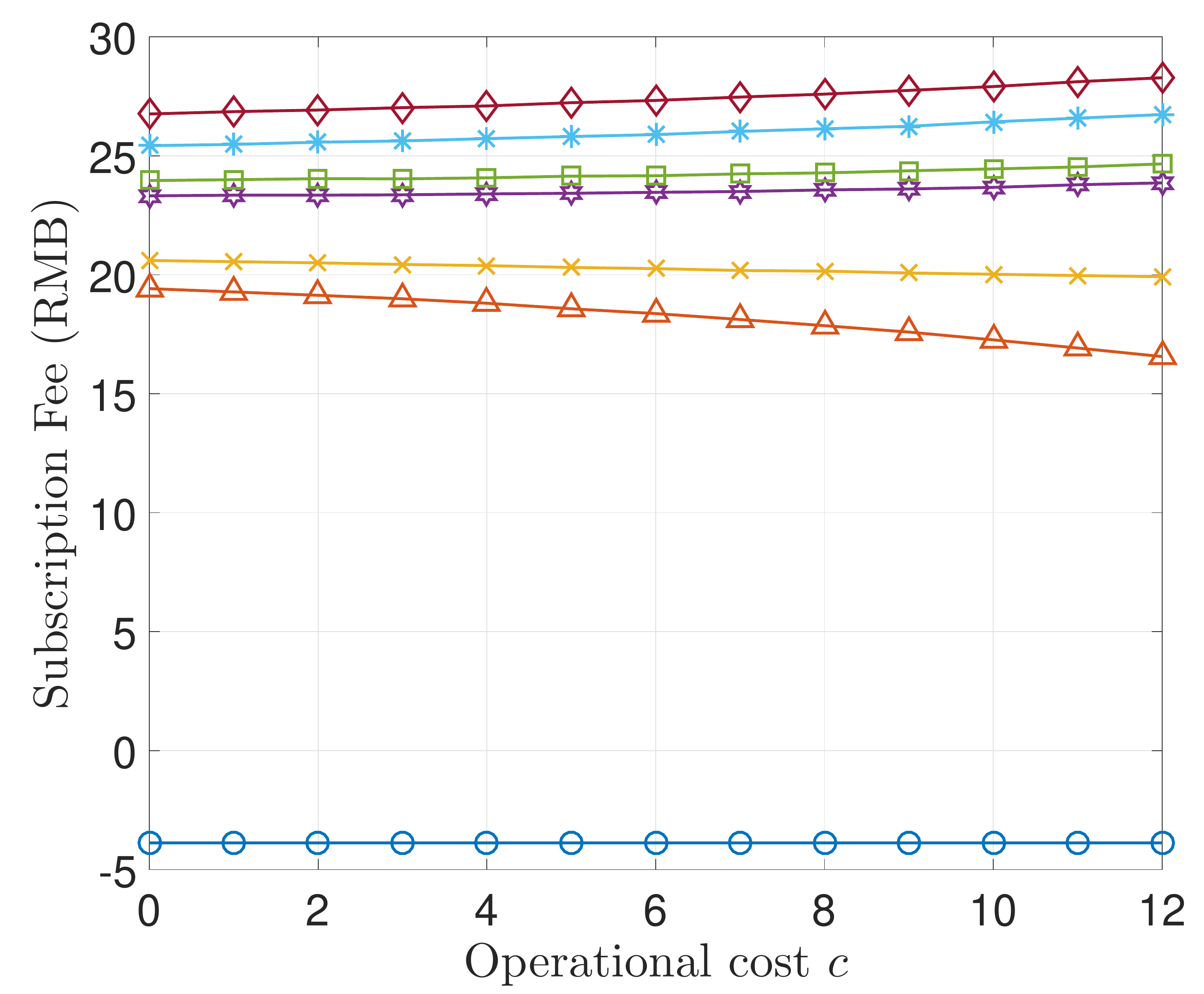}}} 
		\caption{Impact of MNO's operational cost $c$.}
		\label{fig: impact of c}
	\end{minipage}\quad
	\begin{minipage}{0.49\textwidth}
		\centering
		\setlength{\abovecaptionskip}{0pt}
		\setlength{\belowcaptionskip}{0pt}
		\subfigure[MNO's profit.]{\label{fig: c_Profit}{\includegraphics[width=0.48\linewidth]{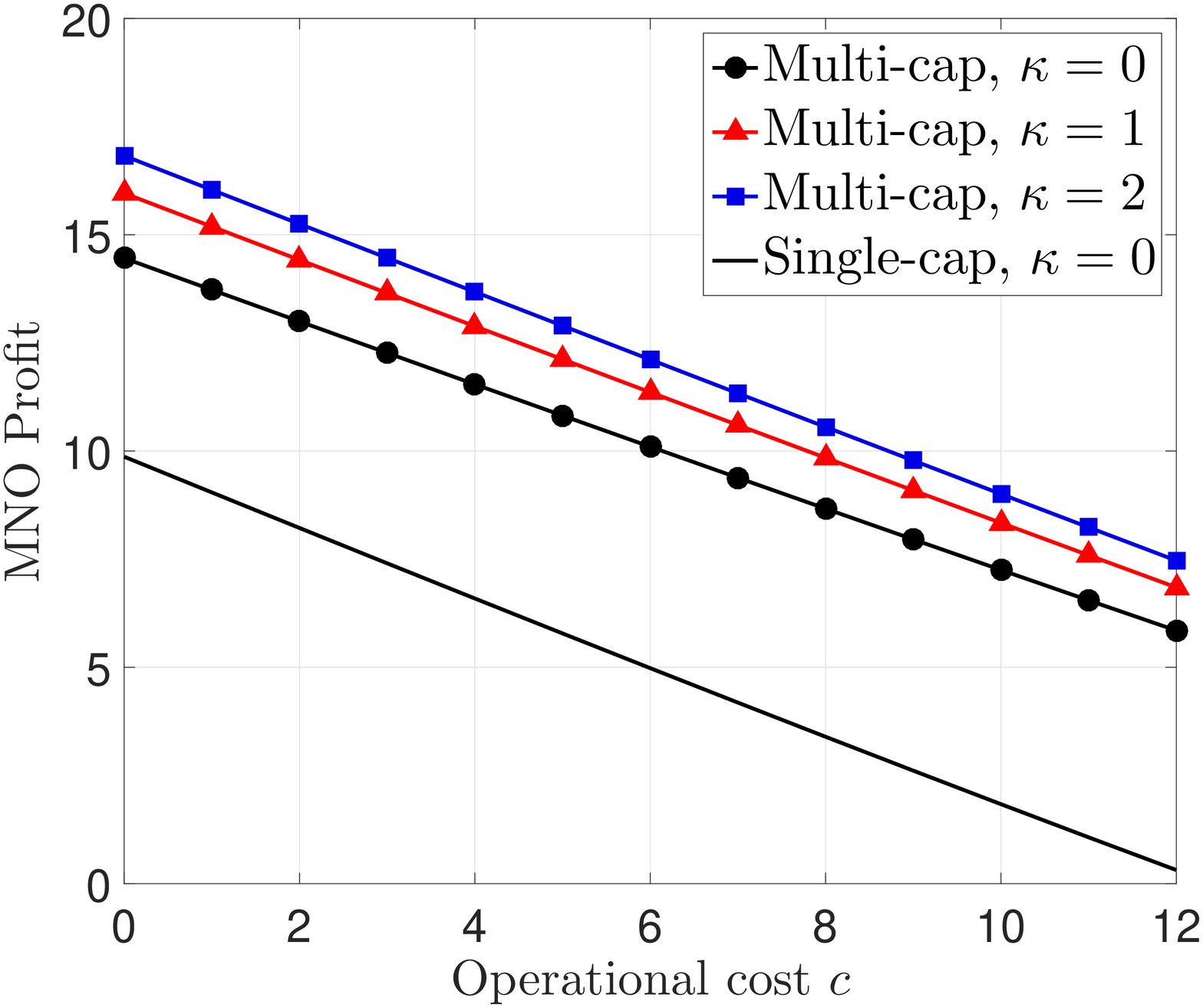}}} \ 
		\subfigure[All users' expected payoff.]{\label{fig: c_Payoff}{\includegraphics[width=0.48\linewidth]{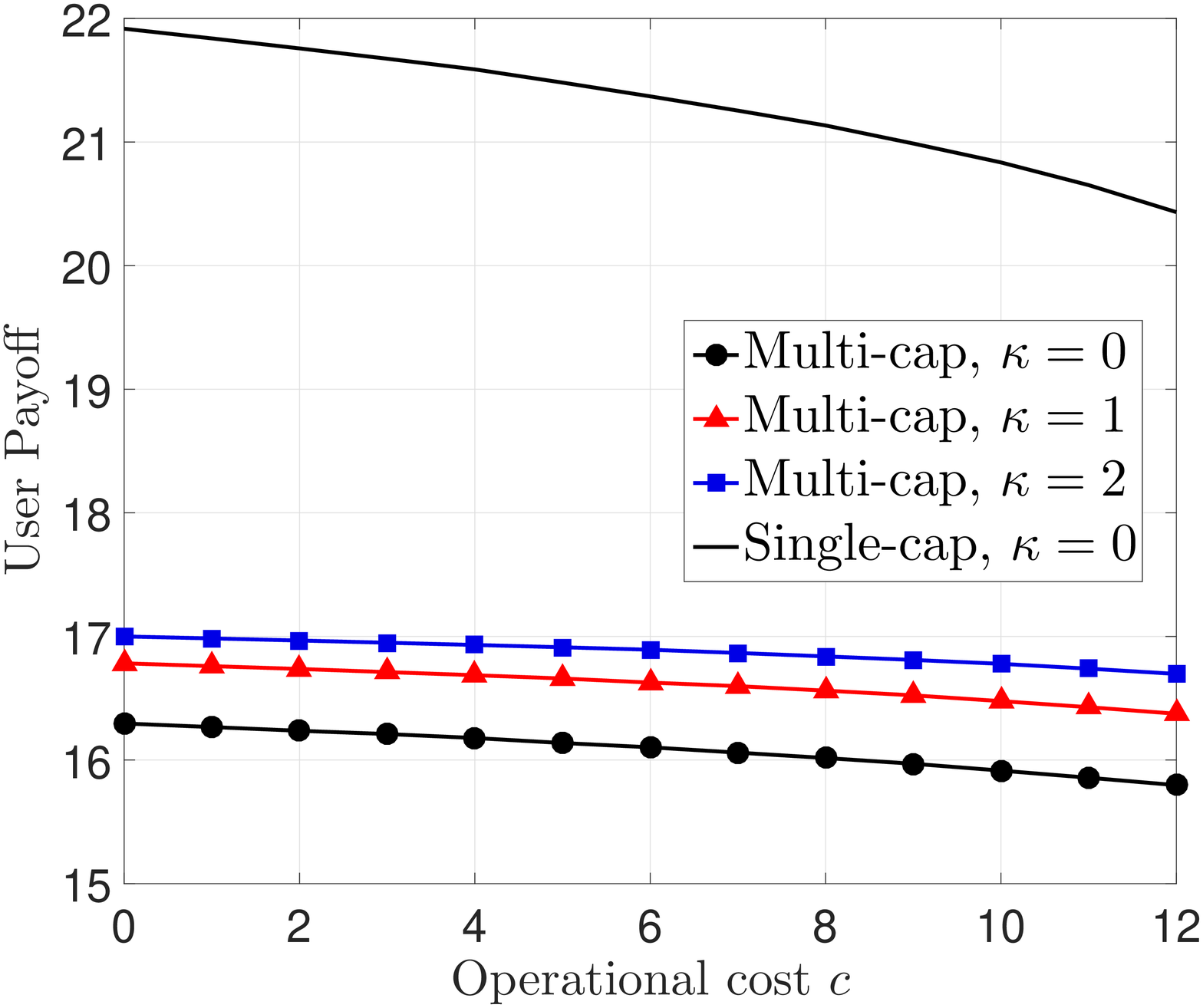}}} 
		\caption{Impact of the MNO's operational cost $c$.}
		\label{fig: c profit payoff}
	\end{minipage} 
\end{figure*}

\subsubsection{Impact of MNO's Costs}\label{Subsubsection: Impact of MNO's Costs}
We take the data mechanism $\mechanism=1$ as an example to investigate how the MNO's costs (i.e., both $c$ and $z$) affect the optimal contract items.

Fig. \ref{fig: impact of c} shows the impact of the MNO's operational cost $c$.
Specifically, there are a total of seven different contract items in the optimal contract.
The seven curves in Fig. \ref{fig: c_Cap} represent the different data caps.
We note that overall the optimal data caps (except the zero cap) decreases as the MNO's operational cost increases.
Fig. \ref{fig: c_Subscription} plots the corresponding subscription fees versus the MNO's operational cost.
We find that
\begin{itemize}
	\item The subscription fee of the zero-cap contract item (i.e., the bottom blue circle curve in Fig. \ref{fig: c_Subscription}) does not change in the operation cost $c$.
	This results from the individual rationality condition as in (\ref{Equ: Optimal Pricing Policy 1}).
	\item The subscription fees of small-cap contract items (e.g., the orange triangle and yellow cross curves in Fig. \ref{fig: c_Subscription}) decrease as the MNO's costs increase.
		While the subscription fees of the large-cap contract items (the remaining curves)  increase in the MNO's costs.
		Therefore, the large-cap contract items become  less economical to the users (in terms of the average price $\pcap/\dcap$) as the MNO's costs increase.
		This means that a profit-maximizing MNO tends to compensate its operational cost by charging those users who are willing to pay for the  large-cap contract items.
\end{itemize}

We also investigate the impact of the MNO's capacity cost $z$.
The insights are similar to those from Fig. \ref{fig: impact of c}.
We refer interested readers to Appendix \ref{Appendix: simulation} for more details.

\subsection{Impact of Price Discrimination and Time Flexibility\label{Subsection: Impact of Price Discrimination and Time Flexibility}}
We evaluate the effect of the price discrimination and the time flexibility on the MNO's profit and users' payoffs.

We consider four scenarios as shown in Table \ref{table: four cases}.
Scenario (i) represents the benchmark  single-cap scheme under the traditional data mechanism $\mechanism=0$ (studied in our previous work \cite{Zhiyuan2018TMC}).
Scenarios (ii), (iii), and (iv) represent the multi-cap scheme under different data mechanisms (studied in this paper).

\begin{table}
	\setlength{\abovecaptionskip}{1pt}
	\setlength{\belowcaptionskip}{0pt}
	\renewcommand{\arraystretch}{1}
	\caption{Four scenarios. }
	\label{table: four cases}
	\centering
	\begin{tabular}{ccccccccc}
		\toprule
		Scenario			& Multi-cap or Single-cap		& Data Mechanism \\
		\midrule
		(i)				& Single						& $\mechanism=0$ \\
		(ii)			& Multiple						& $\mechanism=0$ \\
		(iii)			& Multiple						& $\mechanism=1$ \\
		(iv)			& Multiple						& $\mechanism=2$ \\
		\bottomrule
	\end{tabular}
\end{table}

Fig. \ref{fig: c profit payoff} plots the MNO's profit and user's payoffs in the four scenarios under different operational cost $c$.
\begin{itemize}
	\item Fig. \ref{fig: c_Profit} plots MNO's profits in the four scenarios.
		Overall, the MNO's profits decrease in its operational cost $c$.
		By comparing the single-cap traditional pricing benchmark and the multi-cap traditional pricing scheme, we note that the price discrimination under our optimal contract can significantly increase the MNO's profit (180\% on average).
		By comparing the three multi-cap curves, we find that the MNO obtains a higher profit under a more time-flexible data mechanism.
		Specifically, compared with Scenario (ii) (i.e., the black circle curve), MNO's profits increases by 15\% on average in Scenario (iii) (i.e., the red triangle curve) and 25\% on average in Scenario (iv) (i.e., blue square curve).
		This implies that under the multi-cap scheme, offering a better time flexibility can further improve the MNO's profit.
	\item Fig. \ref{fig: c_Payoff} plots the users' total expected payoff in four scenarios.
		First, we observe that  users' payoff decreases in the MNO's operational cost.
		By comparing the single-cap traditional pricing benchmark and the multi-cap traditional pricing scheme, we notice that the price discrimination under our optimal contract reduces users' expected payoff (23\% on average), which means that the MNO captures more consumer surplus through the price discrimination.
		Comparing the three multi-cap schemes, we find that the time-flexible data mechanisms can improve the users' payoff. 
		Specifically, compared with Scenario (ii) (i.e., the circle curve), users' payoff increases by 5.1\% on average in Scenario (iii) (i.e., the triangle curve) and 8.2\% on average in Scenario (iv) (i.e., square curve).
\end{itemize}

We also evaluate the performance under different capacity cost $z$, which leads to similar insights.
We refer interested readers to Appendix \ref{Appendix: simulation} for more details.
Furthermore, we also evaluate the impact of the number of user types on the optimal contract performance.
Due to page limit, we refer interested readers to Appendix \ref{Appendix: type} for more details.

\section{Conclusion and Future Works\label{Section: Conclusion}}  
In this paper we studied how the MNO optimizes its multi-cap  data plans under the time-flexible data mechanisms.
Specifically, we consider an asymmetric information scenario, where each use is associated with two-dimensional private information, i.e., his data valuation and network substitutability.
We formulate the MNO's optimal multi-cap design as a multi-dimensional contract problem and derive the optimal contract under different data mechanisms.
Our analysis revealed that the slope of a user's indifference curve on the contract plane corresponds to his willingness-to-pay, and the feasible contract (satisfying IC and IR conditions) would offer a larger data cap to the user with the stronger willingness-to-pay.
Moreover, we proposed an efficient algorithm to solve the contract problem optimally.

In the future, we have two directions to extend the results of this paper.
\begin{itemize}
	\item First, we will collect more empirical data related to users' data demand distributions and try to relax the current homogeneous assumption of the data demand distribution.
	This will lead  to a new contract problem with three-dimensional private information, which will be much more challenging to solve. 
	
	\item Second, we will consider the competitive market.
	So far we have shown that the time flexibility increases the MNO's profit and users' payoffs under the multi-cap scheme, it is still necessary to analyze the role of price discrimination and time flexibility on MNOs' market competition.
	This will build upon our previous analysis of  the competitive market under the single-cap scheme. 
\end{itemize}

\bibliography{ref}

\begin{thebibliography}{10}
\providecommand{\url}[1]{#1}
\csname url@samestyle\endcsname
\providecommand{\newblock}{\relax}
\providecommand{\bibinfo}[2]{#2}
\providecommand{\BIBentrySTDinterwordspacing}{\spaceskip=0pt\relax}
\providecommand{\BIBentryALTinterwordstretchfactor}{4}
\providecommand{\BIBentryALTinterwordspacing}{\spaceskip=\fontdimen2\font plus
\BIBentryALTinterwordstretchfactor\fontdimen3\font minus
  \fontdimen4\font\relax}
\providecommand{\BIBforeignlanguage}[2]{{%
\expandafter\ifx\csname l@#1\endcsname\relax
\typeout{** WARNING: IEEEtran.bst: No hyphenation pattern has been}%
\typeout{** loaded for the language `#1'. Using the pattern for}%
\typeout{** the default language instead.}%
\else
\language=\csname l@#1\endcsname
\fi
#2}}
\providecommand{\BIBdecl}{\relax}
\BIBdecl

\bibitem{Zhiyuan2018MobiHoc}
Z.~{Wang}, L.~{Gao}, and J.~{Huang}, ``Multi-dimensional contract design for
  mobile data plan with time flexibility,'' in \emph{19th International
  Symposium on Mobile Ad Hoc Networking and Computing (MobiHoc)}, 2018.

\bibitem{Zhiyuan2017netecon}
------, ``A contract-theoretic design of mobile data plan with time
  flexibility,'' in \emph{ACM Workshop on the Economics of Networks, Systems
  and Computation (NetEcon) (in conjunction with ACM EC)}, 2017.

\bibitem{sen2013survey}
S.~Sen, C.~Joe-Wong, S.~Ha, and M.~Chiang, ``A survey of smart data pricing:
  Past proposals, current plans, and future trends,'' \emph{ACM Computing
  Surveys (CSUR)}, vol.~46, no.~2, p.~15, 2013.

\bibitem{ATT}
\BIBentryALTinterwordspacing
$\text{AT\&T}$ data plans. [Online]. Available:
  \url{https://m.att.com/shopmobile/wireless/data-plans_12-7.html}
\BIBentrySTDinterwordspacing

\bibitem{huang2013wireless}
J.~Huang and L.~Gao, ``Wireless network pricing,'' \emph{Synthesis Lectures on
  Communication Networks}, vol.~6, no.~2, pp. 1--176, 2013.

\bibitem{ATTrollover}
\BIBentryALTinterwordspacing
$\text{AT\&T Rollover Plan}$. [Online]. Available: \url{https://www.att.com}
\BIBentrySTDinterwordspacing

\bibitem{CMrollover}
\BIBentryALTinterwordspacing
$\text{China Mobile}$. [Online]. Available: \url{http://www.10086.cn}
\BIBentrySTDinterwordspacing

\bibitem{Zhiyuan2018TMC}
\BIBentryALTinterwordspacing
Z.~{Wang}, L.~{Gao}, and J.~{Huang}, ``Exploring time flexibility in wireless
  data plans,'' \emph{IEEE Transactions on Mobile Computing}, 2018. [Online].
  Available: \url{https://arxiv.org/abs/1808.10569}
\BIBentrySTDinterwordspacing

\bibitem{ZhiyuanCompetitionTMC}
\BIBentryALTinterwordspacing
Z.~{Wang}, L.~Gao, and J.~{Huang}, ``{Duopoly Competition for Mobile Data Plans
  with Time Flexibility},'' \emph{IEEE Transactions on Mobile Computing}, 2019.
  [Online]. Available: \url{https://arxiv.org/abs/1903.10878}
\BIBentrySTDinterwordspacing

\bibitem{wang2017role}
X.~Wang, R.~T. Ma, and Y.~Xu, ``The role of data cap in optimal two-part
  network pricing,'' \emph{IEEE/ACM Transactions on Networking}, vol.~25,
  no.~6, pp. 3602--3615, 2017.

\bibitem{zheng2018optimizing}
L.~Zheng, C.~Joe-Wong, M.~Andrews, and M.~Chiang, ``Optimizing data plans:
  Usage dynamics in mobile data networks,'' in \emph{IEEE International
  Conference on Computer Communications (INFOCOM)}, 2018.

\bibitem{xiong2017economic}
Z.~Xiong, S.~Feng, D.~Niyato, P.~Wang, and Y.~Zhang, ``Economic analysis of
  network effects on sponsored content: a hierarchical game theoretic
  approach,'' in \emph{IEEE Global Communications Conference (GLOBECOM)}, 2017.

\bibitem{wang2016user}
X.~Wang, L.~Duan, and R.~Zhang, ``User-initiated data plan trading via a
  personal hotspot market,'' \emph{IEEE Transactions on Wireless
  Communications}, vol.~15, no.~11, pp. 7885--7898, 2016.

\bibitem{zheng2016understanding}
L.~Zheng and C.~Joe-Wong, ``Understanding rollover data,'' in \emph{IEEE
  INFOCOM Workshop on Smart Data Pricing}, 2016.

\bibitem{wei2018novel}
Y.~Wei, J.~Yu, T.~M. Lok, and L.~Gao, ``A novel mobile data contract design
  with time flexibility,'' \emph{arXiv preprint arXiv:1806.07308}, 2018.

\bibitem{Zhiyuan2019Infocom}
Z.~Wang, L.~Gao, J.~Huang, and B.~Shou, ``Economic viability of data trading
  with rollover,'' in \emph{IEEE International Conference on Computer
  Communications (INFOCOM)}, 2019.

\bibitem{dai2015effect}
W.~Dai and S.~Jordan, ``The effect of data caps upon isp service tier design
  and users,'' \emph{ACM Transactions on Internet Technology (TOIT)}, vol.~15,
  no.~2, p.~8, 2015.

\bibitem{bolton2005contract}
P.~Bolton and M.~Dewatripont, \emph{Contract theory}.\hskip 1em plus 0.5em
  minus 0.4em\relax MIT press, 2005.

\bibitem{deneckere2011multi}
R.~Deneckere and S.~Severinov, ``Multi-dimensional screening: a solution to a
  class of problems,'' 2011.

\bibitem{mcafee1988multidimensional}
R.~P. McAfee and J.~McMillan, ``Multidimensional incentive compatibility and
  mechanism design,'' \emph{Journal of Economic Theory}, vol.~46, no.~2, pp.
  335--354, 1988.

\bibitem{rochet1998ironing}
J.-C. Rochet and P.~Chon{\'e}, ``Ironing, sweeping, and multidimensional
  screening,'' \emph{Econometrica}, pp. 783--826, 1998.

\bibitem{lambrecht2007does}
A.~Lambrecht, K.~Seim, and B.~Skiera, ``Does uncertainty matter? consumer
  behavior under three-part tariffs,'' \emph{Marketing Science}, vol.~26,
  no.~5, pp. 698--710, 2007.

\bibitem{nevo2016usage}
A.~Nevo, J.~L. Turner, and J.~W. Williams, ``Usage-based pricing and demand for
  residential broadband,'' \emph{Econometrica}, vol.~84, no.~2, pp. 411--443,
  2016.

\bibitem{ma2016usage}
R.~T. Ma, ``Usage-based pricing and competition in congestible network service
  markets,'' \emph{IEEE/ACM Transactions on Networking}, vol.~24, no.~5, pp.
  3084--3097, 2016.

\bibitem{sen2012economics}
S.~Sen, C.~Joe-Wong, and S.~Ha, ``The economics of shared data plans,'' in
  \emph{Annual Workshop on Information Technologies and Systems}, 2012.

\bibitem{ma2016economic}
Q.~Ma, L.~Gao, Y.-F. Liu, and J.~Huang, ``Economic analysis of crowdsourced
  wireless community networks,'' \emph{IEEE Transactions on Mobile Computing},
  no.~1, pp. 1--1, 2016.

\bibitem{kodratoff2014introduction}
Y.~Kodratoff, \emph{Introduction to machine learning}, 2014.

\bibitem{fudenberg1991game}
D.~Fudenberg and J.~Tirole, ``Game theory,'' 1991.

\bibitem{wang2017optimal}
X.~Wang, R.~T. Ma, and Y.~Xu, ``On optimal two-sided pricing of congested
  networks,'' \emph{Proceedings of the ACM on Measurement and Analysis of
  Computing Systems}, vol.~1, no.~1, p.~7, 2017.

\bibitem{luo2016integrated}
Y.~Luo, L.~Gao, and J.~Huang, ``An integrated spectrum and information market
  for green cognitive communications,'' \emph{IEEE Journal on Selected Areas in
  Communications}, vol.~34, no.~12, pp. 3326--3338, 2016.

\bibitem{duan2012duopoly}
L.~Duan, J.~Huang, and B.~Shou, ``Duopoly competition in dynamic spectrum
  leasing and pricing,'' \emph{IEEE Transactions on Mobile Computing}, vol.~11,
  no.~11, pp. 1706--1719, 2012.

\bibitem{athey2001single}
S.~Athey, ``Single crossing properties and the existence of pure strategy
  equilibria in games of incomplete information,'' \emph{Econometrica},
  vol.~69, no.~4, pp. 861--889, 2001.

\bibitem{gao2011spectrum}
L.~Gao, X.~Wang, Y.~Xu, and Q.~Zhang, ``Spectrum trading in cognitive radio
  networks: A contract-theoretic modeling approach,'' \emph{IEEE Journal on
  Selected Areas in Communications}, vol.~29, no.~4, pp. 843--855, 2011.

\bibitem{duan2014cooperative}
L.~Duan, L.~Gao, and J.~Huang, ``Cooperative spectrum sharing: A contract-based
  approach,'' \emph{IEEE Transactions on Mobile Computing}, vol.~13, no.~1,
  2014.

\bibitem{zhang2015contract}
Y.~Zhang, L.~Song, W.~Saad, Z.~Dawy, and Z.~Han, ``Contract-based incentive
  mechanisms for device-to-device communications in cellular networks,''
  \emph{IEEE Journal on Selected Areas in Communications}, vol.~33, no.~10, pp.
  2144--2155, 2015.

\bibitem{zhang2017non}
Y.~Zhang, L.~Song, M.~Pan, Z.~Dawy, and Z.~Han, ``Non-cash auction for spectrum
  trading in cognitive radio networks: Contract theoretical model with joint
  adverse selection and moral hazard,'' \emph{IEEE Journal on Selected Areas in
  Communications}, vol.~35, no.~3, pp. 643--653, 2017.

\bibitem{bellman2013dynamic}
R.~Bellman, \emph{Dynamic programming}.\hskip 1em plus 0.5em minus 0.4em\relax
  Courier Corporation, 2013.

\end{thebibliography}
\bibliographystyle{IEEEtran}

\vspace{-30pt}
\begin{IEEEbiography}[{	\includegraphics[width=1in,height=1.25in,clip,keepaspectratio]{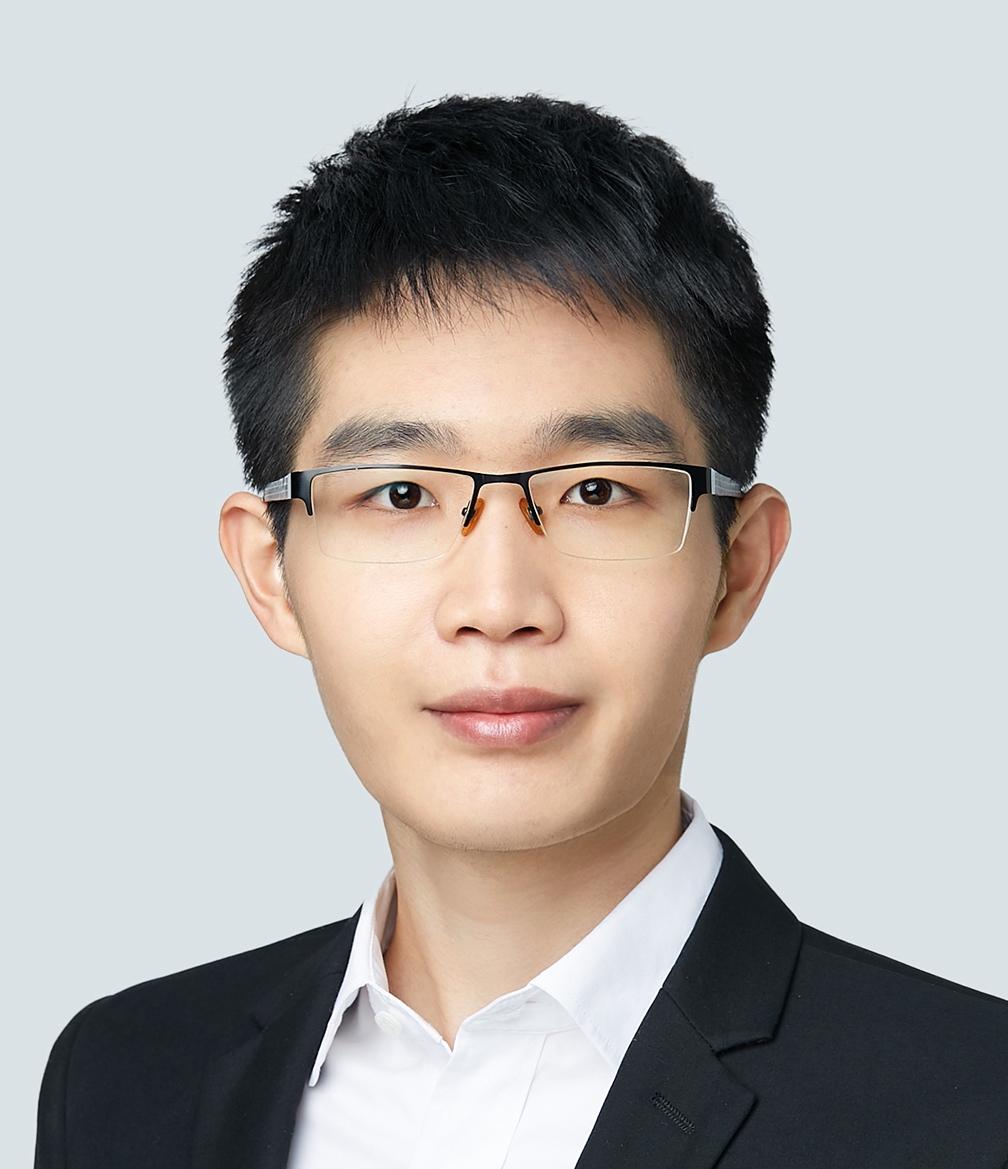}	}]{Zhiyuan Wang}
	received the B.S. degree from Southeast University, Nanjing, China, in 2016.
	He is currently working toward the Ph.D. degree with the Department of Information Engineering, The Chinese University of Hong Kong, Shatin, Hong Kong.  
	His research interests include the field of network economics and game theory, with current emphasis on smart data pricing and mobile edge computing. 
	He is the recipient of the Hong Kong PhD Fellowship.
\end{IEEEbiography}

\vspace{-30pt}
\begin{IEEEbiography}[{	\includegraphics[width=1in,height=1.25in,clip,keepaspectratio]{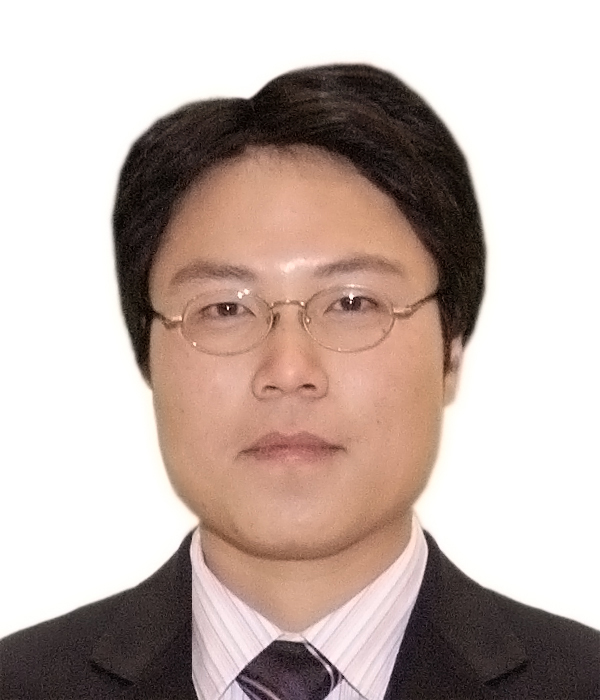}	}]	{Lin Gao}
	(S'08-M'10-SM'16) is an Associate Professor with the School of Electronic and Information Engineering, Harbin Institute of Technology, 	Shenzhen, China. He received the Ph.D. degree in Electronic Engineering from Shanghai Jiao Tong University in 2010. His main research
	interests are in the area of network economics 	and games, with applications in wireless communications 	and networking. He received the	IEEE ComSoc Asia-Pacific Outstanding Young 	Researcher Award in 2016.
\end{IEEEbiography}

\vspace{-25pt}
\begin{IEEEbiography}
	[{	\includegraphics[width=1in,height=1.25in,clip,keepaspectratio]{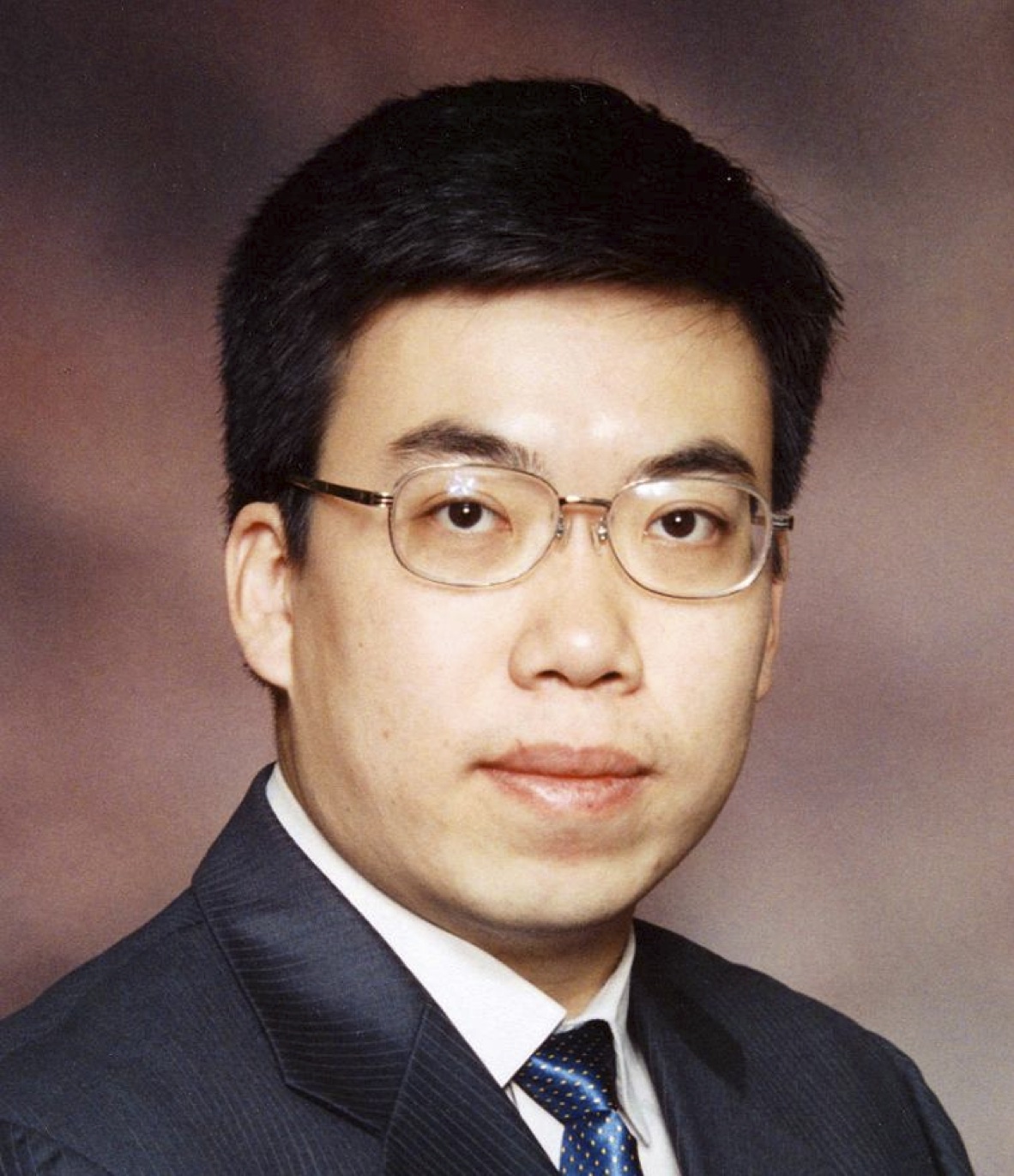}	}]	{Jianwei Huang}
	(F'16) is a Presidential Chair Professor and Associate Dean of the School of Science and Engineering, The Chinese University of Hong Kong, Shenzhen. 
	He is also a Professor in the 	Department of Information Engineering at The 	Chinese University of Hong Kong. 
	He is the
	co-author of 9 Best Paper Awards, including 	IEEE Marconi Prize Paper Award in Wireless
	Communications 2011. He has co-authored six 	books, including the textbook on ``Wireless Network
	Pricing''. He has served as the Chair of 	IEEE Technical Committee on Cognitive Networks and Technical Committee on Multimedia Communications. 
	He has been  an IEEE ComSoc 	Distinguished Lecturer and a Thomson Reuters 	Highly Cited Researcher.
\end{IEEEbiography}

\newpage
\appendices

\section{\label{Appendix: User type}}
\begin{proof}[\textbf{Proof of Lemma \ref{Lemma: independent of Q}}]
	We prove this lemma by showing that if $\slope(\dcap,\type_i)<\slope(\dcap,\type_j)$, then  $\slope(\dcap',\type_i)<\slope(\dcap',\type_j)$ for any $i,j\in\{1,2,...,KM\}$ and any $\dcap'\ne\dcap$.
	
	In this paper, we take into account two-dimensional user type.
	For a type-$\type_i$ user, we denote $\cut^i$ and $\val^i$ as his network substitutability and data valuation, respectively.
	That is, $\type_i=\{\cut^i,\val^i\}$.
	Similarly, we have $\type_j=\{\cut^j,\val^j\}$ for the type-$\type_j$ user.
	
	Recall that the user's willingness-to-pay $\slope(\dcap,\cut,\val)$ is 
	\begin{equation}
		\slope(\dcap,\cut,\val) = -\left[\val\cut+\adfee(1-\cut)\right]\frac{\partial A(\dcap)}{\partial \dcap},
	\end{equation}
	which has a separable structure between user's private information $(\cut,\val)$ and the data cap $\dcap$.
	Therefore, $\slope(\dcap,\type_i)<\slope(\dcap,\type_j)$ implies that
	\begin{equation}
	\begin{aligned}
		&-\left[\val^i\cut^i+\adfee(1-\cut^i)\right]\frac{\partial A(\dcap)}{\partial \dcap} < \\
		&\qquad\qquad\qquad\qquad -\left[\val^j\cut^j+\adfee(1-\cut^j)\right]\frac{\partial A(\dcap)}{\partial \dcap},
	\end{aligned}
	\end{equation}
	which means that
	\begin{equation}\label{Proof Equ: val cut compare}
	\textstyle
	\val^i\cut^i+\adfee(1-\cut^i) < \val^j\cut^j+\adfee(1-\cut^j).
	\end{equation}	 
	Multiply both sides of (\ref{Proof Equ: val cut compare}) by $-\frac{\partial A(\dcap')}{\partial \dcap}$, we obtain
	\begin{equation}
	\textstyle
	-\left[\val^i\cut^i+\adfee(1-\cut^i)\right]\frac{\partial A(\dcap')}{\partial \dcap} < -\left[\val^j\cut^j+\adfee(1-\cut^j)\right]\frac{\partial A(\dcap')}{\partial \dcap},
	\end{equation}	
	which implies that $\slope(\type_i,\dcap')<\slope(\type_j,\dcap')$.
\end{proof}

\begin{proof}[\textbf{Proof of Proposition \ref{Proposition: type^1, type^KM}}]
	We prove this proposition based on Lemma \ref{Lemma: independent of Q}.
	
	Recall that we consider a set $\Stheta=\{\val_k:1\le k\le{K}\}$ of $K$ data valuation types and a set  $\Sbeta=\{\cut_m:1\le m\le{M}\}$ of $M$ network substitutability types.
	According to the proof of Lemma \ref{Lemma: independent of Q}, we find that the new user order (based on their willingness-to-pay) only depends on the order of 
	\begin{equation}\label{Proof Equ: slope transform}
	\val\cut+\adfee(1-\cut)=\adfee+(\val-\adfee)\cut.
	\end{equation}
	The expression in (\ref{Proof Equ: slope transform}) monotonically  increases in $\val$.
	However, it increases in $\cut$ if $\val>\adfee$ and decreases in $\cut$ if $\val<\adfee$.
	Therefore, we have three cases depending on the relation between $\adfee$, $\val_1$, and $\val_K$.
	\begin{itemize}
		\item Fig. \ref{fig: ThetaBeta_low}: The case of $\val_{K}<\adfee$ corresponds to $\type_1=\{\cut_{M},\val_1\}$ and  $\type_{KM}=\{\cut_1,\val_{K}\}$.
		\item Fig. \ref{fig: ThetaBeta_cross}: The case of $\val_1<\adfee<\val_{K}$ corresponds to $\type_1=\{\cut_{M},\val_1\}$ and $\type_{KM}=\{\cut_{M},\val_{K}\}$.
		\item Fig. \ref{fig: ThetaBeta_high}: The case of $\adfee<\val_1$ corresponds to $\type_1=\{\cut_1,\val_1\}$ and $\type_{KM}=\{\cut_{M},\val_{K}\}$.
	\end{itemize}
\end{proof}

\begin{proof}[\textbf{Proof of Lemma \ref{Lemma: u independent of Q}}]
	We prove this lemma together with Proposition \ref{Proposition: type^u} based on the user's payoff.
	Recall that the type-$(\cut,\val)$ user's payoff is 
	\begin{equation} \label{Proof Equ: User payoff}
	\payoff(\dcap,\pcap,\cut,\val) 
	=\val\left[\dmean-\cut A(\dcap)  \right] -\adfee(1-\cut) A(\dcap)-\pcap.
	\end{equation}
		
	Take the derivative of (\ref{Proof Equ: User payoff}) with respect to the user's data valuation $\val$, and we obtain
	\begin{equation}
		\frac{\partial \payoff(\dcap,\pcap,\cut,\val)}{\partial \val} = \dmean-\cut A(\dcap) \ge 0, \ \forall\ \dcap,
	\end{equation}
	which means that the user's payoff increases in the data valuation $\val$.
	
	Take the derivative of (\ref{Proof Equ: User payoff}) with respect to the user's network substitutability $\cut$, and we obtain
	\begin{equation}
		\frac{\partial \payoff(\dcap,\pcap,\cut,\val)}{\partial \cut} = \left[ \adfee-\val \right] A(\dcap),
	\end{equation}
	which means that the user's payoff decreases in $\cut$ if $\val<\adfee$ and increases in $\cut$ if $\val>\adfee$.
	
	Therefore, we have three cases depending on the relation between $\adfee$, $\val_1$, and $\val_K$.
	\begin{itemize}
		\item Fig. \ref{fig: ThetaBeta_low}: The case of $\val_1<\val_{K}<\adfee$ corresponds to $\type_{\epsilon}=\{\cut_1,\val_1\}$.
		\item Fig. \ref{fig: ThetaBeta_cross}: The case of $\val_1<\adfee<\val_{K}$ corresponds to $\type_{\epsilon}=\{\cut_1,\val_1\}$.
		\item Fig. \ref{fig: ThetaBeta_high}: The case of $\adfee<\val_1$ corresponds to $\type_{\epsilon}=\{\cut_{M},\val_1\}$.
	\end{itemize}
	
	Now we have proved the independence of $\type_{\epsilon}$ on the contract item $(\dcap,\pcap)$ in Lemma \ref{Lemma: u independent of Q} and the mapping relation in Proposition \ref{Proposition: type^u}.
\end{proof}

\section{\label{Appendix: Necessary}}
\begin{proof}[\textbf{Proof of Lemma \ref{Lemma: Cap-Price}}]
	We prove this lemma based on the IC condition in Definition \ref{Definiation: Incentive Compatibility}.
	
	First, we prove that if $\dcap_i<\dcap_j$, then $\pcap_i<\pcap_j$.
	For any feasible contract, we have the following IC condition for the type-$\type_i$ user
			\begin{equation}
				L(\dcap_i,\type_i)-\pcap_i\ge L(\dcap_j,\type_i)-\pcap_j,
			\end{equation}
			which is equivalent to
			\begin{equation}\label{Proof Lemma: Cap-Price 1}
				\pcap_i-\pcap_j\le L(\dcap_i,\type_i)-L(\dcap_j,\type_i).
			\end{equation}
			Since $\dcap_i<\dcap_j$, we have $L(\dcap_i,\type_i)-L(\dcap_j,\type_i)<0$, which implies that $\pcap_i<\pcap_j$.
	
	Second, we prove that if $\pcap_i<\pcap_j$, then $\dcap_i<\dcap_j$. 
		 For any feasible contract, we have the following IC condition for the type-$\type_j$ user:
			\begin{equation}
				 L(\dcap_j,\type_j)-\pcap_j\ge L(\dcap_i,\type_j)-\pcap_i,
			\end{equation}
			which is equivalent to
			\begin{equation}\label{Proof Lemma: Cap-Price 2}
				 L(\dcap_j,\type_j)-L(\dcap_i,\type_j)\ge\pcap_j-\pcap_i.
			\end{equation}
			Since $\pcap_j>\pcap_i$, we have $L(\dcap_j,\type_j)-L(\dcap_i,\type_j)>0$, which implies that that $\dcap_j>\dcap_i$.
\end{proof}

\begin{proof}[\textbf{Proof of Lemma \ref{Lemma: type-Cap}}]
	We prove the lemma by contradiction. 
	Assume that the lemma is not true and  there exist $\slope(\dcap,\type_i)>\slope(\dcap,\type_j)$ and $\dcap_i<\dcap_j$  in a feasible contract. 
	
	According to the PIC condition in Definition \ref{Defination: PIC}, for the type-$\type_i$ and type-$\type_j$ users, we have
	\begin{subequations}\label{Proof Equ: type-cap ic}
	\begin{align}
	& L(\dcap_i,\type_i)-\pcap_i \ge L(\dcap_j,\type_i)-\pcap_j,\\
	& L(\dcap_j,\type_j)-\pcap_j \ge L(\dcap_i,\type_j)-\pcap_i.
	\end{align}
	\end{subequations}
	
	Combining  the two inequalities in (\ref{Proof Equ: type-cap ic}), we have
	\begin{equation}\label{Proof Equ: type-cap contradiction}
		L(\dcap_i,\type_i)-L(\dcap_j,\type_i)\ge L(\dcap_i,\type_j)-L(\dcap_j,\type_j),
	\end{equation}
	where $\dcap_i<\dcap_j$.
	
	Next we introduce Claim \ref{Lemma: claimIP}.	
	\begin{claim}\label{Lemma: claimIP}
		For any feasible contract, consider two user types $\type_i$ and $\type_j$ with $\slope(\dcap,\type_i)>\slope(\dcap,\type_j)$, we have
		\begin{equation}\label{Equ: Lemma2_claim}
		L(\dcap_i,\type_i)-L(\dcap_j,\type_i)<L(\dcap_i,\type_j)-L(\dcap_j,\type_j), \forall \dcap_i<\dcap_j.
		\end{equation}
	\end{claim}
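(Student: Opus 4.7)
The plan is to reduce the claim to a single scalar inequality by exploiting the separable algebraic structure of the virtual payoff $L$. From (\ref{Equ: Virtual Payoff}) and (\ref{Equ: Generalization Payoff}), one can write $L(\dcap,\cut,\val)=\val\dmean-[\val\cut+\adfee(1-\cut)]A(\dcap)$, so the $\dcap$-dependence enters only through the single quantity $A(\dcap)$ multiplied by the $\type$-dependent coefficient $\val\cut+\adfee(1-\cut)$. This immediately yields, for any fixed $\type=\{\cut,\val\}$, the identity
$$L(\dcap_j,\type)-L(\dcap_i,\type)=[\val\cut+\adfee(1-\cut)]\bigl[A(\dcap_i)-A(\dcap_j)\bigr].$$

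I would then rewrite the target inequality (\ref{Equ: Lemma2_claim}) by flipping signs on both sides, obtaining the equivalent form $L(\dcap_j,\type_i)-L(\dcap_i,\type_i)>L(\dcap_j,\type_j)-L(\dcap_i,\type_j)$. Applying the identity above to each side reduces this to
$$\bigl[\val^i\cut^i+\adfee(1-\cut^i)\bigr]\bigl[A(\dcap_i)-A(\dcap_j)\bigr]>\bigl[\val^j\cut^j+\adfee(1-\cut^j)\bigr]\bigl[A(\dcap_i)-A(\dcap_j)\bigr],$$
where the superscripts label the coordinates of $\type_i,\type_j$ as in the proof of Lemma \ref{Lemma: independent of Q}. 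As derived there, the hypothesis $\slope(\dcap,\type_i)>\slope(\dcap,\type_j)$ is equivalent to the $\dcap$-free scalar inequality $\val^i\cut^i+\adfee(1-\cut^i)>\val^j\cut^j+\adfee(1-\cut^j)$. Since $A(\dcap)$ is strictly decreasing in $\dcap$ (established in \cite{Zhiyuan2018TMC} and relied on throughout the paper), the factor $A(\dcap_i)-A(\dcap_j)$ is strictly positive whenever $\dcap_i<\dcap_j$. Multiplying the scalar inequality by this positive quantity gives the desired strict inequality.

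The main obstacle I anticipate is only a bookkeeping concern: confirming that $A(\dcap_i)-A(\dcap_j)$ is \emph{strictly} positive rather than merely non-negative for $\dcap_i<\dcap_j\le\dmax$. This follows from the structure of $A_\mechanism(\dcap)$ in (\ref{Equ: system model A_i(Q_i)}): as long as $f(d)$ places positive mass on some realization $d>\dcap$ for every $\dcap<\dmax$, the expected overage strictly drops at each unit increment of $\dcap$, uniformly across the three mechanisms. Importantly, the argument never leaves the discrete setting and requires no calculus or monotone-comparative-statics machinery; the entire claim is a one-line corollary of the bilinear-in-$(A(\dcap),\val\cut+\adfee(1-\cut))$ form of $L$ combined with the strict monotonicity of $A$ and the scalar form of the slope hypothesis.
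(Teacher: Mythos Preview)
Your argument is correct, but it takes a different route from the paper's. The paper writes the difference $L(\dcap_i,\type_i)-L(\dcap_j,\type_i)-L(\dcap_i,\type_j)+L(\dcap_j,\type_j)$ as the integral $\int_{\dcap_j}^{\dcap_i}[\slope(q,\type_i)-\slope(q,\type_j)]\,\mathrm{d}q$, using the identity $\partial L/\partial\dcap=\slope$; the hypothesis then makes the integrand strictly positive, and the reversed limits $\dcap_i<\dcap_j$ force the integral to be negative. Your route instead exploits the explicit bilinear factorization of $L$ in $A(\dcap)$ and $\val\cut+\adfee(1-\cut)$, reducing the claim to a single product inequality. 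The advantage of your approach is that it is purely algebraic and stays entirely in the discrete setting where the model lives, avoiding the mild awkwardness of integrating over $\dcap$ while the paper elsewhere insists $\dcap\in\mathbb{N}$. The paper's integral argument, on the other hand, invokes the slope hypothesis directly without first unpacking it into the scalar coefficient inequality, and would continue to work verbatim for any $L$ satisfying $\partial L/\partial\dcap=\slope$ even without the explicit separable form; so it is slightly more robust to changes in the payoff model.
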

	\begin{proof}[\textbf{Proof of Claim \ref{Lemma: claimIP}}]
		We compute the difference of the two sides in (\ref{Equ: Lemma2_claim}) as follows,
		\begin{equation}
		\begin{aligned}
		& L(\dcap_i,\type_i)-L(\dcap_j,\type_i)-L(\dcap_i,\type_j)+L(\dcap_j,\type_j) \\
		=& \int_{\dcap_j}^{\dcap_i}{\frac{\partial L}{\partial \dcap}\bigg|_{(q,\type_i)}}{\rm d}q - \int_{\dcap_j}^{\dcap_i} {\frac{\partial L}{\partial \dcap}\bigg|_{(q,\type_j)} }{\rm d}q \\
		=& \int_{\dcap_j}^{\dcap_i}{\slope(q,\type_i)}{\rm d}q - \int_{\dcap_j}^{\dcap_i} {\slope(q,\type_j) }{\rm d}q \\
		=& \int_{\dcap_j}^{\dcap_i} \left[\slope(q,\type_i)-\slope(q,\type_j)\right]{\rm d}q < 0,
		\end{aligned}
		\end{equation}
		where the last line follows $\slope(\dcap,\type_i)>\slope(\dcap,\type_j)$ for any $\dcap$ and $\dcap_i<\dcap_j$.
	\end{proof}
	
	Based on Claim \ref{Lemma: claimIP}, we can find the contradiction between (\ref{Proof Equ: type-cap contradiction}) and (\ref{Equ: Lemma2_claim}), which proves Lemma \ref{Lemma: type-Cap}.
\end{proof}

\section{\label{Appendix: B}}
\begin{proof}[\textbf{Proof of Lemma \ref{Lemma: IC-Transitivity}}]
	We prove this lemma based on the PIC conditions in Definition \ref{Defination: PIC}.
	\begin{itemize}
		\item According to the PIC conditions, for type-$\type_{i_1}$ and type-$\type_{i_2}$ users, we have the following two inequalities,
		\begin{subequations}\label{Proof Equ: IC_tran ic}
			\begin{align}
			& L(\dcap_{i_1},\type_{i_1})-\pcap_{i_1} \ge   L(\dcap_{i_2},\type_{i_1})-\pcap_{i_2},\\
			& L(\dcap_{i_2},\type_{i_2})-\pcap_{i_2} \ge   L(\dcap_{i_3},\type_{i_2})-\pcap_{i_3}.
			\end{align}
		\end{subequations}
		Based on the necessary conditions in Theorem \ref{Theorem: necessary conditions type} and the relation $i_2<i_3$, we have $\dcap_{i_3}\ge\dcap_{i_2}$. 
		Accordingly, Claim \ref{Lemma: claimIP} indicates
		\begin{equation}\label{Proof Equ: IC_tran IP}
		\begin{aligned}
		& L(\dcap_{i_3},\type_{i_2})-L(\dcap_{i_2},\type_{i_2})  \ge  L(\dcap_{i_3},\type_{i_1})-L(\dcap_{i_2},\type_{i_1}).
		\end{aligned}
		\end{equation}
		Combining the three inequalities in (\ref{Proof Equ: IC_tran ic}) and (\ref{Proof Equ: IC_tran IP}), we obtain
		\begin{equation}\label{Equ: Lemma3 - 1}
		L(\dcap_{i_1},\type_{i_1})-\pcap_{i_1} \ge L(\dcap_{i_3},\type_{i_1})-\pcap_{i_3}.
		\end{equation}
		 
		\item Based on the PIC conditions for type-$\type_{i_3}$ and  type-$\type_{i_2}$ users, we have
		 \begin{subequations}\label{Proof Equ: IC_tran ic '}
		 \begin{align}
		 & L(\dcap_{i_3},\type_{i_3})-\pcap_{i_3} \ge  L(\dcap_{i_2},\type_{i_3})-\pcap_{i_2},\\
		 & L(\dcap_{i_2},\type_{i_2})-\pcap_{i_2} \ge  L(\dcap_{i_1},\type_{i_2})-\pcap_{i_1}.
		 \end{align}
		 \end{subequations}
		 
		 Based on the necessary conditions in Theorem \ref{Theorem: necessary conditions type} and the relation $i_1<i_2$, we have $\dcap_{i_2}\ge\dcap_{i_1}$.
		 Similarly, Claim \ref{Lemma: claimIP} implies that
		 \begin{equation}\label{Proof Equ: IC_tran IP '}
		 \begin{aligned}
		 & L(\dcap_{i_2},\type_{i_3})-L(\dcap_{i_1},\type_{i_3})  \ge  L(\dcap_{i_2},\type_{i_2})-L(\dcap_{i_1},\type_{i_2}).
		 \end{aligned}
		 \end{equation}	
		 
		 Combining the three inequalities in (\ref{Proof Equ: IC_tran ic '}) and (\ref{Proof Equ: IC_tran IP '}), we obtain
		 \begin{equation}\label{Equ: Lemma3 - 2}
		 L(\dcap_{i_3},\type_{i_3})-\pcap_{i_3} \ge L(\dcap_{i_1},\type_{i_3})-\pcap_{i_1}.
		 \end{equation}
		 
	\end{itemize}

	
	Equations (\ref{Equ: Lemma3 - 1}) and (\ref{Equ: Lemma3 - 2}) indicate that $\contract_{i_1}\ic\contract_{i_3}$.
\end{proof}

\begin{proof}[\textbf{Proof of Lemma \ref{Lemma: IR-Transitivity}}]
	We prove this lemma by showing the following inequalities:
	\begin{equation}
	\begin{split}
	\payoff(\contract_i,\type_i) &\ge \payoff(\contract_{\epsilon},\type_i),\ \forall\ i\ne \epsilon,\\
	&\ge \payoff(\contract_{\epsilon},\type_{\epsilon}),\\
	&\ge 0,
	\end{split}
	\end{equation}
	where the first inequality comes from the IC condition for type-$\type_i$ users, the second inequality is due to the definition of the \textit{smallest-payoff} user type $\type_{\epsilon}$ in (\ref{Equ: smallest-payoff user}), and the last inequality is from the condition $\contract_{\epsilon}\ir$ of this lemma. 
	This completes the proof of Lemma \ref{Lemma: IR-Transitivity}.
\end{proof}

\section{\label{Appendix: Optimal Pricing}}

\begin{proof}[\textbf{Proof of Theorem \ref{Theorem: Optimal Pricing type}}] 
	We prove this theorem by showing that the subscription fees $\{\pcap^*_i,1\le i\le {KM}\}$ specified in (\ref{Equ: Optimal Pricing Policy}) is feasible and optimal.
	
	First, the feasibility of $\{\pcap^*_i,1\le i\le {KM}\}$ is obvious, since $\pcap^*_i$ takes the maximal value satisfying the sufficient conditions in Theorem \ref{Theorem: sufficien condition} for all $i\in\{1,2,...,KM\}$.
	
	Second, we show that the subscription fees $\{\pcap^*_i,1\le i\le {KM}\}$ maximize the profit of the MNO by contradiction.	 
	Assume that this is not true and  there exists another feasible subscription fee assignment $\{\tilde{\pcap}_i,1\le i\le KM\}$ such that 
	\begin{equation}
	\begin{aligned}
	&\sum\limits_{i=1}^{KM}\left[ \tilde{\pcap}_i+P(\dcap_i,\type_i)-c\cdot U(\dcap_i,\type_i)-J(\dcap_i) \right] > \\ 
	& \qquad \sum\limits_{i=1}^{KM} \left[ \pcap^*_i+P(\dcap_i,\type_i)-c\cdot U(\dcap_i,\type_i)-J(\dcap_i) \right],	
	\end{aligned}
	\end{equation}
	which is equivalent to
	\begin{equation} \label{Proof Equ: Optimal price p>p}
	\sum\limits_{i=1}^{KM} \tilde{\pcap}_i > \sum\limits_{i=1}^{KM}  \pcap^*_i. 
	\end{equation}
	
	Equation (\ref{Proof Equ: Optimal price p>p}) implies that there exists at least a $t\in\{1,2,...,KM\}$ such that $\tilde{\pcap}_t >  \pcap^*_t$.
	Recall that $\type_{\epsilon}$ is the smallest user type.
	Next we discuss two cases based on the relation between $t$ and $u$.
	\begin{itemize}
		\item Case I ($t\ge u$):
			Based on the PIC condition for type-$\type_t$ and type-$\type_{t-1}$ users, we have
			\begin{equation}\label{Proof Equ: Optimal price t>u 1}
				\tilde{\pcap}_t \le \tilde{\pcap}_{t-1}+L(\dcap_t,\type_t)-L(\dcap_{t-1},\type_t).
			\end{equation}
			
			Furthermore, Theorem \ref{Theorem: Optimal Pricing type} indicates that
			\begin{equation}\label{Proof Equ: Optimal price t>u 2}
				\pcap^*_t=\pcap^*_{t-1}+L(\dcap_t,\type_t)-L(\dcap_{t-1},\type_t).
			\end{equation}
			
			Combining (\ref{Proof Equ: Optimal price t>u 1}) and (\ref{Proof Equ: Optimal price t>u 2}), we have $\pcap^*_{t-1} < \tilde{\pcap}_{t-1}$.
			Continuing the above process, eventually we obtain  
			\begin{equation}
				\tilde{\pcap}_{\epsilon} > \pcap^*_{\epsilon}=L(\dcap_{\epsilon},\type_{\epsilon}),
			\end{equation}
			which means that $\{\tilde{\pcap}_i,1\le i\le KM\}$ violates the IR condition for the type-$\type_{\epsilon}$ users, hence it is not feasible.
			
		\item Case II ($t < u$): 
			Based on the PIC condition for type-$\type_t$ and type-$\type_{t-1}$ users, we have
			\begin{equation}\label{Proof Equ: Optimal price t<u 1}
				\tilde{\pcap}_t \le\tilde{\pcap}_{t+1} + L(\dcap_t,\type_t)-L(\dcap_{t+1},\type_t).
			\end{equation}
			Furthermore, Theorem \ref{Theorem: Optimal Pricing type} indicates that
			\begin{equation}\label{Proof Equ: Optimal price t<u 2}
				\pcap^*_t = \pcap^*_{t+1} + L(\dcap_t,\type_t)-L(\dcap_{t+1},\type_t).
			\end{equation}	
			
			Combining (\ref{Proof Equ: Optimal price t<u 1}) and (\ref{Proof Equ: Optimal price t<u 2}), we have $\pcap^*_{t+1} < \tilde{\pcap}_{t+1}$.			
			Continuing the above process, eventually we obtain 
			\begin{equation}
				\tilde{\pcap}_{\epsilon} > \pcap_{\epsilon}=L(\dcap_{\epsilon},\type_{\epsilon}),
			\end{equation}
			which indicates that $\{\tilde{\pcap}_i,1\le i\le KM\}$ violates the IR constraint for type-$\type_{\epsilon}$ users, hence it is not feasible.
	\end{itemize}
	
	Both cases above  lead to the contradiction, hence the MNO's profit is maximized by $\{\pcap^*_i,1\le i \le KM\}$.	
	%
	%
\end{proof}

\section{Monotonicity Relaxation\label{Appendix: Monotonicity Relaxation}}
To take advantage of the separable structure in the objective function, we can first relax the monotonicity constraints and maximize each $G_i(\cdot)$ over  $\dcap_i$ separately as follows:  
\begin{equation}\label{Equ: Solution saparate}
\tilde{\dcap}^i=\arg\max\limits_{\dcap\in\mathbb{N} } G_i(\dcap) ,\ \forall\ i\in\{1,2,...,KM\}. \\
\end{equation}

If the solution $\{\tilde{\dcap}^i,1\le i\le {KM}\}$ obtained from (\ref{Equ: Solution saparate}) is feasible, i.e., satisfying the monotonicity constraints $\tilde{\dcap}^1\le\tilde{\dcap}^2\le...\le\tilde{\dcap}^{KM}$, then we obtain the optimal solution of Problem \ref{Problem: Optimal cap type}. 
If not, however, we will use the Dynamic Algorithm first proposed in \cite{gao2011spectrum} to adjust the solution $\{\tilde{\dcap}^i,1\le i\le {KM}\}$ to make it feasible and generate a new \emph{adjusted solution} $\{\bar{\dcap}^i,1\le i\le {KM}\}$. 
The intuition behind the Dynamic Algorithm is to first 
\begin{enumerate}
	\item  Find a consecutive infeasible subsequence, e.g., $\tilde{\dcap}^n\ge\tilde{\dcap}^{n+1}\ge...\ge\tilde{\dcap}^{m}$ where $n<m$ and $\tilde{\dcap}^n>\tilde{\dcap}^m$, then generate the \textit{adjusted solution} $\{\bar{\dcap}^i,1\le i\le {KM}\}$ as follows:
	\begin{equation} \label{Equ: adjust}
	\begin{aligned}
	&\bar{\dcap}^i =\left\{
	\begin{aligned}
	& \arg\max_{\dcap\in\mathbb{N}}\sum\limits_{j=n}^{m} G_j(\dcap) &\text{ if}\ n\le i\le m, \\
	& \tilde{\dcap}^i			& \text{otherwise}.
	\end{aligned}
	\right.
	\end{aligned}
	\end{equation}
	\item If the adjusted solution $\{\bar{\dcap}^i,1\le i\le {KM}\}$ is not feasible, then return to Step 1).
	If the adjusted solution $\{\bar{\dcap}^i,1\le i\le {KM}\}$ is feasible, then terminate.
\end{enumerate}

The above two steps run iteratively until there is no infeasible subsequence.
Such an adjusted solution $\{\bar{\dcap}^i,1\le i\le {KM}\}$ must be  feasible.
When Problem \ref{Problem: Optimal cap type} is convex, the adjusted solution produced by the algorithm must be globally optimal as it  satisfies the KKT condition \cite{gao2011spectrum}. 
When Problem \ref{Problem: Optimal cap type} is non-convex, then the adjusted solution is a locally optimal solution  (but may not be globally optimal).


\section{\label{Appendix: H(n,q)}}

\begin{proof}[\textbf{Proof of Proposition \ref{Proposition: function H Recursiveness}}]
	We prove this proposition based on the definition of Problem \ref{Problem: sub}.
	Recall that $H(n,q)$ is the optimal value of the level-($n,q$) sub-problem, defined as follows:
	\begin{subequations} \label{Proof Equ: H(n,q) 0}
	\begin{align}
	H(n,q)=\max\   & \sum_{i=1}^{n}G_i( \dcap_i ) 					\\
	\textit{s.t. }\ & \dcap_1\le\dcap_2\le...\le\dcap_{n-1}\le\dcap_n\le q, 	\\
	& \dcap_i\in\mathbb{N},\ \forall\ i\in\{1,2,...,n\},\\
	\textit{var. }\ & \dcap_i,\ 1\le i\le n.
	\end{align}
	\end{subequations}
	Here the decision variables are $\dcap_i$ for all $i\in\{1,2,...,n\}$.
	
	For notation clarity, we express $H(n,q)$ as 
	\begin{subequations} \label{Proof Equ: H(n,q) 1}
	\begin{align}
	H(n,q)=\max\   & \sum_{i=1}^{n-1}G_i( \dcap_i ) +G_n( x )		\label{Proof Equ: H(n,q) 1 objective}				\\
	\textit{s.t. }\ & \dcap_1\le\dcap_2\le...\le\dcap_{n-1}\le x, 	\label{Proof Equ: H(n,q) 1 constraint 1}	\\
	& \dcap_i\in\mathbb{N},\ \forall\ i\in\{1,2,...,n-1\}, 			\label{Proof Equ: H(n,q) 1 constraint 2} \\
	& x\le q,\\
	& x\in\mathbb{N},\\
	\textit{var. }\ & x \textit{ and } \dcap_i,\ \forall\ i\in\{1,2,...,n-1\}.
	\end{align}
	\end{subequations}
	
	Furthermore, the optimal value $H(n-1,x)$ of the level-($n-1,x$) sub-problem is 
	\begin{subequations} \label{Proof Equ: H(n-1,x)}
		\begin{align}
		H(n-1,x)=\max\   & \sum_{i=1}^{n-1}G_i( \dcap_i ) 				\label{Proof Equ: H(n-1,x) object}	\\
		\textit{s.t. }\ & \dcap_1\le\dcap_2\le...\le\dcap_{n-1}\le x, 	\label{Proof Equ: H(n-1,x) constraint 1} \\
		& \dcap_i\in\mathbb{N},\ \forall\ i\in\{1,2,...,n-1\},			\label{Proof Equ: H(n-1,x) constraint 2} \\
		\textit{var. }\ & \dcap_i,\forall \ i\in\{1,2,...,n-1\}.
		\end{align}
	\end{subequations}
	
	We note from (\ref{Proof Equ: H(n,q) 1}) and (\ref{Proof Equ: H(n-1,x)}) that
	\begin{itemize}
		\item The first term $\sum_{i=1}^{n-1}G_i( \dcap_i )$ in (\ref{Proof Equ: H(n,q) 1 objective}) is the objective function in (\ref{Proof Equ: H(n-1,x) object}).
		\item The constraint (\ref{Proof Equ: H(n,q) 1 constraint 1}) is the same as (\ref{Proof Equ: H(n-1,x) constraint 1}).
		\item The constraint (\ref{Proof Equ: H(n,q) 1 constraint 2}) is the same as (\ref{Proof Equ: H(n-1,x) constraint 2}).
	\end{itemize}
	
	Therefore, combining (\ref{Proof Equ: H(n,q) 1}) and (\ref{Proof Equ: H(n-1,x)}), we obtain	
	\vspace{20pt}
	\begin{equation} \label{Proof Equ: H(n,q) recussiveness}
		\begin{aligned}
		H(n,q)=\max\   & H(n-1,x) +G_n( x )					\\
		\textit{s.t. }\ & x\le q, 	\\
		& x\in\mathbb{N}, \\
		\end{aligned}
	\end{equation}
	which completes the proof.
%
\end{proof}

\begin{proof}[\textbf{Proof of Proposition \ref{Proposition: function H Monotonicity}}]
	We prove the two properties of $H(n,q)$ based on the definition in Problem \ref{Problem: sub}.
	
	First, it is easy to see that  $H(n,q)$ is non-decreasing in $q$, since the parameter $q$ in Problem \ref{Problem: sub} represents the upper bound of the feasible domain.
	
	Next we prove the existence of the critical point $\hat{q}_n$.
	Recall that $H(n,\dmax)$ and $\bm{\dcap}^\star(n,\dmax)=\{\dcap^\star_i(n,\dmax),1\le i\le n\}$ represent the optimal value and the optimal solution of the level-($n,\dmax$) sub-problem, respectively.
	Then the $n$-th element of the optimal solution $\bm{\dcap}^\star(n,\dmax)$ is the critical point $\hat{q}_n$, i.e., 
	\begin{equation}\label{Proof Equ: critical point}
	\hat{q}_n \triangleq \dcap_n^\star(n,\dmax).
	\end{equation}
	In this case, we have
	\begin{equation}
	H(n,q)=H(n,\dmax),\ \forall q\ge \hat{q}_n.
	\end{equation}	
\end{proof}

\begin{proof}[\textbf{Proof of Theorem \ref{Theorem: solution KM,D}}]
	We prove this theorem based on the definition of the level-($n,q$) sub-problem (in Problem \ref{Problem: sub}) and the critical point  (\ref{Proof Equ: critical point}).
	
	According to Problem \ref{Problem: sub}, we have the following level-($KM,\dmax$) sub-problem, which is equivalent to Problem \ref{Problem: Optimal cap type}.
	\begin{subequations} \label{Proof Equ: sub KM}
		\begin{align}
		\max\   & \sum_{i=1}^{KM}G_i( \dcap_i ) 						\\
		\textit{s.t. }\ & \dcap_1\le\dcap_2\le...\le\dcap_{KM}\le \dmax 	\\
		& \dcap_i\in\mathbb{N},\ \forall\ i\in\{1,2,...,KM\} \\
		\textit{var: }	& \dcap_i,1\le i\le KM.
		\end{align}
	\end{subequations}	
	Next we explain the optimal solution $\{\dcap^\star_i(KM,\dmax),1\le i \le KM\}$ of the above level-($KM,\dmax$) sub-problem (\ref{Proof Equ: sub KM}).
	
	First, for the $KM$-th element $\dcap^\star_{KM}(KM,\dmax)$, according to the definition of the critical point (\ref{Proof Equ: critical point}), we have
	\begin{equation}
		\dcap^\star_{KM}(KM,\dmax) = \hat{q}_{KM}.
	\end{equation}
	

	Substituting $\{\dcap^\star_i(KM,\dmax),1\le i \le KM\}$ into (\ref{Equ: H Recursiveness}), we obtain
	\begin{equation}
	\begin{aligned}
	H(KM,\dcap_{KM}^\star) = H\left(KM-1,\dcap_{KM}^\star\right)+G_{KM}\left( \dcap_{KM}^\star \right),
	\end{aligned}
	\end{equation}		
	where $H\left(KM-1,\dcap_{KM}^\star\right)$ is the optimal value of the level-($KM-1,\dcap_{KM}^\star$) sub-problem.
	Based on the non-decreasing property and the critical point in Proposition \ref{Proposition: function H Monotonicity}, we know that
	\begin{equation}
		\dcap_{KM-1}^\star(KM,\dmax) = \min\{ \hat{q}_{KM-1},\dcap_{KM}^\star(KM,\dmax) \}.
	\end{equation}
	Continuing the above process, we can show that for all $i\in\{1,2,...,KM-1\}$, we have
	\begin{equation}
	\dcap_i^\star(KM,\dmax) = 
	\min\{ \hat{q}_i,\dcap_{i+1}^\star(KM,\dmax)\},
	\end{equation}	
	which completes the proof.
\end{proof}

\section{A Numerical Example for the DQA Algorithm\label{Appendix: Numerical example}}
Next we provide  a numerical example to demonstrate the computation of  the optimal data caps based on $H(n,q)$.

For illustration simplicity, we let $KM=4$ and $\dmax=9$.
Accordingly, we denote $\dcap^*_1$, $\dcap^*_2$, $\dcap^*_3$, and $\dcap^*_4$ as the optimal data caps in Problem \ref{Problem: Optimal cap type}.

As shown in Fig. \ref{fig: DP_illustration}, the horizontal axis represents $q\in\{0,1,...,9\}$,
the vertical axis represents $n\in\{1,2,3,4\}$,
and the value in each box represents $H(n,q)$.
In this numerical example, the optimal value of Problem \ref{Problem: Optimal cap type} is $H(4,9)=90$.
Furthermore, we will show  that the optimal data caps of Problem \ref{Problem: Optimal cap type} is $\{\dcap_1^*,\dcap_2^*,\dcap_3^*,\dcap_4^*\}=\{3,5,5,7\}$, as follows:
\begin{itemize}
	\item For $\dcap_4^*=7$:  
	In Fig. \ref{fig: DP_illustration}, $H(4,8)=H(4,9)$ indicates that the optimal data cap $\dcap_4^*$ is no larger than the domain upper bound $8$, i.e., $\dcap_4^*\le8$.
	Otherwise, $H(4,8)$ must be smaller than $H(4,9)$.
	Similarly, 	$H(4,7)=H(4,8)$ implies $\dcap_4^*\le7$ as well.
	However, $H(4,6)<H(4,7)$ reveals that $\dcap_4^*=7$.
	Otherwise, if $\dcap_4^*<7$, then we would have $H(4,6)=H(4,7)$.
	Furthermore, according to Proposition \ref{Proposition: function H Recursiveness}, we have
	\begin{equation}
	\underbrace{H(4,9)}_{90} = \underbrace{H(3,\dcap_4^*)}_{87}  +G_4(\dcap_4^*),
	\end{equation}
	which shows that $G_4(\dcap_4^*)=3$ according to Fig. \ref{fig: DP_illustration}.
	
	\item For $\dcap_3^*=5$: 
	Given $\dcap_4^*=7$, we know that $\dcap_3^*\le\dcap_4^*=7$ considering the monotonic constraints (\ref{SubEqu: Optimal cap type, monotonicity}) in Problem \ref{Problem: Optimal cap type}.
	Similarly, the equality $H(3,6)=H(3,7)$ implies $\dcap_3^*\le6$ and $H(3,5)=H(3,6)$ implies $\dcap_3^*\le5$.
	Then the inequality $H(3,4)<H(3,5)$ implies $\dcap_3^*=5$.
	Furthermore, according to Proposition \ref{Proposition: function H Recursiveness}, we have 
	\begin{equation}
	\underbrace{H(4,9)}_{90} = \underbrace{H(2,\dcap_3^*)}_{40} + G_3(\dcap_3^*) + \underbrace{G_4(\dcap_4^*)}_{3},
	\end{equation}
	which indicates that $G_3(\dcap_3^*)=47$ according to Fig. \ref{fig: DP_illustration}.
	
	\item For $\dcap_2^*=5$:
	Given $\dcap_3^*=5$, we know $\dcap_2^*\le\dcap_3^*=5$ considering the monotonic constraints.
	Similar to the above argument, the inequality $H(2,4)<H(2,5)$ implies $\dcap_2^*=5$.
	Moreover, we have the following equality
	\begin{equation}
	\underbrace{H(4,9)}_{90} = \underbrace{H(1,\dcap_2^*)}_{35} + G_2(\dcap_2^*) + \underbrace{G_3(\dcap_3^*)}_{47} + \underbrace{G_4(\dcap_4^*)}_{3},
	\end{equation}	
	which leads to $G_2(\dcap_2^*)=5$.
	
	\item For $\dcap_1^*=3$:
	Given $\dcap_2^*=5$, we know $\dcap_1^*\le\dcap_2^*=5$.
	The inequality and equalities $H(1,2)<H(1,3)=H(1,4)=H(1,5)$ implies $\dcap_1^*=3$ and $G_1(\dcap_1^*)=35$.
	
\end{itemize}

\begin{figure} 
	\centering
	\includegraphics[width=0.9\linewidth]{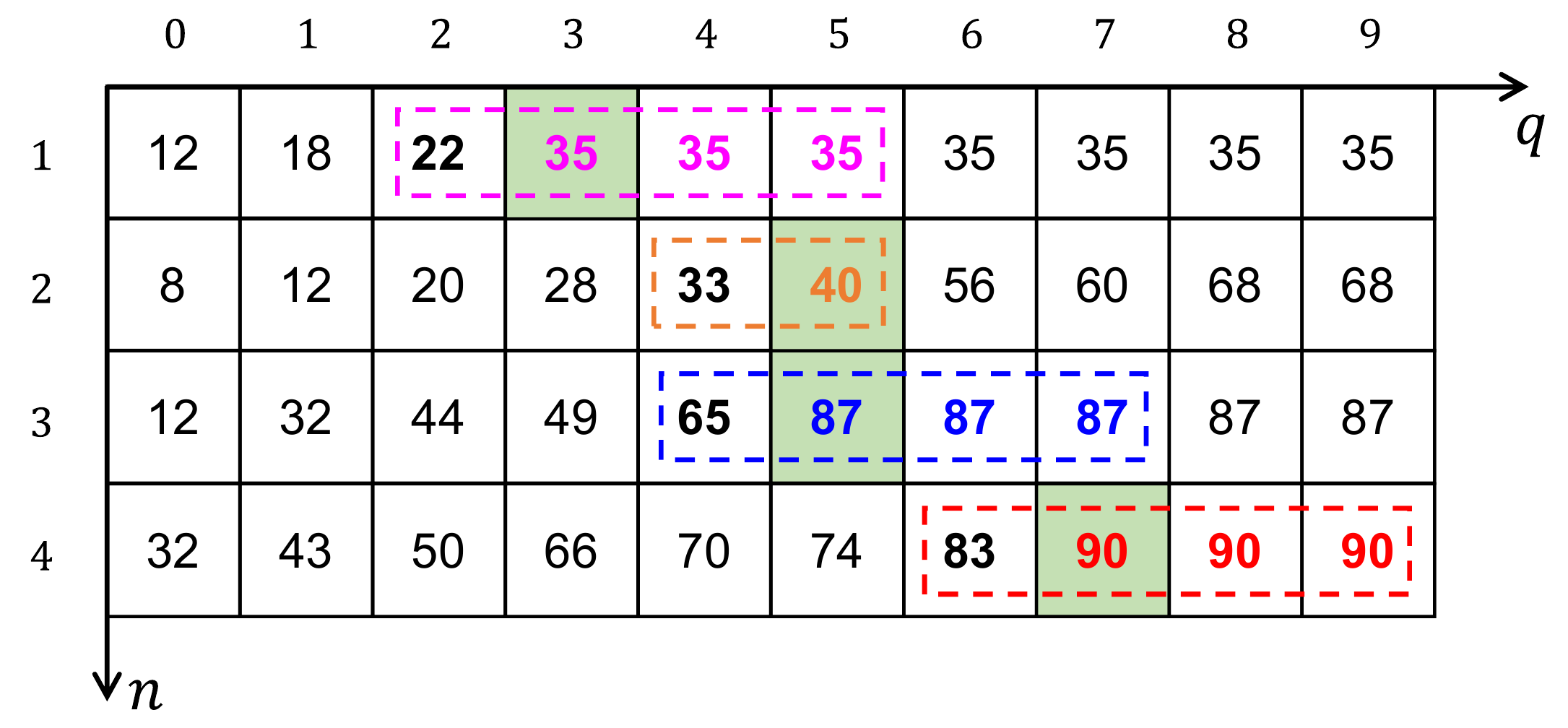}
	\caption{An example of computing $\{\dcap_1^*,\dcap_2^*,\dcap_3^*,\dcap_4^*\}$ based on the table of $H(n,q)$ for all $1\le n\le 4$ and $0\le q \le 9$.}
	\label{fig: DP_illustration}
\end{figure}


\section{\label{Appendix: simulation}}
Similar as Section \ref{Subsubsection: Impact of MNO's Costs}, we take the data mechanism $\mechanism=1$ as an example to investigate how the MNO's capacity cost $z$ affects the optimal contract items.

Fig. \ref{fig: impact of z} shows the impact of the MNO's capacity cost $z$.
Specifically, there are a total of seven different contract items in the optimal contract.
The seven curves in Fig. \ref{fig: z_Cap} represent the corresponding  different data caps.
We note that the optimal data caps (except the zero cap) decrease in the MNO's capacity cost.
Fig. \ref{fig: z_Subscription} plots the corresponding subscription fees in the optimal contract.
We find that
\begin{itemize}
	\item The subscription fee of the zero-cap contract item (i.e., the bottom blue circle curve in Fig. \ref{fig: z_Subscription}) does not change in the capacity cost $z$.
	This results from the individual rationality condition in (\ref{Equ: Optimal Pricing Policy 1}).
	
	\item The subscription fees of small-cap contract items (e.g., the cross and triangle curves in Fig. \ref{fig: z_Subscription}) decrease in the MNO's costs.
	While the subscription fees of the large-cap contract item (e.g., the diamond curves in Fig. \ref{fig: z_Subscription}) increases in the MNO's costs.
	Therefore, the large-cap contract item becomes less economical to the users (in terms of the average price $\pcap/\dcap$) as the MNO's capacity cost increases.
	That is, the profit-maximizing MNO tends to compensate its capacity cost by charging those users who are willing to pay for the large-cap contract item.
\end{itemize}

\begin{figure}
	\centering
	\subfigure[Optimal data caps.]{\label{fig: z_Cap}{\includegraphics[height=0.4\linewidth]{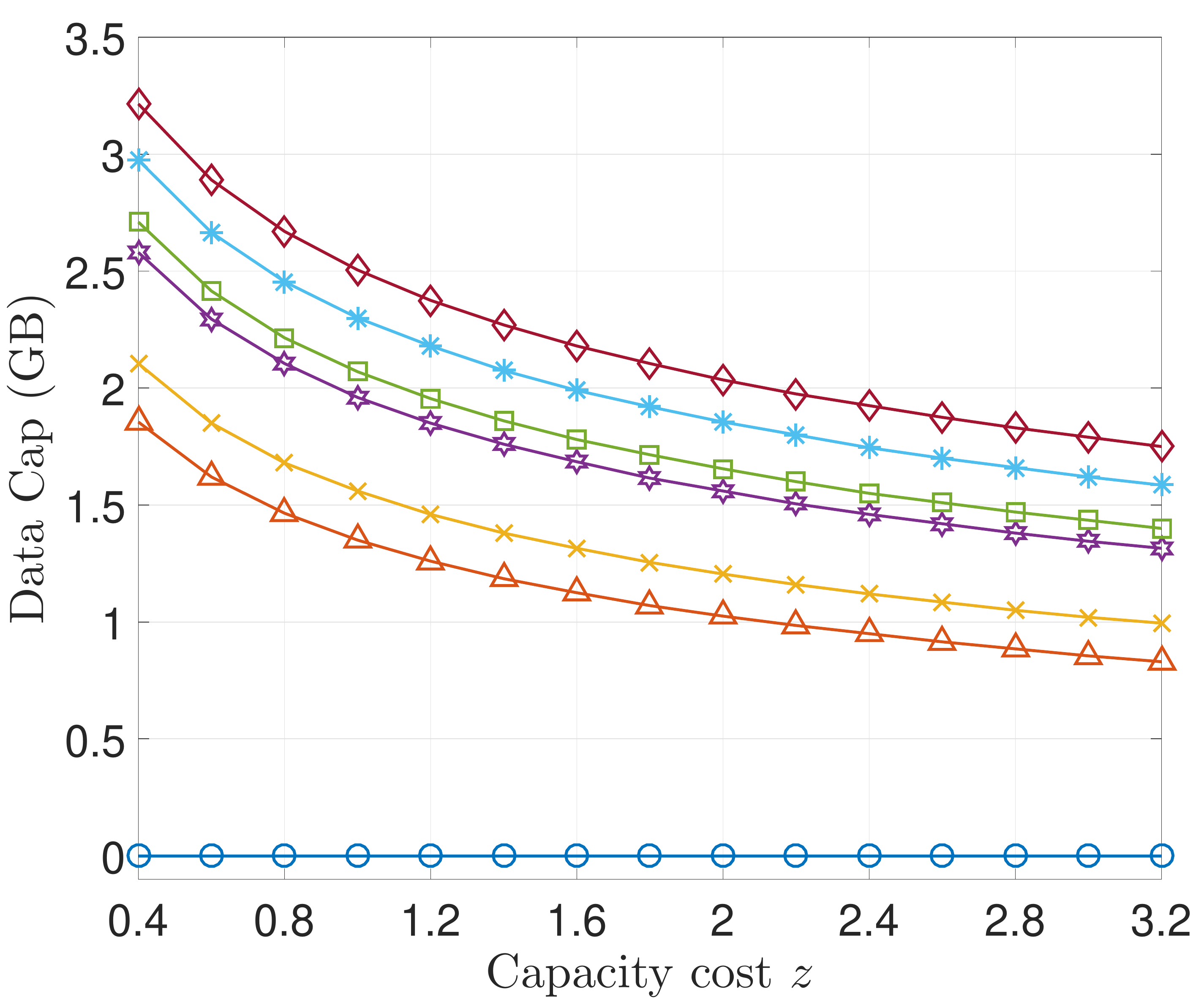}}}\
	\subfigure[Optimal subscription fees.]{\label{fig: z_Subscription}{\includegraphics[height=0.4\linewidth]{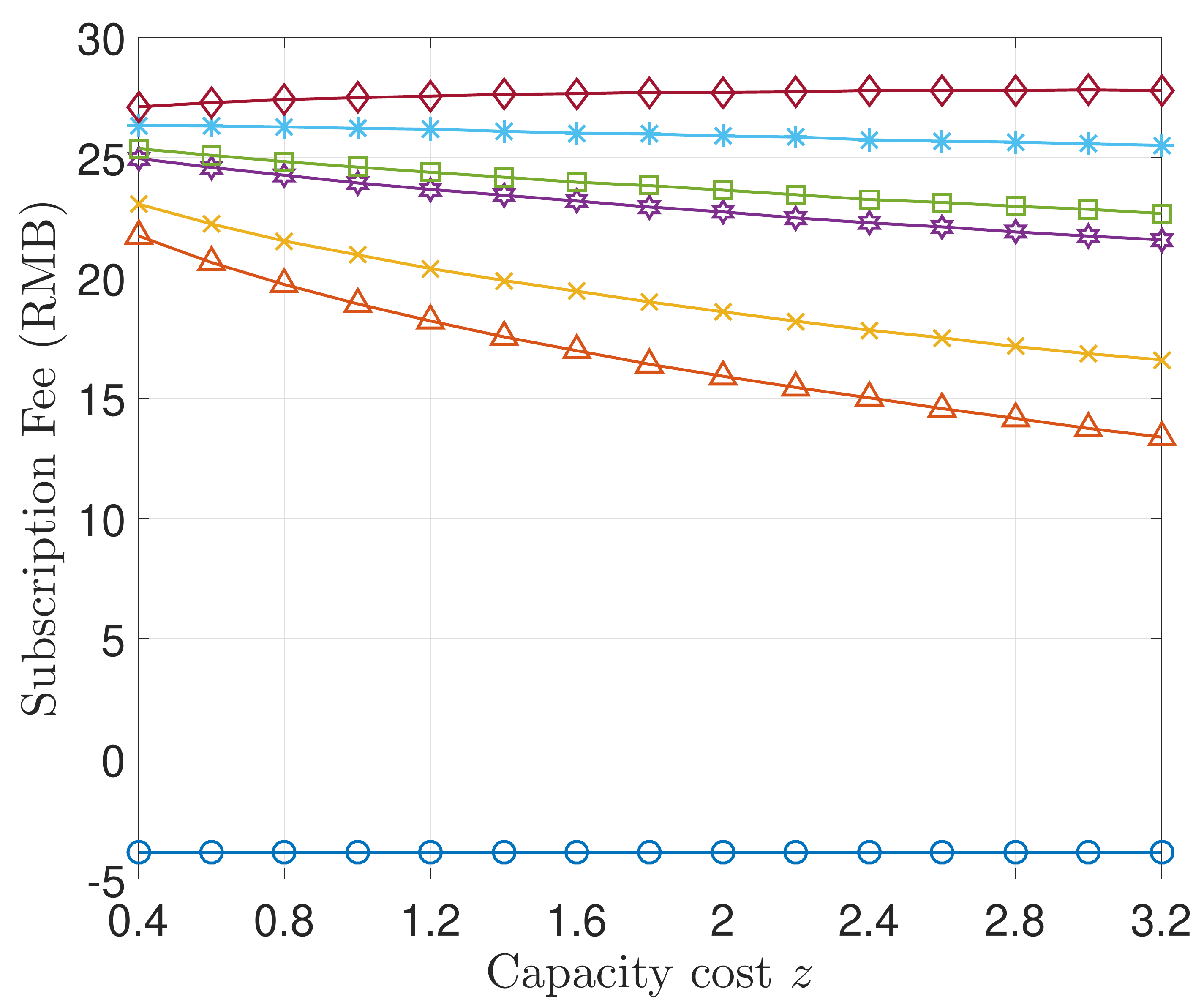}}}
	\caption{Impact of MNO's capacity cost $z$.}
	\label{fig: impact of z}
\end{figure}
\begin{figure}
	\centering
	\subfigure[MNO's profit.]{\label{fig: z_Profit}{\includegraphics[width=0.49\linewidth]{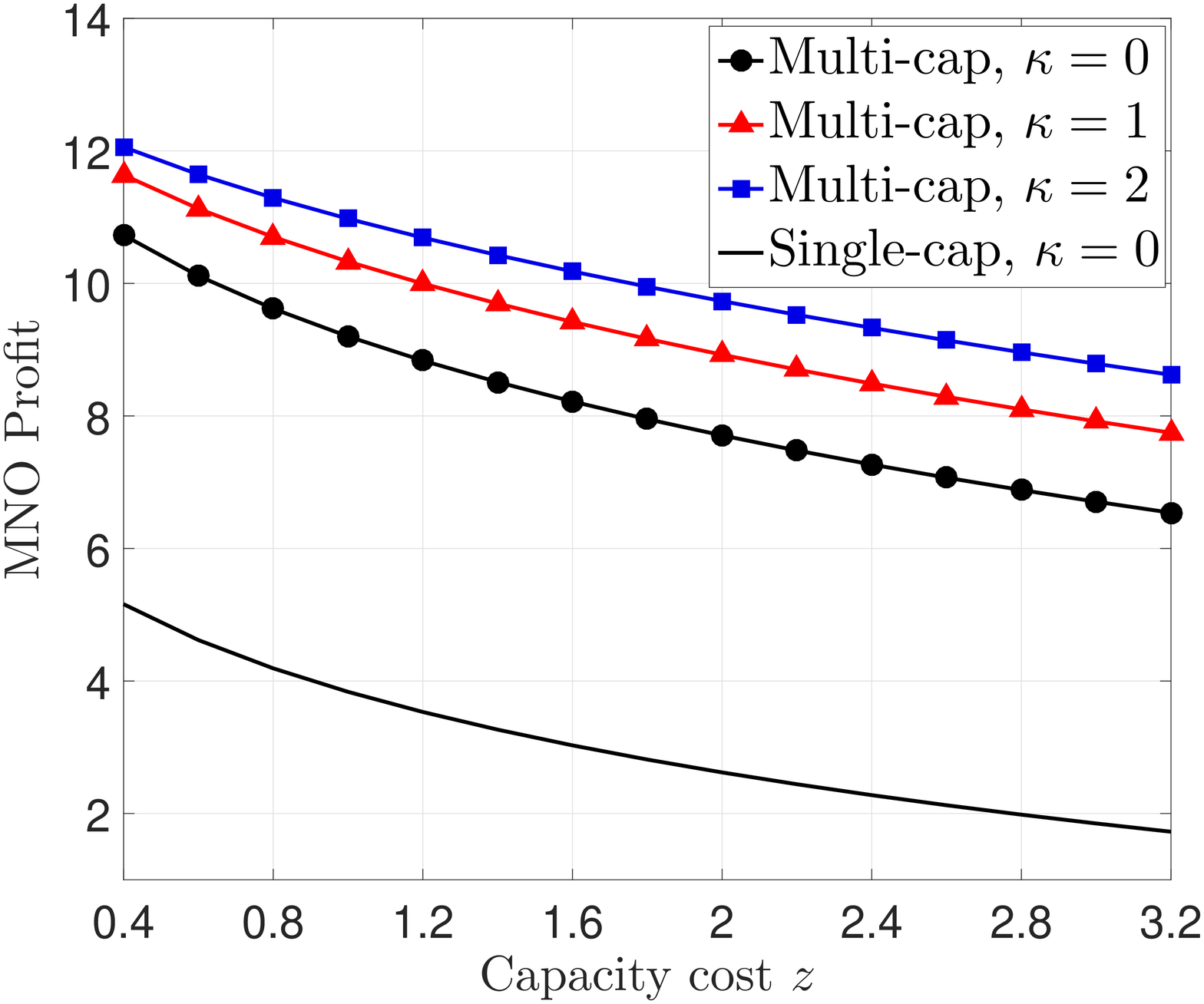}}}
	\subfigure[All users' payoff.]{\label{fig: z_Payoff}{\includegraphics[width=0.49\linewidth]{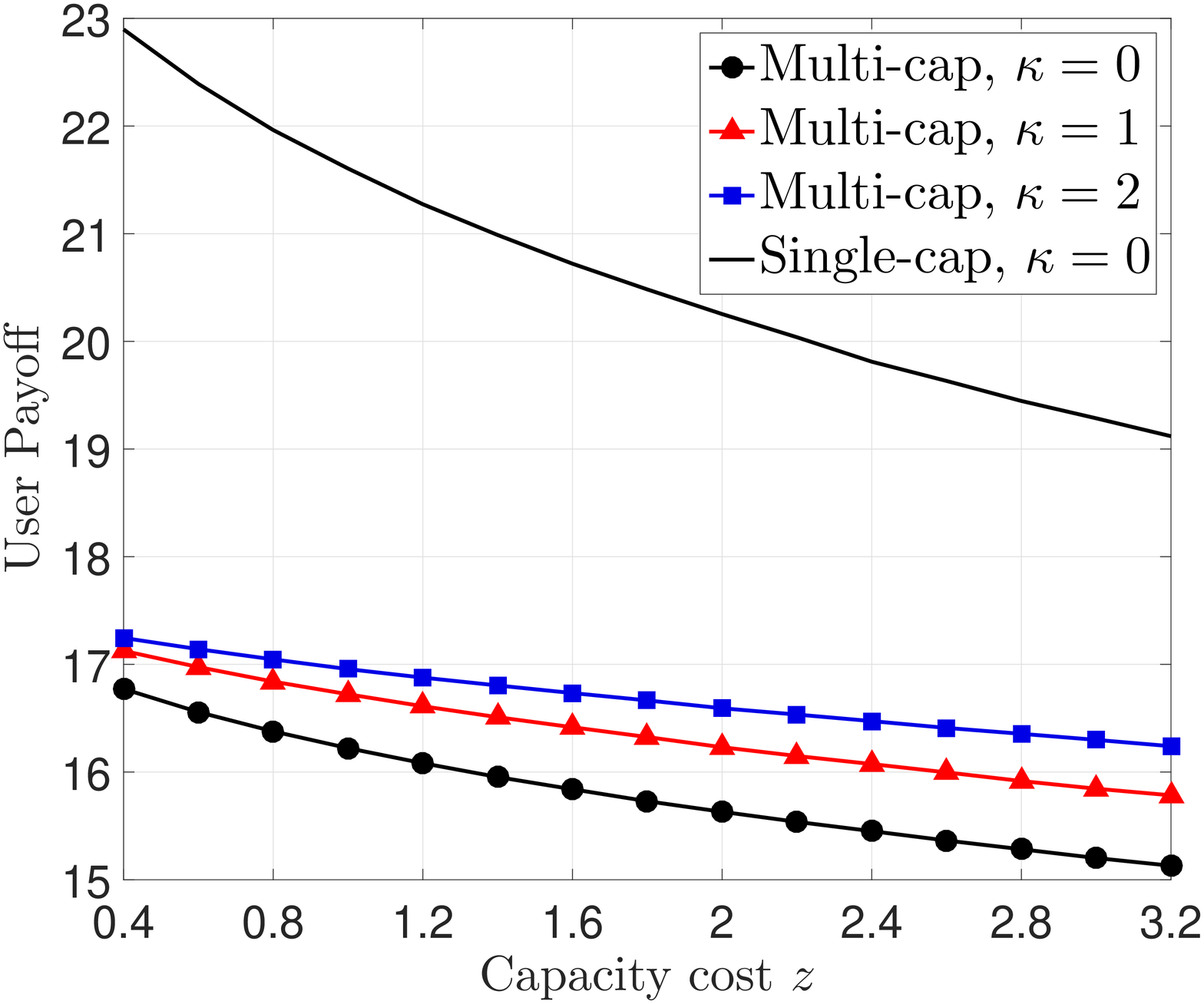}}}
	\caption{Impact of the MNO's capacity cost $z$.\vspace{-5pt}}
	\label{fig: z profit payoff}
\end{figure}

Next we evaluate the MNO's profit and user's payoffs in the four scenarios of Table \ref{table: four cases} under different capacity cost.
In Fig. \ref{fig: z profit payoff}, the horizontal axises in the two sub-figures represent the MNO's marginal capacity cost.
\begin{itemize}
	\item Fig. \ref{fig: z_Profit} plots MNO's profits in the four scenarios.
	Overall, the MNO's profits decrease in its capacity cost $z$.
	By comparing single-cap traditional pricing benchmark and the multi-cap traditional pricing scheme, we note that the price discrimination under our optimal contract can significantly increase the MNO's profit (176\% on average).
	By comparing the three multi-cap curves, we find that the MNO obtains a higher profit under a more time-flexible data mechanism.
	Specifically, compared with Scenario (ii) (i.e., the circle curve), MNO's profits increases by 12\% on average in Scenario (iii) (i.e., the triangle curve) and 23\% on average in Scenario (iv) (i.e., square curve).
	This implies that under the multi-cap scheme, offering a better time flexibility can further improve the MNO's profit.
	
	\item Fig. \ref{fig: z_Payoff} plots the users' total expected payoff in four scenarios.
	First, we observe that users' payoff decreases in the MNO's capacity cost.
	By comparing the single-cap traditional pricing benchmark and the multi-cap traditional pricing scheme, we notice that the price discrimination under our optimal contract reduces users' expected payoff (20\% on average), which means that the MNO captures more consumer surplus through the price discrimination.
	Comparing the three multi-cap schemes, we find that the time-flexible data mechanisms can improve the users' payoff. 
	Specifically, compared with Scenario (ii) (i.e., the circle curve), users' payoff increases by 3.1\% on average in Scenario (iii) (i.e., the triangle curve) and 5.2\% on average in Scenario (iv) (i.e., square curve).
\end{itemize}

\section{\label{Appendix: type}}
In Section \ref{Section: Numerical Results}, we cluster the empirical data valuation and network substitutability into four groups, respectively.
We then proceed the contract design based on the method in Section \ref{Section: Contract Optimality} for a total of sixteen user types.
Next we investigate the impact of the number of clustered user types on the performance of the optimal contract.

\begin{table}
	\setlength{\abovecaptionskip}{3pt}
	\setlength{\belowcaptionskip}{0pt}
	\renewcommand{\arraystretch}{1}
	\caption{Four cases of user type cluster. }
	\label{table: Four cases of user type cluster}
	\centering
	\begin{tabular}{p{0.55cm}p{3.4cm}p{3.3cm}}
		\toprule
		Case			& $\ $Data valuation					&$\ $ Network substitutability \\
		\midrule
		$2\times2$		&$\{24.6,63.4\}$						& $\{0.60, 0.87\}$ \\
		$3\times3$		&$\{18.3, 44.4, 80.7\}$					& $\{0.48, 0.72, 0.90\}$ \\
		$4\times4$		&$\{16.2, 36.1, 61.9, 96.3\}$			& $\{0.51, 0.71, 0.84, 0.95\}$ \\
		$5\times5$		&$\{14.4, 32.2, 48.9, 76.1, 103.9\}$	& $\{0.40,0.58,0.71,0.82,0.94\}$ \\
		\bottomrule
	\end{tabular}
\end{table}

\begin{figure}
	\centering
	\subfigure[MNO's profit.]{\label{fig: Type_Profit}{\includegraphics[width=0.49\linewidth]{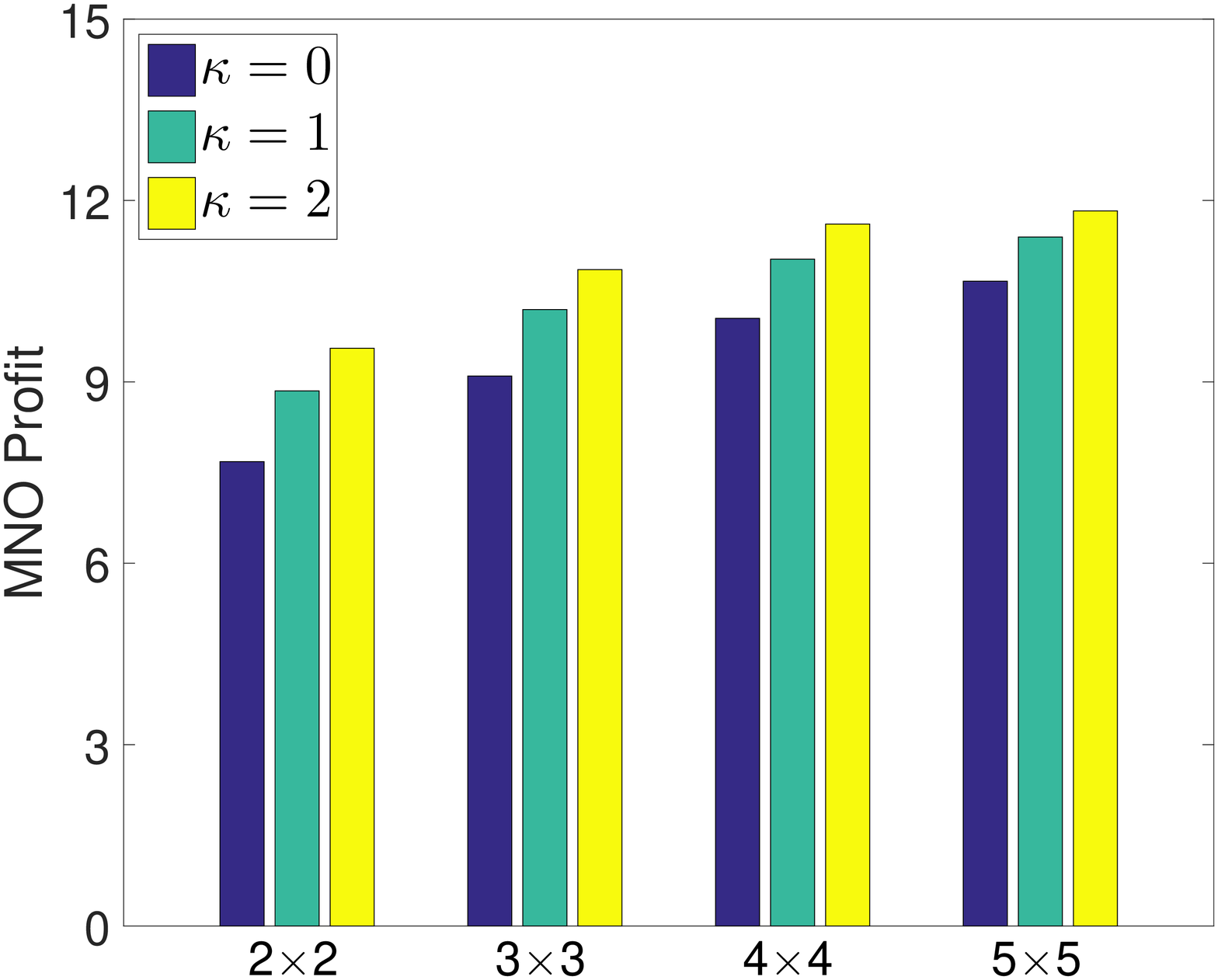}}}
	\subfigure[All users' expected payoff.]{\label{fig: Type_Payoff}{\includegraphics[width=0.49\linewidth]{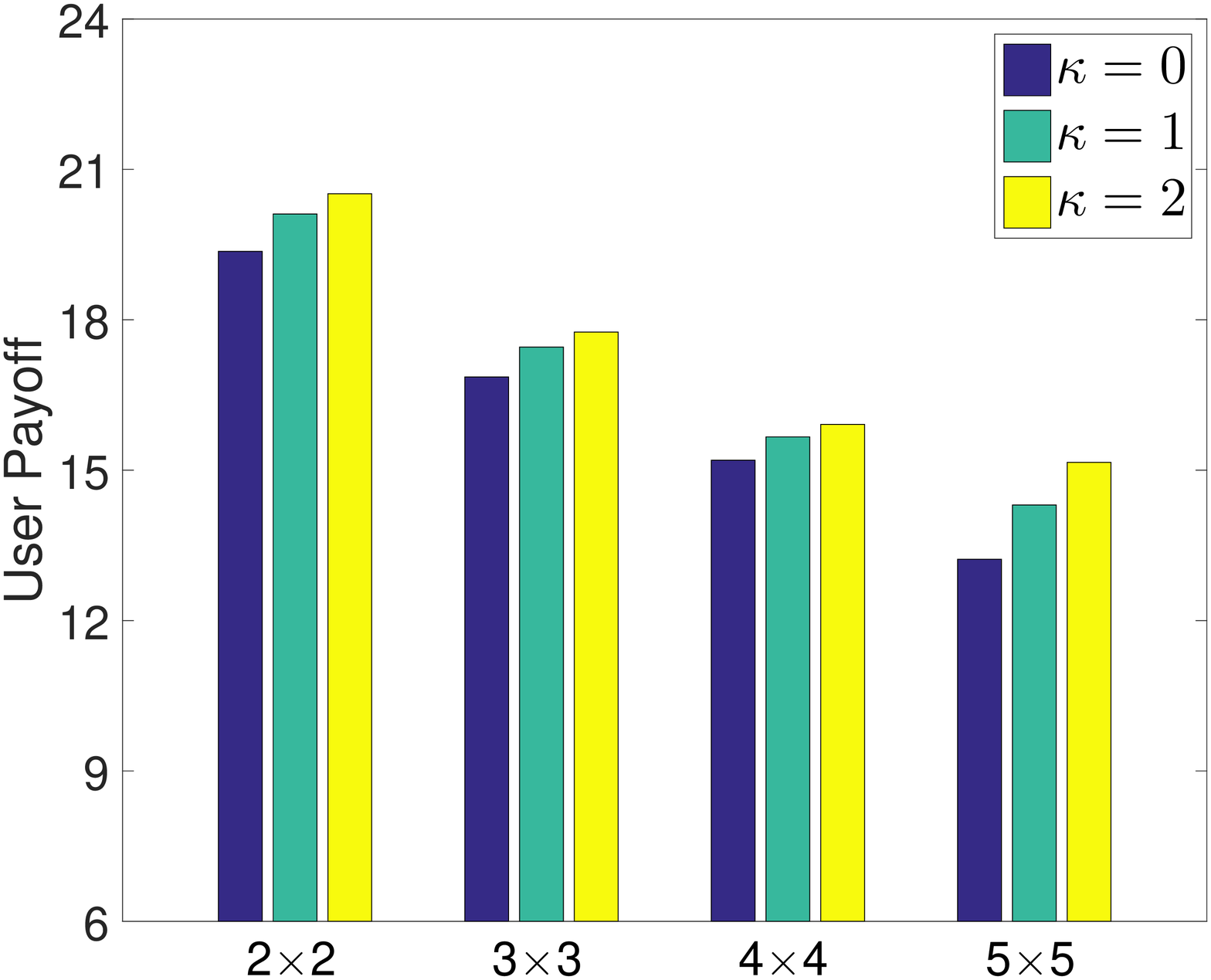}}}
	\caption{Impact of the number of clustered user types.}
	\label{fig: type}
\end{figure}

We will compare four scenarios, where the users' each characteristic is clustered into two groups, three groups, four groups, and five groups, respectively. 
Table \ref{table: Four cases of user type cluster} shows the corresponding mean values of the users types in the four cases.
Fig. \ref{fig: type} shows the performance of the optimal contract in the four cases under three data mechanisms.
Here we let the operational cost be $c=6$RMB/GB and the capacity cost be $z=1.3$RMB/GB.
\begin{itemize}
	\item Fig. \ref{fig: Type_Profit} plots the MNO's profits in the four cases under three different data mechanisms.
	Overall, the MNO's profit increases in the number of clustered user types.
	Moreover, the time-flexible data mechanism can further increase the MNO's profit given the number of user types.
	
	\item Fig. \ref{fig: Type_Payoff} plots all users' average payoff in the four cases under three data mechanisms.
	Overall, all users' average payoff decreases in the number of user types considered by the MNO.
	But a more time-flexible data mechanism increases the users' payoff given the number of user types.
\end{itemize}

Based on the above discussion, we conclude that when the MNO divides the users into more types, the MNO's profit increases but the users' average payoff decreases.
It means that a finer granularity  price discrimination reduces the consumer surplus.
In all cases,  the time-flexible data mechanisms can increase both the MNO's profit and users' payoff, leading to a win-win situation.

\end{document}